\titleformat*{\section}{\bfseries}
\titleformat*{\subsection}{\bfseries}
\titleformat*{\subsubsection}{\bfseries}
\titleformat*{\paragraph}{\bfseries}
\titleformat*{\subparagraph}{\bfseries}
\theoremstyle{plain}
\newtheorem{theorem}{Theorem}
\newtheorem{lemma}{Lemma}
\newtheorem{remark}{Remark}
\def\indicator{\mathbf{1}}
\newcommand*{\textoverline}[1]{$\overline{\hbox{#1}}\m@th$}
\newcommand\cites[1]{\citeauthor{#1}'s\ (\citeyear{#1})}
\DeclareMathOperator*{\esssup}{ess\,sup}
\DeclareMathOperator*{\argmin}{arg\,min}
\newcommand\change[1]{\textcolor{black}{#1}}
\date{}
\begin{document}
\title{Simulation-based Value-at-Risk for Nonlinear Portfolios
\author{Junyao Chen, Tony Sit and Hoi Ying Wong\\~\\
\small{\textit{Department of Statistics, The Chinese University of Hong Kong}} ~\\
\scriptsize{yukichen@link.cuhk.edu.hk\quad tonysit@sta.cuhk.edu.hk \quad hywong@sta.cuhk.edu.hk}}}

\maketitle
\begin{abstract}
Value-at-risk (VaR) has been playing the role of a standard risk measure since its introduction. In practice, the delta-normal approach is usually adopted to approximate the VaR of portfolios with option positions. Its effectiveness, however, substantially diminishes when the portfolios concerned involve a high dimension of derivative positions with nonlinear payoffs; lack of closed form pricing solution for these potentially highly correlated, American-style derivatives further complicates the problem. This paper proposes a generic simulation-based algorithm for VaR estimation that can be easily applied to any existing procedures. Our proposal leverages cross-sectional information and applies variable selection techniques to simplify the existing simulation framework. 
  Asymptotic properties of the new approach demonstrate faster convergence due to the additional model selection component introduced. 
  We have also performed sets of numerical results that verify the effectiveness of our approach in comparison with some existing strategies.~\\~\\\textit{Keywords: Value-at-Risk, least-squares Monte Carlo, American-type derivatives, high dimensional portfolios}
\end{abstract}

\section{Introduction}
One of the everyday challenges that financial institutions faces is re-evaluation of values and/or risk levels of their portfolios that mature some time in the future, which can generally be expressed in the form of
\begin{equation}
   \label{eq:genericpayoff}
      U(t, X) = \sup_{\tau \in \mathcal{T}}\mathbf{E}^{\mathbb{Q}}\left\{f(X_\tau)| \mathscr{F}_t \right\},
\end{equation}
where  \textcolor{black}{$t~(t>0)$ denotes the time, $f$ is a deterministic payoff function evaluated at the underlying asset value $X_t$, $\mathbb{Q}$ denotes a risk-neutral probability measure with respect to $\mathbb{P}$ and $\mathcal{T}$ is a family of stopping times. The filtration up to time $t$ is denoted as $\mathscr{F}_t$}. More importantly, based on these valuations, financial institutions need to calculate regulatory capitals in order to fulfill the requirements specified in Basel II for the banking industry \cite{bis2013a} or Solvency II for the insurance industry. 
Computation of regulatory capitals are closely related to Value-at-Risk (VaR), a fundamental quantity upon which some other coherent risk measures, including the expected shortfall \cite{Artzner_etal-1999-MF} are developed. Readers may refer to \cite{Kou_etal-2013-OR}, \cite{Kou_Peng-2016-OR} among others for further discussion. The main focus of this paper is to propose a more effective method for estimating VaRs.

While high-dimensional portfolios, or derivatives with large number of underlying assets, are 
common, a substantial portion of securities traded are derivatives with nonlinear payoffs; this renders the first-order, or even second-order, approximations insufficient for risk estimation. Evaluations of \eqref{eq:genericpayoff} and their corresponding risk measures hence become a non-trivial task. 
Given the fact that analytic solutions of \eqref{eq:genericpayoff} are hard to obtained in most cases, simulation is generally the only feasible resort; see \cite{Chan_Wong-2015, Glasserman-2003, Hong_etal-2014-ACM} amongst others. Despite their simplicity, simulation-based procedures may not be feasible because of its heavy computation burden. Although there have been new solutions on improving the computational efficiency (see, for instance, \citealp{Gramacy_Ludkovski-2015-SIAMJFM}), extensions to high-dimensional settings are not entirely straight-forward. To evaluate a $t_1$-day VaR with a particular statistical model chosen, one may carry out nested simulations.

An optimal allocation of computational effort for each layer (\citealp{Broadie_etal-2011-MS}) or simply reduce the number of simulated trails can also be applied for a more computationally economical alternative. However, curtailment of trials in either layer may lead to potentially substantial estimation bias and instability as pointed out in \cite{Bauer_etal-2012-ASTIN}.

In view of the aforementioned difficulties, current market practice is to calculate VaRs via Greek approximations such as the delta-normal and delta-gamma approximations; see \cite{Jorion-2006}. Performance of these approaches can sometimes be disappointing. In particular, for portfolios with highly nonlinear payoffs, the first-order approximation is far from sufficient in order to produce acceptably small errors. Besides, since all these Greeks are time-varying, delta-normal and delta-gamma approximations are reasonable only for portfolios with short investment horizons -- this can be rather restrictive for insurance companies as  the solvency capital ratios (SCR) required involve the one-year VaR valuation. Computation burden also poses a big concern as it increases exponentially with the number of stochastic variables included. Aggregation of huge biases from evaluating the Greeks numerically can also be potentially substantial.

To tackle the above challenges, \cite{Bauer_etal-2012-ASTIN} novelly proposed the use of the Least-squares Monte Carlo (LSM) approach to VaR computation based on  \cites{Longstaff_Schwartz-2001-RFS} seminal development for pricing American options. This approach, however, suffers from the curse of high-dimensionality when the number of underlying assets considered grows. The vast number of regressors generates highly volatile or even inconsistent coefficient estimates, which in turns leads to poor VaR estimates.

This paper incorporates the shrinkage idea in least-squares simulation for high-dimensional nonlinear portfolio VaRs. We shall demonstrate our proposal via  least absolute shrinkage and selection operator (LASSO; \citealp{Tibshirani-1996-JRSSB}), or equivalently the constrained $\ell_1$ minimization. Noteworthy, our proposal shares a similar view with \cite{Pun_Wong-2016-SIAMJFM}, \cite{Chiu_etal-2017-RA} and \cite{Pun_Wong-2019-EJOR} amongst others in the sense that the introduction of the LASSO penalty enables consistent estimation of the quantities of interest. For instance, \cite{Pun_Wong-2016-SIAMJFM} proved that the estimation errors of high-dimensional portfolio makes the optimal portfolio objective function diverge while our results demonstrate that, with appropriate shrinkage due to LASSO, the \cites{Longstaff_Schwartz-2001-RFS} approach can be properly implemented under high-dimensional cases.

\subsection{Summary of Contributions}
\textcolor{black}{In view of the popularity of the regression-based/Longstaff-Schwartz algorithm, our main goal is to study the corresponding convergence properties under the high dimensional setting. More specifically, this work contributes to the literature on  the following three aspects:}
\begin{enumerate}
   \item 
      \textcolor{black}{\textbf{Proper handling of issues due to high-dimensionality:} Amongst several works on analyzing the asymptotics of Longstaff-Schwartz algorithm, \cite{Clement_etal-2001-FS} provides theoretical justifications for regular cases with $p \ll N$, where $p$ and $N$ denote the dimension of the regressors and the sample size, respectively. One key assumption for the convergence results is that the model should include all the significant basis functions. Selection of basis functions is typically carried out rather subjectively and this assumption may not hold typically for assets with large numbers of underlying assets. To provide a more objective and systematic alternative, our approach leverages recent elegant results developed for variable selection so that we can consider a substantially larger number of covariates in the regression model without suffering issues due to high-dimensionality. Although various methods have been developed lately for high-dimensional linear regression such as the LASSO (see \citealp{Tibshirani-1996-JRSSB}), to the best of our knowledge, it is the first attempt to justify both theoretically and numerically how these variable selection tools can be incorporated in the Longstaff-Schwartz framework. The corresponding convergence results for various relevant estimates are also missing. To this end, we establish the relevant asymptotic results for both valuation and VaR estimation as the number of simulated paths $N$ goes to infinity together with the dimension in the regression model. Thus, for situations under which significant basis functions are not precisely known in advance, which are frequently encountered in various applications, the newly proposed shrinkage procedure, namely LASSO Least-squares Monte Carlo (LLSM), offers a higher chance of selecting influential basis functions in the regression than LSM. }
   \item 
      \textcolor{black}{\textbf{Theoretical construction: } We also enrich the proof by permitting estimation errors in the least-squares regression instead of assuming ideal estimates as required in \cite{Clement_etal-2001-FS}. This extension provides a more general discussion to the problem concerned. The framework developed lays down the foundation for other possible extensions, including the use of other variable selection methods besides LASSO as well as for other risk measures including expected shortfall (ES).} 
   \item
      \textcolor{black}{\textbf{Computational efficiency:} On the computation aspect, with the new variable selection element, the new proposal can handle an extensive number of basis functions based on asset prices and/or other risk factors and the LASSO component assists in selecting objectively and systematically the significant basis functions. LLSM significantly outperforms nested simulation and the Greek approximations in our numerical studies. The computational efficiency of LLSM is more prominent as the number of underlying stochastic variables increases. Numerical results show that it demands merely an additional $5\%$ (or 20\% including cross validation) of the total computation time to incorporate LASSO into the original LSM. The amount of additional computation time required declines as the dimension $p$ grows. The quality of resulting estimates is, however, dramatically improved; see Section \ref{st:numerical study}. }
\end{enumerate}

\subsection{Organization of the Paper}
The remainder of this paper is organized as follows. Section \ref{st:proof} elucidates the LLSM procedure, develops theoretical justifications for convergence results of LLSM and discusses further improvement of the new approach. Section \ref{st:numerical study} presents numerical studies on several derivatives with American features and nonlinear payoff functions. The performance of LLSM is demonstrated via a comprehensive comparison with existing methods and the oracle approach. Concluding remarks can be found in section \ref{st:conclusion}, followed by Appendix which presents the proofs for results discussed in Section \ref{st:proof}. Details of our numerical studies, including model specifications, are also included. 

\section{Methodology} \label{st:proof}
Our procedure \textcolor{black}{of LASSO Least-squares Monte Carlo (LLSM)} for a general portfolio with early exercise feature targets at $100(1-\alpha)\%$ $t_1$-day VaR over the investment horizon ranging from $T_0$ to $T$ during which stopping times denoted by $T_1,\ldots,T_L=T$ are covered. Noteworthy,  VaRs are not necessarily evaluated at stopping times, the procedure LLSM can handle a more generic $t_1$-day VaR with $t_1 \in (T_0, T_1)$. 

Similar to the celebrated \cite{Bauer_etal-2012-ASTIN} and \cite{Longstaff_Schwartz-2001-RFS} approaches, LLSM is formulated as a backward recursive procedure. In its first step, LLSM estimates the conditional expected option value via simulating paths. Based on these paths, regressions are carried out on the resulting option values. In contrast to the existing strategies, LLSM adds a variable selection step which allows an objective procedure for selecting the influential basis functions in the regression models considered. The corresponding regression result provides an approximation for the continuation value which can be compared to the early exercise value. Option values at different stopping times of all paths can then be evaluated, so can be the portfolio value as well as its VaR. Details of the algorithm for LLSM is summarized in Algorithm \ref{alg:general LLSM}.

\begin{algorithm}
  \caption{{LLSM for General Portfolios}
    \label{alg:general LLSM}}
  \hrule
  \begin{algorithmic}[1]
    \STATE {Identify the possible risk factors of the portfolio and denote them as a vector $X_t$, where the subscript $t$ denotes the time point at which the covariates recorded.}
    \STATE {Simulate $N$ sample paths of underlying stochastic variables $X_t$ for $t \in [T_0, t_1]$ under the physical measure, $\mathbb{P}$. For the remaining investment horizon $t_1$ to $T_L$, continue to simulate these paths from under the risk neutral measure(s) $\mathbb{Q}$. Realizations of $X_t$ at $t_1,T_1,\ldots,T_L$ are denoted as $X_{t_1},X_1,\ldots,X_L$ respectively.}
    \STATE {Initialize $\tau=L$ as the optimal stopping time indicator.}
    \FOR{$j \gets L-1 \textrm{ to } 1$}{
      \STATE Compute discounted continuation value at time $T_j$ by $C(T_j)=D(T_j,T_{\tau})A(T_{\tau})$ for each path, where $T_\tau$ is the optimal stopping time after $T_j$ that maximizes the portfolio value, $D(T_j,T_\tau)$ is the discount factor for the time period $(T_j,T_\tau)$, $A(T_\tau)$ is the immediate exercise value at $T_\tau$.
      \STATE Regress $C(T_j)$ on $L(X_j)$, where $L(X_j)$ is a vector of basis functions on $X_j$ with LASSO. Approximate $C(T_j)$ by the fitted value of the regression, $\hat{C}(T_j)$.
      \IF{$A(T_j)\geq \hat{C} (T_j)$}
        \STATE update $\tau=j$ for the corresponding path.
      \ENDIF
    }\ENDFOR
    \STATE  {Compute the portfolio value at $T_1$, denoted by $U_1$, by $U_1=D(T_1,T_\tau)A(T_\tau)$. }
    \STATE  {Regress $D(t_1,T_1)U_1$ on $L(X_{t_1})$ with LASSO. Approximate the portfolio value at $t_1$, denoted by $U_{t_1}$, by the fitted value of the regression $\hat{U}_{t_1}$.}
    \STATE  {Compute the loss $\ell=U_0-\hat{U}_{t_1}$. Rank $N$ realized losses and define the $\lceil\alpha N\rceil$th largest value as the estimate of $100(1-\alpha)\%$ $t_1$-day VaR.}
  \end{algorithmic}
  \hrule
\end{algorithm}

For the remainder of this section, we first introduce all notation needed for our subsequent discussion. As our VaR estimation procedure is developed upon prices evaluated from simulation, we first present the results of valuation in Section \ref{sst:valuation proof}, upon which VaR convergence can then be established; see Section \ref{sst:VaR proof}.

\subsection{Preliminaries and Notation}\label{sst:preliminaries}
Since the evaluation of $t_1$-day VaR depends on the estimate of portfolio value at $t_1$, which is derived from the portfolio values at stopping times $T_j$ for $j=1,\ldots,L$. To guarantee the convergence of VaR at $t_1$, we first develop the convergence results for product prices at stopping times $T_j$'s.

Assume an underlying complete probability space ($\Omega$,$\mathscr{F}$,$\mathbb{P}$) and finite time horizon ($0$,$T$), where $\Omega$ denotes the set of all possible realizations of the stochastic economy from time $0$ to $T$, $\mathscr{F}\triangleq\sigma(\Omega)=\mathscr{F}_T$ is the total information filtration accumulated up to $T=T_L$ with $T$ as the maximum maturity of all financial products in the portfolio. We discretize the time horizon into intervals $(T_{j-1},T_j)$ for $j=1,\ldots,L$ with equal length $\Delta t=T_j-T_{j-1}$ small enough so that potential exercise dates in the portfolio can be represented by some discrete time points $T_j$. Without loss of generality, we assume $T_j$ for $j=1,\ldots,L$ are the associated stopping times. Accordingly, we let $\mathscr{F}_j$ denote the information filtration up to time $T_j$. Denote $Z_j$ as the adapted payoff process of the portfolio and assume that $Z_j$ are square-integrable random variables for all $j$. At $T_j$, we let  $\{X_j \in  \mathbb{R}^{p_j}\mid X_j=\big (X_{j1},\ldots,X_{jp_j} \big )^\top \}$ be the $p_j$ underlying stochastic variables in the portfolio. As implied by our notation, the number of underlying stochastic variables at different $T_j$ is not necessarily fixed. One example is a portfolio which consists of interest rate products whose payoffs are functions of forward rates. For simplicity, we assume that $p_j \equiv p$ for $j = 1, \ldots, L$, and given $X_j$, there exists a deterministic payoff function $f$ such that $Z_j=f(T_j,X_j)$. The function $f$ can be nonlinear and/or discontinuous. Finally, we let $\mathcal {T}_{j,k}$ be the set of all possible stopping times $\{T_j,..,T_k\}$. Defined as the portfolio value at $T_j$, $U_j$ can be expressed in a form of conditional expectation as:
\begin{align} \label{eq:payoff conditional expectation}
U_j:=\sup_{\tau \in \mathcal{T}_{j,T}}\mathbf{E}^{\mathbb{Q}}\left\{f(T_\tau,X_\tau)\mid \mathscr{F}_j\right\},
\end{align}
where $\mathbb{Q}$ is a risk-neutral measure. In the sequel, the notation $\mathbb{Q}$ will be suppressed for the sake of simplicity. To illustrate the idea more effectively, we assume that there is only one optimal stopping time to be identified. If there is more than one derivative in the portfolio with different optimal stopping times, we may perform similar analysis by separating the portfolio into a linear combination of several elements, each of which has only one optimal stopping time that needs to be studied.

The formulation of the portfolio value $U_j$ defined in \eqref{eq:payoff conditional expectation} considers a fairly general setup and covers a wide range of assets in the market. 
 Our goal is to obtain an accurate estimate of $100(1-\alpha)\%$ $t_1$-day VaR, where $\alpha \in (0,1)$ is typically set to be $0.01$ or $0.05$. Assume, without loss of generality, that $t_1 \in (T_0,T_1)$ and that $T_0$ is the current time point at which $U_0$ is observed constant. If $t_1=T_1$, then we refer the VaR as VaR at a possible stopping time or else we refer it as VaR at a non-stopping time in general. In practice, most of the VaR's considered belong to the latter type. 

The $100(1-\alpha)\%$ $t_1$-day VaR is based on the estimation of portfolio value at future time point $t_1$. If $t_1=T_1$, $U_{t_1}$ can be computed through (\ref{eq:payoff conditional expectation}); if $t_1 \in (T_0,T_1)$, $U_{t_1}$ is defined as
\begin{align} \label{eq:payoff conditional expectation at t1}
U_{t_1}:=\mathbf{E}(U_1 \mid \mathscr{F}_{t_1})=\mathbf{E}\left[\sup_{\tau \in \mathcal{T}_{1,T}}\mathbf{E}\left\{f(T_\tau,X_\tau) \mid \mathscr{F}_1\right\}\bigg| \mathscr{F}_{t_1} \right].
\end{align}

Following classical optimal stopping theory \cite{Neveu-1975}, we introduce the Snell envelope and rewrite (\ref{eq:payoff conditional expectation}) as
\textcolor{black}{$$U_j:=\esssup_{\tau \in \mathcal{T}_{j,L}}\mathbf{E}(Z_\tau\mid \mathscr{F}_j)\quad j=0,1,\ldots,L, $$}
or equivalently as
\begin{displaymath}
U_j:= \left\{
\begin{array}{ll}
  Z_T, & j=L \\
  \max \{Z_j,\mathbf{E}(U_{j+1}\mid \mathscr{F}_j)\}, &0\leq j\leq L-1.
\end{array}
\right.
\end{displaymath}
If we define $\tau_j$ is the optimal stopping time after $T_j$, then $\tau_j:=\min\{k\geq j \mid U_k=Z_k\}$ in which case we can rewrite $U_j=\mathbf{E}(Z_{\tau_j}\mid\mathscr{F}_j),~j=0,1,\ldots,L$.

A backward approach is adopted to determine the optimal stopping time for each path. The rule can be stated by defining the dynamics of $\tau_j$ as,
\begin{displaymath}
\left\{
\begin{array}{ll}
\tau_T=T  \\
\tau_j=j \indicator_{\{Z_j\geq \mathbf{E}(Z_{\tau_{j+1}}\mid\mathscr{F}_j)\}}+\tau_{j+1} \indicator_{\{Z_j< \mathbf{E}(Z_{\tau_{j+1}}\mid \mathscr{F}_j) \}}, ~~0\leq j\leq L-1,
\end{array}
\right.
\end{displaymath}
where $\indicator_{\{\cdot\}}$ denotes the indicator function. Assume there is an $\mathscr{F}_j$-Markov chain $\{X_j\}$, $j=1,\ldots,L$, such that $Z_j=f(j,X_j)$ for some Borel functions $f(j,\cdot)$; then we have $U_j=g(T_j,X_j)$ for some function $g(j,\cdot)$ and $\mathbf{E}(Z_{\tau_{j+1}}\mid\mathscr{F}_j)=\mathbf{E}(Z_{\tau_{j+1}}\mid X_j)$ for $j=0,1,\ldots,L$. Note that in practice, $X_0$ and $U_0$ are both deterministic.

Denote $\{L_m(X_j)\}_{m\geq 1}$ as a sequence of measurable real-valued functions that serves the basis functions in the regression models. To numerically evaluate $\{\mathbf{E}(Z_{\tau_j})\}$, $j=1, 2, \ldots, L$ through a Monte Carlo procedure, we can simulate $N$ independent paths of the underlying risk factors of the Markov chain $\{X_j\}$. We define $X_j^{[i]}=(X_{j1}^{[i]},\ldots,X_{jp}^{[i]})^\top$ as the independent realizations of underlying stochastic variables at time $j$ for the $i$-th simulated path and $Z_j^{[i]}$ as the associated payoff for $j=1,2,\ldots,L$; $i=1,2,\ldots,N$ with $Z_j^{[i]}=f(T_j,X_j^{[i]})$.

In an attempt to approximate the conditional expectation $\mathbf{E}(Z_{\tau_{j+1}}\mid X_j)$ via a finite number of basis functions of $X_j$, we impose the following two conditions that appear in \cite{Clement_etal-2001-FS}:

\begin{enumerate}
   \item [(A1)] For $j=t_1,1,\ldots,L-1$, the sequence $\{L_m(X_j)\}_{m\geq 1}$ is total in $\mathcal{L}^2\{\sigma(X_j)\}$, where $\mathcal{L}^2\{\sigma(X_j)\}$ denotes the $\mathcal{L}_2$-space spanned by $\sigma(X_j)$.
   \item [(A2)]  For $j=t_1,1,\ldots,L-1$, if $\sum_{m=1}^M a_mL_m(X_j)=0$ a.s., then $a_m=0$ for $m=1,\ldots,M$, where $M$ denotes the number of basis functions included in the model.
\end{enumerate}

Under these two conditions, we can obtain coefficients vector $a_j^{[M]}$ such that $$\mathbf{E}(Z_{\tau_{j+1}}\mid \mathscr{F}_j)=\mathbf{E}(Z_{\tau_{j+1}}\mid X_j)=\lim_{M\to \infty} a_j^{[M]}\cdot L^{[M]}(X_j),$$ where $L^{[M]}(X_j)=(L_1(X_j),\ldots.,L_M(X_j))^\top$. To estimate the coefficients $a_j^{[M]}$, we assume
\begin{align} \label{eq:regression}
Z_{\tau_{j+1}}=a_j^{[M]}\cdot L^{[M]}(X_j)+\epsilon_j, ~~ j=1,\ldots,L-1,
\end{align}
where $\varepsilon_j$ is the error term. $a_j^{[M]}$ is known as the true coefficients in the regression. In line with the classical regression analysis, the gram matrix is defined as
\begin{align}\label{eq:gram matrix}
A_j^{[M,N]}=N^{-1}\sum_{i=1}^N\{L^{[M]}(X_j^{[i]})\} \{L^{[M]}(X_j^{[i]}) \}^\top.
\end{align}
We also define stopping times $\tau_j^{[M]}$ estimated by $M$ basis functions as
\begin{displaymath}
\left\{
\begin{array}{ll}
\tau_T^{[M]}=T  \\
\tau_j^{[M]}=j \indicator_{\{Z_j\geq a_j^{[M]}\cdot L^{[M]}(X_j)\}}+\tau_{j+1}^{[M]} \indicator_{\{Z_j< a_j^{[M]}\cdot L^{[M]}(X_j) \}},~~  0\leq j\leq L-1.
\end{array}
\right.
\end{displaymath}
Likewise, $\tau_j^{[i,M]} (j=1,\ldots,L)$ is used to denote the estimated stopping time with true coefficients in the regression for the $i$-th path. The estimated stopping time with LASSO estimated coefficients $a_j^{[M,N]}$ for the $i$-th path is denoted by $\tau_j^{[i,M,N]}$, where $a_j^{[M,N]}$ is defined as
$$a_j^{[M,N]}:=\argmin_{\alpha \in \rm I\!R^M}\left\{\| Z_{\tau_{j+1}^{[M,N]}}-\alpha\cdot L^{[M]}(X_j) \|_2^2+\lambda\|\alpha\|_{1}\right\}, ~~ j=1,2,\ldots,L-1,$$
with the penalty $\lambda$ depends on $M$ and $N$. In the sequel, we suppress the notation $\lambda^{[M,N]}$ for clearer presentation. Determining the optimal value for the regularization parameter is vital in terms of ensuring that the model performs well; typically, it is chosen by cross-validation. Our numerical procedure also adopts this approach for selecting a reasonable penalty.

To distinguish LASSO estimators from ordinary least-squares (OLS) estimators, we asterisk the associated symbols for all the parameters related to LSM. Accordingly, we have
$$a_j^{*[M,N]}:=\argmin_{\alpha \in \rm I\!R^M}\left\{\| Z_{\tau_{j+1}^{[M,N]}}-\alpha\cdot L^{[M]}(X_j) \|_2^2\right\}, ~~ j=1,2,\ldots,L-1$$
for the LSM approach. Based on the definition of estimated stopping times, we can define the portfolio value in (\ref{eq:payoff conditional expectation}) explained by $M$ basis functions with true coefficients as
\begin{equation*}
U_j^{[M]}:=\begin{cases} Z_T, & j = L,\\ Z_j \indicator_{\{Z_j\geq a_j^{[M]}\cdot L^{[M]}(X_j)\}}+U_{j+1}^{[M]} \indicator_{\{Z_j< a_j^{[M]}\cdot L^{[M]}(X_j) \}},& j=1,2,\ldots,L-1.\end{cases}
\end{equation*}

If we substitute $a_j^{[M,N]}$ into $a_j^{[M]}$ in the definition of $U_j^{[M]}$, we can obtain $U_j^{[M,N]}$, which is the portfolio value estimated by LLSM with $M$ basis functions and $N$ sample paths.

The following two subsections present the main contribution of this paper. Our first step is to establish the convergence result for valuation in Section \ref{sst:valuation proof}. Upon these consistent estimates of the derivative prices, the corresponding rates of convergence of VaR estimates are discussed in Section \ref{sst:VaR proof}. Despite the fact that techniques of handling high-dimensional data have been actively studied for the past two decades, to the best of our knowledge, there has not yet been any similar development in pricing/risk measure literature. All the new theorems presented subsequently compare the convergence rates for the traditional LSM and our proposal LLSM. The benefits of incorporating LASSO in the framework lies on the size of $M$, the number of basis functions, that can be handled by the model. Traditional methods like LSM performance can be significantly hindered when the dimension of the covariates grows, which in turns leads to non-invertibility of the associated gram matrix. Selection of basis functions are also conducted in a rather subjective manner. Our main result, Theorem \ref{thm:VaR convergency rate}, points out that when the number of sample paths is not significantly larger than the number of basis functions considered, the LSM approach can be outperformed by the new proposal.

\subsection{Convergence Results for Valuation} \label{sst:valuation proof}
To prove the convergence of a VaR estimate, we first establish the convergence result for valuation. The ultimate goal of valuation convergence is to prove
\begin{align} \label{eq:ultimate valuation convergence}
\mathbf{E}(Z_{\tau_j^{[M,N]}}|\mathscr{F}_j) \to \mathbf{E}(Z_{\tau_j}|\mathscr{F}_j) ~~ \text{ as } M,N \to \infty.
\end{align}

Similar to the treatment adopted in \cite{Clement_etal-2001-FS}, the convergence \eqref{eq:ultimate valuation convergence} can be established based on the two results of $\lim_{M \to \infty}U_j^{[M]}=U_j$ and $\lim_{N \to \infty}U_j^{[M,N]}=U_j^{[M]}$ for any fixed $M$. In particular, assume Condition (A1) is satisfied, for  $j = 1, 2, \ldots, L$, \cite{Clement_etal-2001-FS} shows that
\begin{equation}
\lim_{M \to \infty}\mathbf{E}(Z_{\tau_j^{[M]}}\mid \mathscr{F}_j)=\mathbf{E}(Z_{\tau_j}\mid \mathscr{F}_j).\label{thm:valuation convergence layer1}
\end{equation}

This result ensures the payoff $U_j^{[M]}$ estimated by regression on $M$ basis functions will converge to the true payoff $U_j$ as the number of basis functions $M$ tends to infinity. It is a consequence due to the total property of $L^2\{\sigma(X_j)\}$.

The next theorem stipulates that, under the same conditions that ensure valuation convergence of LSM, LLSM can achieve same rate of convergence for valuation at $T_j$ for $j=1,\ldots,L-1$. In other words, if the singularity problem can be solved through increasing $N$, the introduction of LASSO will not slow down the rate of convergence. Meanwhile, it suggests that under a weaker constraint on the singularity of the gram matrix, the almost sure convergence still holds for $U_j^{[M,N]}$. To examine the convergence of $U_j^{[M,N]}$ to $U_j^{[M]}$, three additional conditions are required:

\begin{enumerate}
   \item
      [(A3)] For $j=1,2,\ldots,L-1$, $i=1,2,\ldots,N$, realizations of $\epsilon_j$ in (\ref{eq:regression}) are i.i.d. with zero mean and finite variance.
   \item
      [(A4)] For $j=1,2,\ldots,L-1$, there exists a non-singular $M\times M$ matrix $C_j$ such that the gram matrix $A_j^{[M,N]}$ defined in (\ref{eq:gram matrix}) converges to $C_j$ as $N \to \infty$.
   \item
     [(A5)] (Compatibility Condition) Define the active set $S_0=\{m;a_{jm}^{[M]}\neq 0, m=1,2,\ldots,M\}$. The compatibility condition is met for the set $S_0$, if for some $\phi_0>0$ and for all $a^{[M]}$ satisfying $\|a^{[M]}_{S_0^c}\|_1\leq 3\|a^{[M]}_{S_0}\|_1$, it holds that
$\|a^{[M]}_{S_0}\|_1^2\leq \{a^{[M]}\}^\top A_j^{[M,N]} \{a^{[M]}\}\frac{s_0}{\phi_0^2}$, where $s_0=$card$(S_0)=|S_0|$.~\\
\end{enumerate}

\begin{theorem} \label{thm: valuation convergency}
Assume for $j=1,2,\ldots,L-1$, $\Pr \{a_j\cdot L^{[M]}(X_j)=Z_j\}=0$ and that Conditions (A1), (A2) and (A3) are satisfied. The LASSO estimators $a_j^{[M,N]}$ are obtained under the penalty with $\lambda = \mathcal{O}(\log M/N)$ and $\lambda/N=o(1)$.

\begin{enumerate}[(i)]
	\item
	   If Condition (A4) holds, then $U_j^{[M,N]}$ converges to $U_j^{[M]}$ almost surely.
    \item
       If Condition (A5) holds for the active set, then $U_j^{[M,N]}$ converges to $U_j^{[M]}$ almost surely also.
\end{enumerate}	
\end{theorem}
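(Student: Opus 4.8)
The plan is to argue by backward induction on $j$, from $j=L$ down to $j=1$, proving at each step both that the estimated coefficient vector is consistent, $a_j^{[M,N]} \to a_j^{[M]}$ almost surely, and that $U_j^{[M,N]} \to U_j^{[M]}$ almost surely, with $M$ held fixed and $N \to \infty$. The base case $j=L$ is trivial because $U_L^{[M,N]}=Z_T=U_L^{[M]}$. For the inductive step I would assume the conclusion at stage $j+1$, so that in particular the regressand $Z_{\tau_{j+1}^{[M,N]}}$ employed in the $j$-th regression converges almost surely to $Z_{\tau_{j+1}^{[M]}}$. Two ingredients then drive the step: consistency of $a_j^{[M,N]}$, and control of the mismatch between the estimated and the true early-exercise decisions.

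Granting coefficient consistency, I would bound the valuation error through the identity
$$U_j^{[M,N]} - U_j^{[M]} = (Z_j - U_{j+1}^{[M]})(I_N - I) + (U_{j+1}^{[M,N]} - U_{j+1}^{[M]})(1 - I_N),$$
where $I = \indicator_{\{Z_j \geq a_j^{[M]}\cdot L^{[M]}(X_j)\}}$ and $I_N = \indicator_{\{Z_j \geq a_j^{[M,N]}\cdot L^{[M]}(X_j)\}}$. The second summand vanishes almost surely by the induction hypothesis since $|1 - I_N| \leq 1$. For the first summand, observe that $I_N \neq I$ forces $Z_j$ to lie between $a_j^{[M]}\cdot L^{[M]}(X_j)$ and $a_j^{[M,N]}\cdot L^{[M]}(X_j)$, whence $|a_j^{[M]}\cdot L^{[M]}(X_j) - Z_j| \leq |(a_j^{[M,N]} - a_j^{[M]})\cdot L^{[M]}(X_j)|$. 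The right-hand side tends to $0$ almost surely by consistency, while the non-degeneracy hypothesis $\Pr\{a_j\cdot L^{[M]}(X_j) = Z_j\}=0$ ensures the left-hand side is strictly positive almost surely; hence $I_N - I$ is eventually $0$ almost surely and the first summand also vanishes.

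It remains to establish coefficient consistency, which is where Conditions (A4) and (A5) enter and where the two parts of the theorem diverge. Writing the Karush--Kuhn--Tucker characterization of the LASSO minimizer, the estimator is a perturbation of the least-squares solution whose size is governed by $\lambda$; since the squared-error term scales like $N$ while $\lambda/N = o(1)$, the shrinkage contribution is asymptotically negligible. For part (i), Condition (A4) supplies a non-singular limit $C_j$ of the gram matrix $A_j^{[M,N]}$, and the strong law of large numbers yields convergence of the sample cross-moment $N^{-1}\sum_{i=1}^N L^{[M]}(X_j^{[i]})\,Z_{\tau_{j+1}^{[M,N]}}^{[i]}$ to $\mathbf{E}\{L^{[M]}(X_j)\,Z_{\tau_{j+1}^{[M]}}\}$; inverting $A_j^{[M,N]}$ then gives $a_j^{[M,N]} \to a_j^{[M]}$ almost surely. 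For part (ii), where $A_j^{[M,N]}$ may be singular, I would invoke the LASSO oracle/compatibility machinery: Condition (A5) provides a constant $\phi_0$ that converts the prediction error into an $\ell_1$-estimation bound, delivering $\|a_j^{[M,N]} - a_j^{[M]}\|_1 \to 0$ even without invertibility of the gram matrix.

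The main obstacle I anticipate is the coupling between the estimated response and the coefficient estimation: the regressand $Z_{\tau_{j+1}^{[M,N]}}$ is a function of the entire simulated sample through all later-stage rules, so the $j$-th regression is not a clean i.i.d. design with a fixed target. Showing that the sample cross-moment converges to the correct population limit therefore requires combining the strong law with the stage-$(j+1)$ almost-sure convergence of the response and controlling both sources of randomness at once. For part (ii) there is the additional subtlety that the compatibility condition and the attendant oracle inequality are typically formulated for a fixed design and a fixed regression target, so I would need to verify that the compatibility constant persists in the limit and that the oracle bound can be upgraded from an in-probability statement to an almost-sure one along the sequence of estimated responses.
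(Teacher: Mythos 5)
Your proposal is correct and takes essentially the same route as the paper: its proof simply factors your joint backward induction into separate lemmas --- a LASSO consistency lemma with perturbed response under the non-singular gram limit (part (i)), its analogue under the compatibility condition (part (ii)), and an indicator-mismatch argument exploiting $\Pr\{a_j\cdot L^{[M]}(X_j)=Z_j\}=0$ exactly as in your decomposition of $U_j^{[M,N]}-U_j^{[M]}$. The response/estimation coupling you flag as the main obstacle is handled in the paper precisely as you sketch, by proving the consistency lemmas for a sequence of responses $y_i^{m}\to y_i$ almost surely and feeding the stage-$(j+1)$ convergence into them through the induction.
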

\begin{proof}
\textcolor{black}{Details of the proof can be found in \hyperref[sect:A1]{Appendix A.1}.}
\end{proof}

\begin{remark}
The assumption $\Pr \{a_j\cdot L^{[M]}_j(X_j)=Z_j\}=0$ is also required in \cite{Clement_etal-2001-FS}. To see the difference between LSM and LLSM, we observe that Theorem \ref{thm: valuation convergency} (\text{i}) also holds for $U_j^{*[M,N]}$ in LSM, but Theorem \ref{thm: valuation convergency} (\text{ii}) does not because without proper regularization, the associated gram matrix of the regression model in LSM will become singular.
\end{remark}
\begin{remark}
A similar version of Condition (A3) is also imposed in 
\cite{Clement_etal-2001-FS}. The definition of $a_j^{*[M,N]}$ in (2.11) of \cite{Clement_etal-2001-FS} assumes the gram matrix is invertible by default. If we adopt a more general definition of $a_j^{*[M,N]}$ that allows estimation error and takes the singularity problem into account, Condition (A4) is necessary for LSM. This condition is, however, rather restrictive since it requires the invertibility the gram matrix. The almost sure convergence property can still be maintained for the LLSM estimates even if we replace Condition (A4) with a less stringent constraint on the eigenvalues of the gram matrix. The Compatibility Condition (A5) (see also (6.4) of \cite{Buhlmann_vandeGeer-2011}) is similar to a constraint on the smallest eigenvalue of the gram matrix. This standard LASSO condition is a weaker condition which can be implied by Condition (A4). 
More discussion of the Compatibility Condition can also be found in \cite{Bickel_etal-2009-AoS, Koltchinskii-2009-AIHP} and \cite{Koltchinskii-2009-Bernoulli} amongst others.
\end{remark}

In Theorem \ref{thm: valuation convergency}, the additional LASSO component allows a substantially larger number of basis functions to be included in the model without corrupting the convergence of the estimated coefficient in the active set; see \cite{Buhlmann_vandeGeer-2011, Zhao_Yu-2006-JMLR}. We shall also see in Theorem \ref{thm:VaR convergency rate} the magnitude of $M$ that ensures convergence under this LASSO framework. Furthermore, the variable selection step in our model reduces the coefficient instability due to multicollinearity.

By \eqref{thm:valuation convergence layer1} and Theorem \ref{thm: valuation convergency}, we can see that the ultimate valuation convergence goal (\ref{eq:ultimate valuation convergence}) can be achieved almost surely in the following sense:
\begin{equation*}
\lim_{\substack{M \to \infty \\ N \to \infty}}\mathbf{E}(Z_{\tau_j^{[M,N]}}\mid \mathscr{F}_j) = \lim_{M \to \infty} \lim_{N \to \infty} \mathbf{E}(Z_{\tau_j^{[M,N]}}\mid \mathscr{F}_j) 
   =\lim_{M \to \infty} \mathbf{E}(Z_{\tau_j^{[M]}}\mid \mathscr{F}_j)
 = \mathbf{E}(Z_{\tau_j}\mid\mathscr{F}_j).
\end{equation*}

One may notice that the above induction may not be as straightforward as it appears because the value of $M$ is restricted by the choice of $N$. In fact, (\ref{eq:ultimate valuation convergence}) remains valid for some sufficiently large, yet finite, $M$, given that the $L^2\{\sigma(X_j)\}$ space is spanned by a finite number of basis functions. When the space $L^2\{\sigma(X_j)\}$ is spanned by a finite number of basis functions $L^{[M]}(X_j)$, the approach that can correctly choose all the unknown basis functions spanning $L^2\{\sigma(X_j)\}$ is desirable. If some of the necessary basis functions are excluded, convergence will never be obtained even when $N$ tends to infinity; on the other hand, if unnecessary basis functions are included, the increase in the number of coefficient parameters in the model may be poor due to numerically instability, eventually resulting in erroneous VaR estimates. The following theorem guarantees that LLSM can include more basis functions in the regression model than LSM for the same rate of convergence of the asset value.\\

\begin{theorem}\label{thm:valuation advantage}
Suppose the conditions in Theorem \ref{thm: valuation convergency} are satisfied and the Irrepresentable Condition in the sense of \cite{Zhao_Yu-2006-JMLR} holds for the active sets, $|S_0|=s_0<\infty$ for $j=1,\ldots,L-1$; see also Appendix for the definition of Irrepresentable Condition.  If a finite set of $M_1$ basis functions are initially included in the regression with $M_1$ sufficiently large so that $S_0 \subseteq S_0^{[M_1]}$, then there exists $M\leq M_1<\infty$ such that,
$$U_j^{*[M_1,N]}  \overset{as} \to U_j~~\text{ and } ~~ U_j^{[M,N]}  \overset{as} \to U_j~~ \text{ as } N \to \infty.$$
\end{theorem}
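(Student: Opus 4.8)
The plan is to split the statement into the LSM assertion and the LLSM assertion and to establish both by a backward recursion on $j$ from $L-1$ down to $1$, exploiting the new hypothesis that $L^2\{\sigma(X_j)\}$ is effectively finite dimensional so that the approximation error in $M$ vanishes exactly at the finite level $M_1$. The base case $j=L$ is trivial because $U_L=Z_T$, and at each step the induction hypothesis supplies convergence of the estimated continuation values from the later steps.

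First I would dispose of the LSM part. Since $M_1$ is finite and fixed and the basis functions are linearly independent by Condition (A2), the population Gram matrix $C_j=\mathbf{E}\{L^{[M_1]}(X_j)L^{[M_1]}(X_j)^\top\}$ is positive definite; the strong law of large numbers then gives $A_j^{[M_1,N]}\overset{as}{\to}C_j$, so Condition (A4) is automatic for the $M_1$-dimensional regression. Applying Theorem \ref{thm: valuation convergency}(i) yields $U_j^{*[M_1,N]}\overset{as}{\to}U_j^{[M_1]}$. It then remains to argue $U_j^{[M_1]}=U_j$: because $S_0\subseteq S_0^{[M_1]}$ and $L^2\{\sigma(X_j)\}$ is spanned by the included functions, the regression with true coefficients reproduces the conditional expectation exactly, $a_j^{[M_1]}\cdot L^{[M_1]}(X_j)=\mathbf{E}(Z_{\tau_{j+1}}\mid X_j)$, so by \eqref{thm:valuation convergence layer1} the approximation is already exact at $M_1$ and, using $\Pr\{a_j\cdot L^{[M_1]}(X_j)=Z_j\}=0$, the induced stopping rule coincides with the true one. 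Hence $U_j^{*[M_1,N]}\overset{as}{\to}U_j$.

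For the LLSM part the essential new ingredient is variable selection consistency. Under the Irrepresentable Condition of \cite{Zhao_Yu-2006-JMLR} together with the penalty schedule $\lambda=\mathcal{O}(\log M_1/N)$ from Theorem \ref{thm: valuation convergency}, the LASSO fit on the $M_1$ candidate functions recovers the true support, so that the selected set equals $S_0$ for all $N$ large enough, almost surely; the $\log M_1/N$ rate is what lets one upgrade the usual high-probability sign-consistency statement to an almost sure, eventually-in-$N$ one via Borel--Cantelli. I would then take $M$ to be this selected model, so $M\leq M_1$ with $|M|=s_0$ and $M\supseteq S_0$ eventually. On the event that the support is correctly recovered, the LLSM regression reduces to an $s_0$-dimensional regression on the true active functions, for which Theorem \ref{thm: valuation convergency} (part (i), or part (ii) under (A5)) gives $U_j^{[M,N]}\overset{as}{\to}U_j^{[M]}$; and since $M\supseteq S_0$ captures the whole conditional expectation, $U_j^{[M]}=U_j$ exactly as in the LSM step, whence $U_j^{[M,N]}\overset{as}{\to}U_j$.

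The main obstacle I anticipate lies in the coupling of the two steps through the backward recursion. The regression response at step $j$ is the estimated payoff $Z_{\tau_{j+1}^{[M,N]}}$ rather than the idealized $Z_{\tau_{j+1}}$, so I must show that the Gram-matrix limit, the coefficient consistency, and crucially the exact support recovery at step $j$ are all robust to using a response that merely converges to the truth as $N\to\infty$ — that is, that the Irrepresentable Condition and the resulting selection consistency survive this perturbation. The induction hypothesis $U_{j+1}^{[M,N]}\overset{as}{\to}U_{j+1}$ (equivalently $\tau_{j+1}^{[M,N]}\to\tau_{j+1}$ almost surely on the relevant paths) is what makes this tractable, but making the argument uniform enough to push through all $L-1$ layers while keeping the selection statement almost sure is the delicate point; the finiteness of $M_1$ and $s_0$ and the prescribed penalty rate are precisely what keep it under control.
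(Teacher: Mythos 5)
Your LSM half and your overall skeleton (backward induction on $j$, exactness of the projection once the finite spanning set $S_0\subseteq S_0^{[M_1]}$ is included, then invoking Theorem~\ref{thm: valuation convergency}) coincide with the paper's argument. The divergence — and the genuine gap — is in the LLSM half, where you hinge everything on \emph{exact support recovery}: the claim that under the Irrepresentable Condition the LASSO selected set equals $S_0$ for all large $N$, almost surely, and that $M$ can then be taken to be this selected $s_0$-dimensional model. Sign consistency in the sense of \cite{Zhao_Yu-2006-JMLR} requires the penalty to dominate the noise scale of the score (in their notation $\lambda_N/\sqrt{N}\to\infty$, up to normalization); the present theorem prescribes $\lambda=\mathcal{O}(\log M/N)$ with $\lambda/N=o(1)$, a \emph{vanishing} penalty under which the LASSO behaves essentially like OLS on the inactive coordinates: false negatives disappear, but false positives do not, so the event $S(\lambda)=S_0$ does not hold eventually a.s., and no Borel--Cantelli upgrade can rescue it. Your own flagged worry (robustness of selection consistency to the perturbed response $Z_{\tau_{j+1}^{[M,N]}}$) compounds this, but the penalty-scale mismatch is already fatal to the step as stated.

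The paper's proof never needs exact recovery, and this is precisely what makes it go through. It uses the weaker sandwich $S_0^{\text{relevant}}\subseteq S_0(\lambda)\subseteq S_0$ of Lemma~\ref{lemma:active set in lasso} (Theorem 7.1 of \cite{Buhlmann_vandeGeer-2011}) and then bounds the damage from the true variables that LASSO \emph{misses}: any $k\in S_0\setminus S_0(\lambda)$ must have $|a_{j,k}|$ below the threshold $\lambda^{(j)}\sup_{\|u_{S_0}\|_\infty\leq 1}\|\Sigma_{1,1}^{(j)-1}(S_0)u_{S_0}\|_\infty/2$, so the omitted term $\bigl|(a_j)_{S_0\setminus S_0(\lambda)}\cdot (L(X_j))_{S_0\setminus S_0(\lambda)}\bigr|$ is at most that threshold times a finite sum of bounded basis functions ($s_0<\infty$), which vanishes as $\lambda\to 0$, i.e.\ as $N\to\infty$. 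In other words, the paper tolerates under-selection because the excluded coefficients are provably small, whereas you demand perfect selection, which is both stronger than the hypotheses deliver and unnecessary for the conclusion. If you want to keep your structure, replace ``the selected set equals $S_0$ eventually'' by this threshold bound on $S_0\setminus S_0(\lambda)$, and define $M=|S_0(\lambda)|\leq M_1$ as the (possibly strictly smaller than $s_0$-containing) selected model; the rest of your induction then matches the paper's.
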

\begin{proof}
\textcolor{black}{Details of the proof can be found in  \hyperref[sect:A2]{Appendix A.2}.}
\end{proof}

Theorem \ref{thm:valuation advantage} ensures that, given a suitable penalty $\lambda$, one can carry out the valuation procedure with finite number of basis functions and obtain the same convergence result as $N$ increases. Furthermore, the number of basis functions considered in LLSM never exceeds that considered in LSM for the same convergence result based on the same initial set of basis functions. The Irrepresentable Condition is a stronger condition that implies the compatibility Condition. It depends on the gram matrix and the signs of true coefficients; see \cite{Buhlmann_vandeGeer-2011} for more discussion.

The above result also concludes that the number of basis functions needed to obtain convergence in LLSM is upper bounded by that required by LSM. Fewer basis functions in the regression model implies that there will be less estimation error given the same computation budget. Admittedly, there is no guarantee that one can include all the influential basis functions that span $L^2\{\sigma(X_j)\}$ in the regression model. Nonetheless, given the same computation budget $N$, LLSM enables users to initially include and screen more basis functions; \change{see also Theorem \ref{thm:VaR convergency rate}}.\\

\subsection{Convergence Results for VaR}\label{sst:VaR proof}
Given the valuation convergence results presented in Section \ref{sst:valuation proof}, we now establish the corresponding convergence properties of the VaR estimate proposed. As discussed earlier, the properties of a $t$-day VaR with $t$ as a stopping time are different from cases where $t$ is not a stopping time. In this section, we present Theorem \ref{thm: VaR convergency} which ensures the convergence of VaR at possible stopping times. The specific rates of convergence of VaRs at non-stopping times evaluated via LSM and LLSM are derived in Theorems \ref{lemma:VaR convergence rate} and \ref{thm:VaR convergency rate} respectively. \\

\begin{theorem}\label{thm: VaR convergency}
For $j=1,\ldots,L-1$, if conditions in Theorem \ref{thm: valuation convergency} (\text{i}) are satisfied, then
$$\text{VaR}_j^{[M,N]} \to \text{VaR}_j^{[M]}~~ \text{ as } N \to \infty,$$
where $\text{VaR}_j^{[M,N]}$ and $\text{VaR}_j^{[M]}$ are defined as,
\begin{eqnarray*}
\text{VaR}_j^{[M,N]} & \triangleq & \inf_{x\in \rm I\!R}\left\{\Pr(U_0-U_j^{[M,N]}<-x)<\alpha \right\},\\
\text{VaR}_j^{[M]} & \triangleq & \inf_{x\in \rm I\!R}\left\{\Pr(U_0-U_j^{[M]}<-x)<\alpha \right\}.
\end{eqnarray*}
\end{theorem}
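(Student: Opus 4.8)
The plan is to read both $\text{VaR}_j^{[M,N]}$ and $\text{VaR}_j^{[M]}$ as generalized quantiles of the loss variables $W^{[M,N]} := U_0 - U_j^{[M,N]}$ and $W^{[M]} := U_0 - U_j^{[M]}$, and to deduce convergence of these quantiles from the almost sure convergence of the portfolio values supplied by Theorem \ref{thm: valuation convergency}(i). Since $U_0$ is a deterministic constant, the two losses are just fixed shifts of $-U_j^{[M,N]}$ and $-U_j^{[M]}$, so it is enough to track the distributional behaviour of $U_j^{[M,N]}$ relative to $U_j^{[M]}$.

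First I would invoke Theorem \ref{thm: valuation convergency}(i) to get $U_j^{[M,N]} \overset{as}{\to} U_j^{[M]}$ as $N \to \infty$. Almost sure convergence implies convergence in distribution, hence $W^{[M,N]} \Rightarrow W^{[M]}$; writing $F_N$ and $F$ for the distribution functions of $W^{[M,N]}$ and $W^{[M]}$, this means $F_N(y) \to F(y)$ at every continuity point $y$ of $F$. This is the only point at which the valuation theorem is used, and it is essentially immediate.

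The second and main step is to upgrade this pointwise CDF convergence to convergence of the quantile functional defining the VaR. Setting $v := \text{VaR}_j^{[M]}$, I would sandwich $\text{VaR}_j^{[M,N]}$ between $v - \epsilon$ and $v + \epsilon$ for arbitrary $\epsilon > 0$ and all large $N$, after perturbing $\epsilon$ slightly if needed so that the relevant points are continuity points of $F$ (these are dense). For the upper bound one uses that the inf-definition of $v$ gives $\Pr(W^{[M]} < -(v+\epsilon)) < \alpha$; passing through $F_N \to F$ yields $\Pr(W^{[M,N]} < -(v+\epsilon)) < \alpha$ eventually, so $\text{VaR}_j^{[M,N]} \leq v + \epsilon$. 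The lower bound is symmetric but needs the \emph{strict} inequality $\Pr(W^{[M]} < -(v-\epsilon)) > \alpha$, which then transfers to $\Pr(W^{[M,N]} < -(v-\epsilon)) \geq \alpha$ for large $N$ and forces $\text{VaR}_j^{[M,N]} \geq v - \epsilon$.

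The hard part will be precisely this strictness: guaranteeing that the limiting loss distribution $F$ has no atom at, and is strictly increasing in a neighbourhood of, its $\alpha$-level quantile, so that the VaR of the limit is uniquely determined and the sandwiching lower bound is genuinely strict. Because $U_j^{[M]}$ is assembled from $Z_j$ and $U_{j+1}^{[M]}$ through the indicator $\indicator_{\{Z_j \geq a_j^{[M]} \cdot L^{[M]}(X_j)\}}$, its law is a mixture that could in principle place mass at the quantile. I would control this using the standing hypothesis $\Pr\{a_j \cdot L^{[M]}(X_j) = Z_j\} = 0$, which renders the exercise boundary null and the associated selection map almost surely continuous, together with the regularity of the conditional laws of $Z_j$ and $X_j$, to exclude an atom at the relevant quantile and secure local strict monotonicity of $F$. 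With this regularity established, the two one-sided estimates close the sandwich and deliver $\text{VaR}_j^{[M,N]} \to \text{VaR}_j^{[M]}$.
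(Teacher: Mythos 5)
Your proposal is correct in outline and takes a genuinely different route from the paper, though both share the same skeleton and, notably, the same regularity caveat. Like the paper, you begin by invoking Theorem \ref{thm: valuation convergency}(i) to get $U_j^{[M,N]} \to U_j^{[M]}$ almost surely, hence in distribution; everything afterwards is about upgrading distributional convergence to convergence of the quantile functional. You then run the classical sandwich argument directly on the distribution functions: the upper bound $\text{VaR}_j^{[M,N]} \le v+\epsilon$ follows from the inf-definition plus pointwise CDF convergence at (dense) continuity points, and the lower bound $\text{VaR}_j^{[M,N]} \ge v-\epsilon$ requires the strict inequality $\Pr(U_0 - U_j^{[M]} < -(v-\epsilon)) > \alpha$. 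The paper instead assumes the laws of $U_j^{[M,N]}$ and $U_j^{[M]}$ admit densities $g_N$ and $g$, writes $0 = G_N(v)-G(v) + \int_{v}^{v_N} g_N(u)\,du$ with $v_N = \text{VaR}_j^{[M,N]}$, observes that the first difference vanishes by weak convergence, and derives a contradiction: if $|v_N - v| > \epsilon_0$ along a subsequence, it claims $g_N$ is positive (indeed bounded below by some $u_0^*>0$) on part of the intervening interval, so the integral term cannot vanish, with a separate crude case for discrete laws. Your CDF argument is more elementary and more general (no densities needed, no case split); the paper's density formulation buys the machinery that is reused quantitatively in Lemma \ref{lemma:VaR convergence rate} and Theorem \ref{thm:VaR convergency rate} to obtain rates, so it is not wasted effort there. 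Finally, the ``hard part'' you honestly flag --- ruling out a flat spot or atom of the limit law at its $\alpha$-level quantile --- is present in the paper's proof too: its assertion that $g_N(u_0)>0$ between $v_N$ and $v$, justified only by a vague appeal to tightness of the support, is precisely your strictness condition in disguise, and neither your sketch (the null exercise boundary $\Pr\{a_j\cdot L^{[M]}(X_j)=Z_j\}=0$ plus conditional-law regularity does not by itself preclude atoms of $U_j^{[M]}$) nor the paper's assertion derives it rigorously from the stated hypotheses. You are therefore not missing anything the paper supplies; you have merely made explicit an assumption the paper leaves implicit.
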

\begin{proof}
\textcolor{black}{Details of the proof can be found in  \hyperref[sect:A3]{Appendix A.3}.}
\end{proof}

\begin{remark}
This theorem also holds for $\text{VaR}_j^{*[M,N]}$ derived from LSM. A similar convergence result still holds for $\text{VaR}_j^{[M,N]}$ if we substitute the Compatibility Condition, a weaker condition, for Condition A4. It is, however, not true for $\text{VaR}_j^{*[M,N]}$.~
\end{remark}

Theorem \ref{thm: VaR convergency} proves the convergence of VaR estimates by LLSM at stopping times. Both $\text{VaR}_j^{[M,N]}$ and $\text{VaR}_j^{*[M,N]}$ converge at the rate of $\mathcal{O}(N^{-1})$; \textit{c.f.} Proposition 3.2 of 
\cite{Bauer_etal-2012-ASTIN}. However, in most cases, we need the convergence result for $t_1$-day VaR with a non-stopping time $t_1$. In a typical setting, for instance, a risk manager has to compute a $10$-day VaR in order to fulfill the Basel \text{II} regulations. In this case, $t_1=10$-day and $t_1\notin \mathcal{T}_{0,T}$; the convergence of $\text{VaR}^{[M,N]}_{t_1}$ to the $\text{VaR}^{[M]}_{t_1}$ is obviously important. To achieve this, we provide Theorems \ref{lemma:VaR convergence rate} and \ref{thm:VaR convergency rate} which guarantee that, under some mild conditions, VaR estimates by LLSM at non-stopping times converge at a faster rate than the counterparts obtained by LSM. This theorem explains why LLSM always outperforms LSM when we compute $95\%$ $10$-day VaR in our numerical studies. 

To handle calculations related to non-stopping time, we write the estimate of $Z_{\tau_1}$ as a combination of basis functions, \textit{viz.} $$Z_{\tau_1^{[M]}}=a_{t_1}^{[M]}\cdot L^{[M]}(X_{t_1})+\epsilon_{t_1},$$ where $a_{t_1}^{[M]}$ is referred to the true coefficients in the regression at $t_1$ and $\epsilon_{t_1}$ denotes the error term with zero mean and finite variance. Note that $Z_{\tau_1^{[M]}}$ serves as the response in the regression, indicating that true coefficients are used in each regression to estimate $\tau_1^{[M]}$. The LASSO estimates are defined correspondingly as $a_{t_1}^{[M,N]} = \argmin_{\alpha \in \rm I\!R^M} \left\{\|Z_{\tau_1^{[M,N]}}-\alpha \cdot L^{[M]}(X_{t_1})\|_2^2+\lambda\|\alpha\|_1\right\}$, where $Z_{\tau_1^{[M,N]}}$ is the response in the regression. The true coefficients in the same regression is defined as $\tilde{a}_{t_1}^{[M,N]}$. The corresponding OLS estimates, namely $a_{t_1}^{*[M,N]}$ and $\tilde{a}_{t_1}^{*[M,N]}$, can be obtained by substituting $Z_{\tau_1^{*[M,N]}}$ with $Z_{\tau_1^{[M,N]}}$ as the response in the regression.

The pricing error at $t_1$ is composed of two components. One is the estimation error that comes from the regression at $t_1$, denoted by $\bigg|(a_{t_1}^{[M,N]}-\tilde{a}_{t_1}^{[M,N]})\cdot L^{[M]}(X_{t_1})\bigg|$; the other is the estimation error of $Z_{\tau_1^{[M]}}$, denoted by $\bigg|N^{-1}\sum_{i=1}^N(Z_{\tau_1^{[i,M,N]}}^{[i]}-Z_{\tau_1^{[i,M]}}^{[i]})\bigg|$ with the superscript $i$ in this notation indicates the $i$th realization of the corresponding random variables. Although both $a_{t_1}^{[M]}$ and $\tilde{a}_{t_1}^{[M,N]}$ are called true coefficients, different responses are used as dependent variables in the corresponding regression. Due to the fact that the definition of $U_{t_1}^{[M]}$ is different from that of $U_j^{[M]}$, $j=0,\ldots,L$, we cannot trivially apply Theorem \ref{thm: VaR convergency} to the proof of VaR convergence at $t_1$.

To tackle this problem, we define
\begin{align*}
\bar{W}\triangleq& N^{-1}\sum_{i=1}^N\left \{ a_{t_1}^{[M]}\cdot L^{[M]}(X_{t_1}^{[i]})-a_{t_1}^{[M,N]}\cdot L^{[M]}(X_{t_1}^{[i]}) \right\}\\
\bar{W}^*\triangleq& N^{-1}\sum_{i=1}^N\left \{a_{t_1}^{[M]}\cdot L^{[M]}(X_{t_1}^{[i]})-a_{t_1}^{*[M,N]}\cdot L^{[M]}(X_{t_1}^{[i]}) \right\}
\end{align*}
as the average pricing error for LASSO and OLS, respectively. We also define $W=\sqrt{N}\bar{W}$ and $W^*=\sqrt{N}\bar{W}^*$. Let $g_N(\cdot,\cdot), g(\cdot)$ and $g_N(\cdot)$ denote the joint pdf of $U_{t_1}^{[M]}$ and $W$, the marginal pdf of $U_{t_1}^{[M]}$ and the pdf of $U_{t_1}^{[M,N]}$, respectively.
To ensure the VaR convergence for the nested simulation and for LSM, the following condition that imposes some restriction on the distribution of $W$ and $W^*$ is required; see \cite{Gordy_Juneja-2010-MS} and \cite{Bauer_etal-2012-ASTIN}.\\

We say that Condition (A6) holds for random variable $W$ if both of the following are satisfied:
\begin{enumerate}[i.]
   \item
      The joint pdf $g_N(\cdot,\cdot)$ of $U_{t_1}^{[M]}$ and $W$ and its partial derivatives $\frac{\partial}{\partial u}g_N(u,w)$, $\frac{\partial ^2}{\partial u^2}g_N(u,w)$ exist for each $N$ and for all sets of $(u,w)$.
   \item
      For $N\geq 1$, there exist non-negative functions $p_{0,N}(\cdot)$, $p_{1,N}(\cdot)$, $p_{2,N}(\cdot)$ such that for all $(u,w)$,
\begin{equation*}
g_N(u,w) \leq p_{0,N}(w), \quad \bigg|\frac{\partial}{\partial u}g_N(u,w)\bigg| \leq p_{1,N}(w), \quad \bigg|\frac{\partial ^2}{\partial u^2}g_N(u,w)\bigg| \leq p_{2,N}(w).
\end{equation*}
In addition,
$$\sup_N \int_{-\infty}^\infty |w|^rp_{i,N}(w)dw<\infty ~~ \text{ for }i=0,1,2 \text{~~and~~}0\leq r\leq 4.$$
\end{enumerate}

This condition generally holds for large portfolios where there are at least a few positions that have sufficiently smooth payoffs; see \cite{Gordy_Juneja-2010-MS}. To compare the performance of LLSM and LSM, we introduce Theorem \ref{thm:VaR convergency rate} that shows the convergence rate of $\text{VaR}_{t_1}^{[M,N]}$ and $\text{VaR}_{t_1}^{*[M,N]}$.

\begin{theorem}\label{thm:VaR convergency rate}
If conditions in Theorem \ref{thm: valuation convergency} (\text{i}) are satisfied, Condition (A6) holds for $W$ and $W^*$, $\text{VaR}_{t_1}^{[M,N]}$ by LLSM $\text{VaR}_{t_1}^{*[M,N]}$ by LSM will converge to $\text{VaR}_{t_1}^{[M]}$ in the following sense,
\begin{eqnarray*}
\text{VaR}_{t_1}^{[M,N]}-\text{VaR}_{t_1}^{[M]}& = &\mathcal{O}\left(\sqrt{\frac{s_0\log M}{N\phi_0^2}}\right)+\left[\frac{g(\tilde{v})g'(v)}{g(v)}-\frac{1}{g(v)}\frac{d}{dv}g(v) \right]\mathcal{O}\left(\frac{s_0\log M}{N\phi_0^2}\right)\\
&& +o\left(N^{-1}\right),  \\
\text{VaR}_{t_1}^{*[M,N]}-\text{VaR}_{t_1}^{[M]}&=&\frac{1}{g(v)}\frac{d}{dv}g(v)\mathcal{O}\left(\frac{M}{N}\right)+o\left(N^{-1}\right),
\end{eqnarray*}
where $v=\text{VaR}_{t_1}^{[M]}-U_0$ and $\tilde{v} \in [v-w/\sqrt{N},v]$. Furthermore, \newline if $N=o\left(\frac{M^2\phi_0^2}{s_0\log M}+2M+\frac{s_0\log M}{\phi_0^2}\right)$, we will have
$\frac{\text{VaR}_{t_1}^{[M,N]}-\text{VaR}_{t_1}^{[M]}}{\text{VaR}_{t_1}^{*[M,N]}-\text{VaR}_{t_1}^{[M]}}=o(1).$~\\
\end{theorem}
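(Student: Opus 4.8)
The plan is to treat each estimated VaR as the $\alpha$-quantile of a perturbed loss distribution and to expand that quantile about $\text{VaR}_{t_1}^{[M]}$, adapting the quantile-sensitivity analysis of \cite{Gordy_Juneja-2010-MS} and \cite{Bauer_etal-2012-ASTIN}. Writing $v=\text{VaR}_{t_1}^{[M]}-U_0$ and letting $F,F_N$ denote the CDFs of the losses $U_0-U_{t_1}^{[M]}$ and $U_0-U_{t_1}^{[M,N]}$, I would first record the Bahadur-type identity $\text{VaR}_{t_1}^{[M,N]}-\text{VaR}_{t_1}^{[M]}=-\{F_N(v)-F(v)\}/g(v)+(\text{remainder})$, and then evaluate the CDF gap $F_N(v)-F(v)$ through the joint density $g_N(u,w)$ of $(U_{t_1}^{[M]},W)$. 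Since $U_{t_1}^{[M,N]}=U_{t_1}^{[M]}-\bar W+(\text{path fluctuation})$ with $\bar W=W/\sqrt N$, this gap becomes an integral of $g_N(u-w/\sqrt N,w)-g_N(u,w)$ over the level set at $v$.

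The core step is a second-order Taylor expansion of $g_N(\cdot,w)$ in its first argument about $v$ with increment $w/\sqrt N$: the first-order term is proportional to $N^{-1/2}\mathbf{E}[W]$ (the conditional pricing bias) times $\partial_u g$, while the second-order term is proportional to $N^{-1}\mathbf{E}[W^2]$ times $\frac{d}{dv}g(v)$, with the mean-value point $\tilde v\in[v-w/\sqrt N,v]$ carrying the remainder. Here Condition (A6) is exactly what licenses differentiation under the integral sign and supplies the domination needed to bound the remainder by $o(N^{-1})$. I expect this to be the main obstacle, because the fitted coefficients $a_{t_1}^{[M,N]}$ enter both the value being ranked and the empirical quantile, so the joint law of $(U_{t_1}^{[M]},W)$ must be controlled and the leftover shown to be genuinely $o(N^{-1})$ rather than merely $\mathcal{O}(N^{-1})$.

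Next I would substitute the pricing-error magnitudes, using the decomposition of the $t_1$-pricing error into the regression-estimation part $(a_{t_1}^{[M,N]}-\tilde a_{t_1}^{[M,N]})\cdot L^{[M]}(X_{t_1})$ and the stopping-value part $N^{-1}\sum_i(Z_{\tau_1^{[i,M,N]}}^{[i]}-Z_{\tau_1^{[i,M]}}^{[i]})$; the latter is governed by the valuation results (Theorems \ref{thm: valuation convergency} and \ref{thm:valuation advantage}) and is absorbed into the remainder, so the regression part sets the rate. For LLSM, Condition (A5) together with the LASSO oracle inequality of \cite{Buhlmann_vandeGeer-2011} bounds the in-sample prediction error by $\mathcal{O}(s_0\log M/(N\phi_0^2))$; by Cauchy--Schwarz this yields $\bar W=\mathcal{O}(\sqrt{s_0\log M/(N\phi_0^2)})$ and $\mathbf{E}[W^2]=\mathcal{O}(s_0\log M/\phi_0^2)$, so the shrinkage bias supplies the leading $\mathcal{O}(\sqrt{s_0\log M/(N\phi_0^2)})$ term and the variance supplies the bracketed second-order term. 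For LSM the analogous OLS fit is conditionally unbiased under (A3), so its first-order contribution vanishes and only the variance survives; with $\mathbf{E}[(W^*)^2]=\mathcal{O}(M)$ this produces the $\frac{1}{g(v)}\frac{d}{dv}g(v)\,\mathcal{O}(M/N)$ rate. Assembling these gives the two displayed expansions.

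Finally, for the ratio I would divide the two expansions and keep the dominant terms. Setting $b=\sqrt{s_0\log M/\phi_0^2}$, the LLSM error is $\mathcal{O}(b/\sqrt N)+\mathcal{O}(b^2/N)$ and the LSM error is of order $M/N$, so the quotient is $\mathcal{O}(\sqrt N\,b/M)+\mathcal{O}(b^2/M)$. The decisive algebraic observation is that the stated bound is a perfect square, $\frac{M^2\phi_0^2}{s_0\log M}+2M+\frac{s_0\log M}{\phi_0^2}=\big(M/b+b\big)^2$, so the hypothesis $N=o(\cdots)$ is equivalent to $\sqrt N=o(M/b+b)$; in the relevant sparse regime $b^2=o(M)$ this reduces to the binding constraint $N=o(M^2\phi_0^2/(s_0\log M))$, under which both $\sqrt N\,b/M$ and $b^2/M$ tend to zero and the quotient is $o(1)$, as claimed.
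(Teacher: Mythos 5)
Your proposal follows essentially the same route as the paper's proof: a Gordy--Juneja-style quantile sensitivity expansion, with the CDF gap evaluated through a second-order Taylor expansion of the joint density $g_N(u,w)$ under the domination supplied by Condition (A6), the LASSO rate $\mathcal{O}\bigl(\sqrt{s_0\log M/(N\phi_0^2)}\bigr)$ obtained from the B\"uhlmann--van de Geer oracle inequality under the compatibility condition, the OLS first-order term killed by conditional unbiasedness with variance $\mathcal{O}(M)$, and the same final ratio comparison (the paper merely packages the density expansion as a separate lemma on the exceedance probabilities $\alpha_N-\alpha$ before inverting the quantile). Your observation that the growth condition is the perfect square $\bigl(M/b+b\bigr)^2$ with $b=\sqrt{s_0\log M/\phi_0^2}$, and the attendant sparse-regime caveat $b^2=o(M)$, is actually more explicit than the paper's unargued final step.
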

\begin{proof}
\textcolor{black}{Details of the proof can be found in  \hyperref[sect:A4]{Appendix A.4}.}
\end{proof}

\begin{remark}
Theorem \ref{thm:VaR convergency rate} still hold if we substitute the Compatibility Condition for Condition (A4). Note that in this case, $\text{VaR}_{t_1}^{[M,N]}$ will still converge whereas $\text{VaR}_{t_1}^{*[M,N]}$ will diverge. ~\\
\end{remark}

As we can see in this theorem, LLSM allows us to include $o(\exp(N))$ basis functions whereas LSM can only handle at most $o(N)$ for convergence. If the gram matrix is non-singular, LLSM yields a faster VaR convergence rate than LSM under restriction of $N=o\left(\frac{M^2\phi_0^2}{s_0\log M}+2M+\frac{s_0\log M}{\phi_0^2}\right)$. Such a growth rate of $N$ can be explained in the following two aspects. Firstly, this choice of $N$ means that the number of sample paths available cannot be infinitely large due to a given computation budget. Under a high-dimensional setting with $M$ large, $N$ can hardly be larger than $\mathcal{O}(M^2)$. Secondly, if we have enough resources so that $N>\mathcal{O}(M^2)$, the LASSO component may not be necessary given the non-singularity of the gram matrix and abundant sample paths. LASSO has been well-known for its application in high-dimensional statistics, but bias would arise if we impose a penalty in the minimization process in an unnecessary case when $N$ is sufficiently large and the gram matrix is non-singular.

\section{Numerical Studies} \label{st:numerical study}
Our quantity of interest is the $95\%$ $10$-day VaR for portfolios with nonlinear payoffs. Back testing is performed to evaluate the performance of different approaches when oracle benchmarks are available. In this section, the penalty used in LASSO is determined by 20-fold cross-validation to minimize the mean cross-validated error given a loss function. 
\textcolor{black}{We refer the nested simulation in \cite{Gordy_Juneja-2010-MS} as the estimated oracle approach. If, in the inner simulation, a closed form solution is available for evaluating the portfolio at $t_1=10$-day, we define the approach as the true oracle approach. The Greeks involved in the delta-normal approach are computed numerically via center finite difference method.} 

Although we consider VaR estimation of individual products, the idea of VaR evaluation can be extended from a single derivative to a high-dimensional portfolio by including additional risk factors as the underlying stochastic variables in the regression. Common risk factors are simulated once and only one regression will be performed at each possible stopping times and $t_1$ to evaluate the value of the whole portfolio. Specifically, to make the results more directly comparable with those presented in \cite{Longstaff_Schwartz-2001-RFS}, we adopted polynomials up the three order as our basis functions $L(X)$ for all examples. \textcolor{black}{In the following examples, we shall assume that the return series follow multivariate Gaussian distributions. They are constructed in this way such that we can easily benchmark our performance with existing procedures, especially those which rely on the closed-form solutions under such settings. Noteworthy, however, our formulation does not require joint normality assumption for the return series. Because of the non-parametric nature of our estimate, our proposal can be extended to non-elliptical world fairly easily because of the ranking step stated in Step 13 in Algorithm \ref{alg:general LLSM}.}

\subsection{Rainbow Option}\label{sst:rainbow option}
Rainbow options are one of the most commonly traded exotic options whose payoff functions depend on more than one underlying risky assets. In this section, we consider a variation of ``call on min" rainbow option with ten stocks as its underlying risky assets. The long side will receive a positive profit if the minimum ratio return of ten underlying stocks exceeds a predefined strike price. In other words, the payoff at maturity is expressed as
\begin{align*}
100\max\left(\min_i \frac{S_{iT}}{S_{i0}}-K,0\right),
\end{align*}
where $S_{i0}$ denotes the current price for the $i$th underlying stock. The constant $100$ in the payoff function is arbitrary for illustration to standardize the payoff at maturity. In order to derive a benchmark based on the closed form solution for pricing, we assume the underlying stock prices follow the \cite{Black_Scholes-1973-JPE} model.  The closed form solution is discussed in \cite{Johnson-1987-JFQA}. Corresponding details are provided in the Appendix; see Section B2.

The VaR estimates given by different approaches are summarized in Table \ref{table:VaR of rainbow option}. The strike is selected to ensure the rainbow option is at-the-money, a situation in which delta-normal approximation may face challenges due to non-differentiability at the price that corresponds to unit moneyness. We chose the maturity $T$ to be 270 days in this example. The number of sample paths generated in each approach is $N=10,000$ and the number of paths in the inner layer of the estimated oracle approach is $N_2=50,000$.

Since we can obtain one estimate of VaR in the estimated oracle approach, there is no observation of the standard deviation. Except for the oracle approaches, each methodology is repeated for 500 iterations in order to study the distribution of the VaR estimates. 
\textcolor{black}{Procedures labelled with $^\dagger$ adopt the closed form solution for all the pricing involved.}

The computation time indicates the time needed for an approach to obtain one VaR estimate yielded from a computer with Intel Core i5-5200U, CPU 2.2 GHz and RAM 8GB.

\begin{table}[]\footnotesize
\caption{10-day 95\% VaR of Rainbow Option}
\label{table:VaR of rainbow option}
\begin{tabular}{@{}lccccc@{}}
\hline
& Mean    & Median      & Standard Deviation   & Back Testing & Time (in seconds) \\ \hline
LSM                                       & 1.78698 & 1.78664 & 0.11482 & 0.0290       & 18.62             \\
LLSM& 1.61693 & 1.61108 & 0.10932 & 0.0442 & 20.25             \\
Delta-normal$^\dagger$                & 1.74591 & 1.74591 & -           & 0.0329       & 7.62             \\
\textcolor{black}{Delta-gamma$^\dagger$} & 5.00858 & 5.00858 & -           & 0       & 66.36             	\\
Oracle                 & 1.58089 & 1.58089     & -           & 0.0483       & 163,850 \\
Oracle$^\dagger$                     & 1.56778 & 1.56778     & -           & 0.0500         & 3,634     \\ \hline
\end{tabular}
\end{table}

As shown in Table \ref{table:VaR of rainbow option}, only a small amount of additional computation is required to carry out the variable selection, even though 20-fold cross validation is adopted for LLSM. Upon our VaR estimates, the back testing procedure was carried out by comparing the estimates with the unrealized P\&L's of the simulated prices evaluated based on the closed form formulas. Percentages of losses that exceed the VaR estimates are reported. According to Table \ref{table:VaR of rainbow option}, we can see that it is worthwhile to carry out the additional LASSO variable selection procedure since the back testing results are dramatically improved from $2.90\%$ to $4.42\%$. \textcolor{black}{For a fair comparison, both Delta-normal and Delta-gamma approaches apply the finite difference method for Greeks calculations. We observe biased estimates for Greeks with higher orders and significantly heavier computational burden as the number of Greeks increases. The back testing results of $3.29\%$ and $0\%$ in the Delta-normal and Delta-gamma approach can be improved to $5.23\%$ and $6.18\%$ respectively if the closed form solution is applied to Greeks computing. The Delta-gamma approach has poorer performance because of the biases accumulated in repeated numerical approximations of the differentials.} 

These results verify that even with a short horizon, \textcolor{black}{neither first nor second-order approximations is insufficient for estimating VaR's of derivatives with nonlinear payoffs.} The discrepancy is even more prominent when the derivatives are nearly at-the-money.

\subsection{European Swaption}\label{sst:european swaption}
Swaptions are among the most liquidly traded interest rate derivatives in the financial market. 

Consider a European payer $20$ NC (``non-call/lock-out'' period) $2$ swaption whose underlying swap has a final tenor of 20 years. We adopt the Lognormal Forward LIBOR Model (LFM) as the underlying model for the forward rates in the swaption. Same definitions and calibrations are adopted from \cite{Brigo_Mercurio-2007}. Denote $L(t,T)$ as the spot interest rate prevailing at time $t$ for the maturity $T$ and $P(t,T)$ as the zero-coupon bond price delta-normalat time $t$ with payment at maturity $T$. The forward rates are denoted by $L_i(t) \equiv L(t,T_{i-1},T_i)$, where $i=1,\ldots,20$. The forward rates dynamics in the LFM are defined in Proposition 6.3.1 in \cite{Brigo_Mercurio-2007}.

Given a notional amount of $N=1,000$ and the swap rate $K$, the payoff to the holder at $T_i$ is
\begin{align}  \label{eq:swaption payoff}
A(T_i)=1000 \cdot \mathbf{E}^{\mathbb{Q}^i}\left[\left\{\sum_{j=i+1}^{20}D(T_i,T_j)\delta_j(L_j(T_i)-K)\right\}^+~\middle|~\mathcal{F}_i \right],
\end{align}
where $i=2, \ldots,20$, $\delta_j=\delta(T_{j-1},T_j)$ is the discrete time interval, $D(T_i,T_j)$ is the discount factor for time period of $(T_i,T_j)$ and $\mathbb{Q}^i$ is a forward-adjusted measure corresponding to time $T_i$. More details about the model and parameters calibration can be found in the Appendix.

In the numerical study of swaption in \cite{Longstaff_Schwartz-2001-RFS}, the basis functions are subjectively selected to be a constant, the first three powers of the discounted price of the swaption at $t$, and the first power of all immatured zero coupon bond prices with final maturity dates up to and including $T_{20}$. We refer LSM with subjectively selected basis functions as SLSM. This method can potentially be unreliable as it performs a subjective apriori variable selection. For general products with a large number of underlying assets across different asset classes, the selection may not be as straight forward as the case for swaption.

We denote GLSM as LSM that specifically includes the first three orders of risk factors and second order of cross terms of these risk factors in the regression model. Note that GLSM does not include cross terms up to third order as in LSM. We allow this loose restriction on the order of basis functions to avoid that LSM fails to get OLS coefficient estimates due to over-parameterization.

The swap rate of the underlying swap is determined at $T_0$ to guarantee the swaption at-the-money. The numbers of sample paths in each approach are $N=5,000$. The number of paths in the outer layer and inner layer is $N_1=30,000$ and $N_2=30,000$ respectively. To ensure the estimated oracle approach offers a stable estimation, we have examined and selected different number of intensive simulation paths. We choose sufficient large $N_1$ and $N_2$ so that no significant change is observed with any further increment. Four approaches except the oracle approach are repeated $500$ times to get sample statistics. The computation time indicates the mean time needed for carrying out one round of iteration.

\begin{table}[]\footnotesize
\centering
\caption{10-day 95\% VaR of European Swaption}
\label{table:VaR of European swaption}
\begin{tabular}{@{}lccccc@{}}
\hline
  & Mean        & Median     & Standard Deviation         & Back Testing & Time (in seconds) \\ \hline
SLSM      & 8.94452   & 9.02583  & 1.45876 & 0.0200       &  13.07    \\
GLSM                & 19.8179  & 19.8070  & 1.23116 & 0.0000            & 17.25    \\
LLSM           & 7.02251   & 7.15271   & 1.62806  & 0.0505       & 25.16    \\
Delta-normal                   & 8.88734 & 8.88922 & 3.45005  & 0.0208       & 285.78    \\
Oracle                         & 7.04185  & 7.04185  & -          & 0.0500         & 242,200    \\ \hline
\end{tabular}

\end{table}

As shown in Table \ref{table:VaR of European swaption}, the computation time needed for the delta-normal approach is significantly longer than other approaches. \textcolor{black}{This is due to the fact that the best effort available to evaluate the portfolio value at $T_0$ is the estimated oracle approach. Nested simulation is required for each shift in each of the $18$ underlying risk factors at $T_0$ for the delta-normal approach}. The application of the estimated oracle approach is rather limited due to its computational burden: Even for a European swaption, it demands approximately three days to calculate one estimate of VaR.

The standard deviations for the first three methods are close but significantly larger than that obtained from the delta-normal approach. Despite the small standard deviation of the estimates given by the delta-normal approach, it incurs rather large biases which cast doubt on the accuracy of its performance. The boxplot shown in Figure \ref{fig:Eboxplot} summarizes the distribution of the VaR estimates obtained by the first four approaches. The dots in each approach represent VaR estimates in 500 experiments. The dash line draws the VaR obtained by the estimated oracle approach.

Among these five methods, GLSM performs worst. For the delta-normal approach, it produces estimates with a smaller bias, but with abnormally small variance. In the $500$ experiments, no results from the delta-normal approach or GLSM produces VaR estimate that is close to the oracle VaR. For SLSM, the dash line is located beyond the $25\%$ quantile of the distribution, indicating that this approach still has a small probability if getting an accurate VaR in one experiment. Regarding LLSM, the median of the distribution is closer to the dash line, indicating that the bias is small. Variance of this approach is also reasonable, in the sense that the dash line crosses the distribution within the range of 25\% and 75\% quantiles.

\begin{figure}[h!]
\centering
\includegraphics[width=0.6\textwidth]
{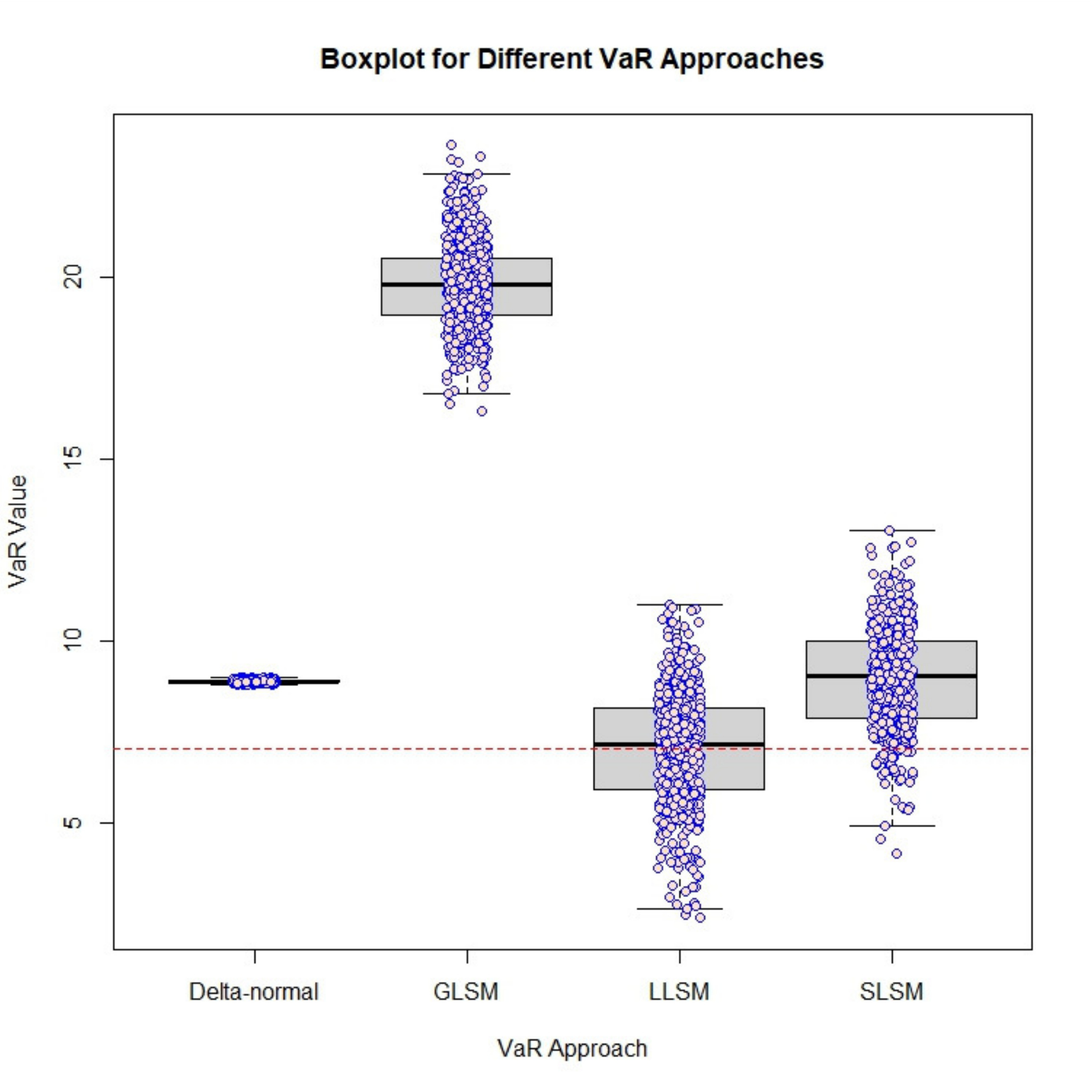}
\caption{VaR Boxplot for European Swaption}
\label{fig:Eboxplot}
\end{figure}

The performance can be evaluated through the back testing result summarized in Table \ref{table:VaR of European swaption}. Consistent with the analysis depicted in Figure \ref{fig:Eboxplot}, GLSM severely overestimates VaR, resulting a back testing result of $0$. The SLSM and the delta-normal approach have similar biases and similar back testing results of around $2\%$. Their back testing results are not satisfactory either because the estimated VaRs are too conservative, which consequently requires extra unnecessary capital reserves. LLSM, although underestimates VaR, performs much better with a back testing result of $5.05\%$. Overall, LLSM offers the best performance among the four approaches.

\subsection{Bermudan swaption} \label{sst:Bermudan swaption}

Since LLSM is applicable to portfolios with American features, we extend the previous example to Bermudan swaptions. Consider a Bermudan payer 20 NC 2 swaption. The payoff to the holder at $T_i$, $i=2,\ldots,20$ is defined as (\ref{eq:swaption payoff}). Each approach is repeated for $100$ times. Since it is not practical to perform nested simulation to derive oracle initial value, we applied SLSM with sufficiently large number of paths to determine the initial value of the swaption. Other settings are the same as in the previous study.

\begin{table}[]\footnotesize
\centering
\caption{10-day 95\% VaR of Bermudan Swaption}
\label{table:VaR of Bermudan swaption}
\begin{tabular}{@{}lcccc@{}}
\hline
& Mean       & Median     & SD & Time (in seconds) \\ \hline
SLSM         & 8.38623 & 8.48900 & 1.59813 & 195.00        \\
GLSM         & 21.0271 & 21.1145 & 1.51290 & 226.62       \\
LLSM         & 5.01065 & 4.98452 & 1.86820 & 270.01      \\
Delta-normal & 189.649 & 7.87119 & 371.980 &
8,807.50  \\ \hline
\end{tabular}
\end{table}

As shown in Table \ref{table:VaR of Bermudan swaption}, the computation time for the delta-normal approach is significantly larger than other approaches due to re-valuations required for each shift in the underlying risk factors. \textcolor{black}{SLSM is used in evaluating the portfolio value at $T_0$ in the delta-normal approach since it is the best effort available for swaptions with Bermudan feature}  In some iterations, some of the deltas are especially large, thus leads to inflated trails. As we can see in Figure \ref{fig:Bboxplot}, the VaR calculated from the delta-normal approach is heavily right-skewed with a large number of outliers, whereas the VaR from other four approaches appears to be symmetrically distributed with little outliers. The large standard deviation also indicates that the delta-normal approach lacks statistical efficiency.

\begin{figure}
   \begin{minipage}[b]{.5\linewidth}
     \centering
     \includegraphics[width=1\textwidth]{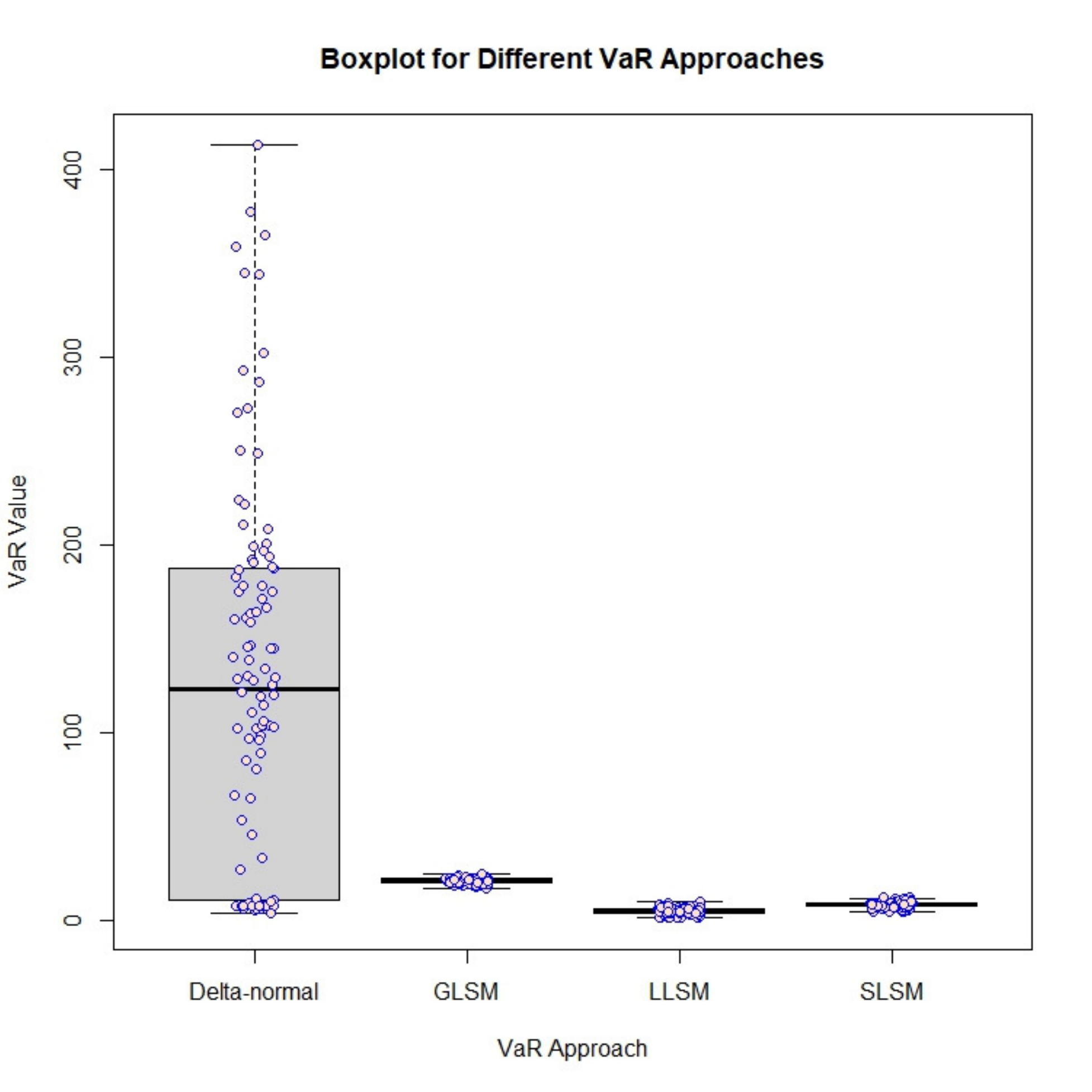}
     \subcaption{VaR boxplot for four approaches}
   \end{minipage}
   \hfill
   \begin{minipage}[b]{.5\linewidth}
     \centering
     \includegraphics[width=1\textwidth]{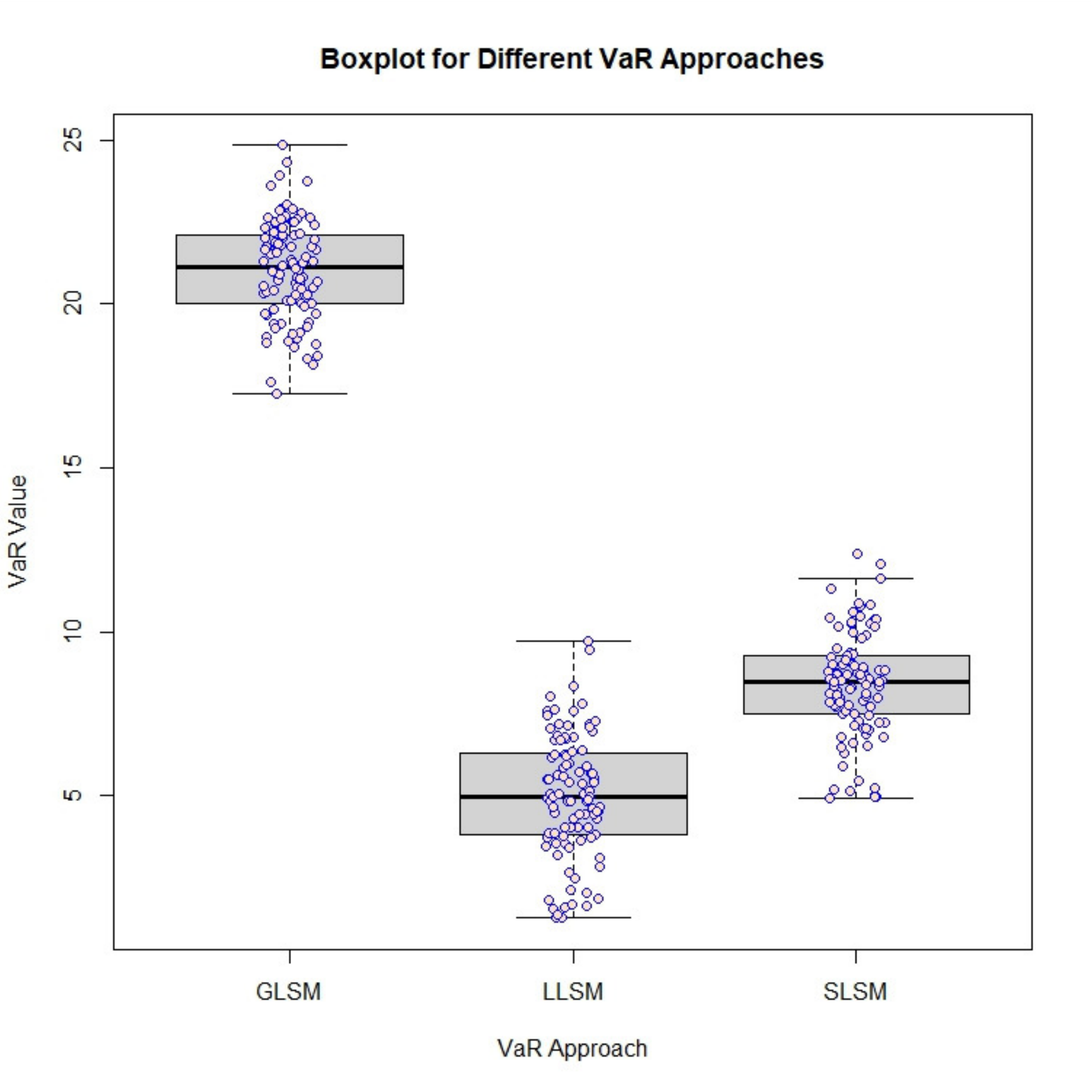}
     \subcaption{VaR boxplots for Bermudan Swaption}
   \end{minipage}
\caption{VaR Boxplot for Bermudan Swaption}
\label{fig:Bboxplot}
\end{figure}

\begin{table}[]\footnotesize
\centering
\caption{Value of Bermudan Swaption}
\label{table:value of Bermudan swaption}
\begin{tabular}{llll|lll}
\hline
    Time & \multicolumn{3}{c|}{$T_2$=year 2} & \multicolumn{3}{c}{$t_1$=day 10}\\

    & Mean & Median & SD
    & Mean & Median & SD  \\
    \hline
    SLSM     & 72.975    &72.946    &   0.85212 & 69.517 & 69.489 & 0.81175 \\
    GLSM     & 75.391    &72.946    &   0.86740 & 71.819 & 71.707 & 0.82638 \\
    LLSM     & 73.452    &73.426    &   0.87146 & 69.971 & 69.946 & 0.83018 \\
    \hline
\end{tabular}
\end{table}

In order to further investigate different performances of the approaches in estimating VaR, we examine valuation performance at the first tenor $T_2$ and $t_1$ and present the result in Table \ref{table:value of Bermudan swaption}. The delta-normal approach is excluded as it does not involve pricing the swaption at $t_1$ and $T_2$. Table \ref{table:value of Bermudan swaption} shows that the valuation at $T_2$ varies little among different approaches. This can be explained by Theorem \ref{thm: valuation convergency}, as well as the analytical result in \cite{Clement_etal-2001-FS}. Consistent with the belief that the fitted value of the regression with OLS estimators at $t_1$ deteriorates, the valuation of GLSM at $t_1$ is significantly different from other three approaches, which is probably an indication of poor valuation estimates at $t_1$. It is also worth mentioning that, as reported in Table \ref{table:value of Bermudan swaption}, the mean values of the swaption prices due to SLSM are close to those evaluated via LLSM. The variables selected by SLSM are chosen by experts with domain knowledge whereas LLSM can automatically include important variables in the regression model amongst a general pool of (polynomials of) covariates in an objective manner. For complicated/new products which are comprised of a vast number of underlying assets, it can be challenging even for practitioners to decide which covariates should be included in the pricing model; the LLSM procedure, on the other hand, can provide hints about which variables that are influential. In addition, although the mean values of the prices due to SLSM and LLSM agree, the corresponding distributions are different, which lead to different tail quantiles, hence the VaR estimates.

The boxplot on the right panel of Figure \ref{fig:Bboxplot} displays the distribution of VaRs estimated via SLSM, GLSM and LLSM. The difference in the distribution of VaRs based on these four approaches indicates that the model selection component in LLSM indeed has a remarkable impact on the VaR values estimated. While the delta-normal method produces highly volatile VaR estimates in Figure \ref{fig:Bboxplot}, we can also see that the estimate produced by GLSM is substantially higher than that given by LLSM.

It is natural to think that the VaR for vanilla equity options should be larger as the number of available stopping times increases. However, the actual relation between VaR and the number of stopping times is more sophisticated for swaptions because their payoff functions that are determined by a large number of dependent underlying forward rate processes. We, therefore, present Table \ref{table:VaR trend} which shows a decreasing VaR trend against the increase in the number of stopping times under our calibrated model. To seek a fair comparison, we adopt the same approach to estimate both the initial value and swaption values at $t_1$ in each column. Based on the decreasing trend observed, one may deduce that Bermudan swaption VaRs should be smaller than those of the oracle VaR of European swaptions. In Table \ref{table:VaR of Bermudan swaption}, only LLSM produces VaR estimates smaller than the oracle VaR of European swaption in Table \ref{table:VaR of European swaption}. Even there is no oracle benchmark for the study of Bermudan swaption, this observation, combined with the possible indication of poor valuation in GLSM and volatile estimates of the delta-normal approach, can justify that for the Bermudan case, LLSM still outperforms other contenders.

\begin{table}[]\footnotesize
\centering
\caption{VaR Trend for Increasing Number of Stopping Times}
\label{table:VaR trend}
\begin{tabular}{llll} \hline
Stopping Times & SLSM        & GLSM        & LLSM        \\ \hline
1              & 8.79705  & 19.7843  & 6.87493  \\
4              & 8.54725  & 21.9728  & 6.06602  \\
6              & 8.31433  & 22.8349  & 5.69590  \\
8              & 8.17318  & 22.9505  & 5.36403  \\
10             & 8.04154  & 22.9719  & 5.08469  \\
12             & 7.92492  & 22.7725  & 4.97697  \\
14             & 7.81349  & 22.6823  & 4.91529  \\
16             & 7.76497  & 22.6077  & 4.84343  \\
18             & 7.75723  & 22.5957  & 4.81491  \\ \hline
\end{tabular}
\end{table}

\section{Conclusion} \label{st:conclusion}
In this paper, we propose the LASSO Least-sqaures Monte Carlo (LLSM) approach as an extension of the Least-squares Monte Carlo (LSM) method for Value-at-Risk (VaR) evaluation of a portfolio. The introduction of LASSO in LLSM, which serves as a model selection technique, enables the proposal to handle high-dimensional and nonlinear portfolios with American features. While domain knowledge facilitates practitioners to select the influential risk factors with more confidence, LLSM offers an objective alternative which can be helpful especially for evaluating VaRs of new and complicated financial products. In this paper, we have also established the oracle properties of LLSM and developed convergence results for pricing and VaR evaluation. Numerical studies in rainbow options and swaptions show that LLSM outperforms other existing practices such as the delta-normal, delta-gamma approaches and LSM.

\textcolor{black}{Although expected shortfall (ES), as a coherent risk measure (see, for instance, \citealp{Gourieroux_Jasiak-2002}), will be implemented in Basel III, we would like to emphasize that an accurate, reliable estimate of VaR is an essential intermediate step for a sound ES estimation. Despite the fact that VaR will play a comparatively lesser role in risk management for the banking industry, it should be stressed that Solvency II, which is the current supervisory framework that has been enforced since 2016 for the insurance industry, makes use of VaR to calculate solvency capital requirement (SCR). On the other hand, as discussed in \cite{Kou_Peng-2016-OR}, the only type of risk measures that satisfy a set of economic axioms for the Choquet expected utility and the statistical property of general elicitability (i.e., there exists an objective function such that minimizing the expected objective function yields the risk measure) is the median shortfall, which is the median of tail loss distribution and is equivalent to the VaR at a higher confidence level. The use of VaR, therefore, does have its merits.}

\textcolor{black}{There are several possible extensions to this paper. Firstly, it is plausible to include historical simulation (HS) or filtered historical simulation (FHS), which are common practices in computing capital requirements in banking industry; see, for example, \cite{Gurrola-Perez_Murphy-2015}, in our framework. Secondly, our discussion on VaR can also be extended to ES. Dantzig selector (see \citealp{Candes_Tao-2007-AoS}) can also shown to be another feasible variable selection method. We shall discuss the corresponding treatment in a separate paper. Thirdly, } 

since the bias term dominates the inaccuracy of LLSM, we can reduce the estimation bias via an extra layer of extensive simulation. As $100(1-\alpha)\%$ $t_1$ VaR is directly affected by the estimate of the $\alpha$ smallest $U_{t_1}$, a more accurate estimate of the quantile will be helpful to improve the performance of LLSM. After getting estimates of $U_{t_1}$ for $N$ scenarios, we can perform intensive simulation to obtain a more accurate estimate of the $\alpha$ smallest $U_{t_1}$. This can be done by first finding the values of underlying assets corresponding to the $\alpha$ smallest estimate of $U_{t_1}$ as initialization, then intensively simulate $N_2$ sample paths under $\mathbb{Q}$ measure. A better estimate of the $\alpha$ smallest $U_{t_1}$ can be found by averaging the discounted payoffs at maturity. We have obtained promising preliminary results for this so-called the Intensive Lasso Least-squares Monte Carlo (ILLSM) approach. Further investigations will be discussed in a separate paper.

\section*{Acknowledgement}
The authors would like to thank the editor, associate editor and the two anonymous referees for their constructive comments that substantially improve the manuscript. The second author is in part financially supported by Hong Kong Research Grant Council research grants ECS-24300514 and GRF-14317716. 

\section*{Appendix A: Proofs of the convergence results}
This appendix contains the proofs for the convergence results discussed in Sections \ref{sst:valuation proof} and \ref{sst:VaR proof}.

\subsection*{A.1 Proof of Theorem \ref{thm: valuation convergency}}\label{sect:A1}

To prove Theorem \ref{thm: valuation convergency}, we need the following four lemmas.
\begin{lemma} \label{lemma 1.2}
Consider a linear regression model $Y=X^\top a+\varepsilon.$ If we have $n$ observations, let
$y=(y_1,\ldots,y_n)^\top$, $y^{m}=(y^{m}_1,\ldots,y^{m}_n)^\top$, $x_i=(x_{1i},\ldots,x_{pi})^\top$, $x=(x_1,\ldots,x_n)$, $x^{(j)}=(x_{j1},x_{j2},\ldots,x_{jn})^\top$, $a=(a_1,\ldots,a_p)^\top$, $\varepsilon=(\varepsilon_1,\ldots,\varepsilon_n)^\top$. $x_i$, $y_i$, $y_i^{m}$ are realizations of random variables $X$, $Y$, $Y^{m}$, where $i=1,\ldots,n$. Define
$$\hat{a}_n^{m}:=\argmin_{\alpha \in \rm I\!R^P}\{ \sum_{i=1}^n(y_i^{[M]}-x_i^\top\alpha)^2+\lambda\|\alpha\|_1\}^2.$$
Assume $\varepsilon_1,\ldots,\varepsilon_n$ are i.i.d. with $E\varepsilon_1=0$, $E|\varepsilon_1|<\infty$, $y_i^{[m]}\overset{a.s.} \to y_i$ as $m \to \infty$. If there exists a non-singular matrix $C$ such that $\frac{1}{n}\sum_{i=1}^n x_ix_i^\top \to C$ as $n \to \infty$,  $\frac{\lambda}{n} \to 0$, then $\hat{a}_n^{m} \overset{a.s.} \to a$ as $n \to \infty$ and $m \to \infty$.
\end{lemma}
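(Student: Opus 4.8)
The plan is to read this as a consistency statement for a LASSO estimator whose regressand $y_i^{[m]}$ is only an almost-sure approximation of the true response $y_i$, and to drive the argument with the standard convexity technique for convex M-estimators. Since the penalised criterion is nonnegative, the outer square in the definition of $\hat{a}_n^m$ is a monotone reparametrisation and can be discarded; I work with $V_n^m(\alpha) = \sum_{i=1}^n (y_i^{[m]} - x_i^\top\alpha)^2 + \lambda\|\alpha\|_1$. Because the model is $Y = X^\top a + \varepsilon$, I would first recentre at the truth by writing $y_i^{[m]} = x_i^\top a + \varepsilon_i + \delta_i^m$, where $\delta_i^m := y_i^{[m]} - y_i \to 0$ a.s. as $m \to \infty$, and substituting $u = \alpha - a$.

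A direct expansion then produces the recentred, rescaled criterion
$$\frac{1}{n}\big[V_n^m(a+u) - V_n^m(a)\big] = u^\top \Big(\tfrac{1}{n}\sum_i x_i x_i^\top\Big) u - 2 u^\top \Big(\tfrac{1}{n}\sum_i (\varepsilon_i + \delta_i^m) x_i\Big) + \tfrac{\lambda}{n}\big(\|a+u\|_1 - \|a\|_1\big),$$
and I would identify its pointwise limit term by term. The quadratic form converges to $u^\top C u$ by hypothesis; the penalty contribution vanishes since $\|a+u\|_1 - \|a\|_1$ is bounded for fixed $u$ while $\lambda/n \to 0$; and the noise cross-term $\tfrac{1}{n}\sum_i \varepsilon_i x_i \to 0$ a.s. by the strong law of large numbers, using $E\varepsilon_1 = 0$, the independence of $\varepsilon$ and $X$, and $E|X| < \infty$ (which follows from the existence of $C = E[XX^\top]$). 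This leaves the approximation-bias term $\tfrac{1}{n}\sum_i \delta_i^m x_i$: letting $n \to \infty$ first, the SLLN sends it to $E[(Y^{[m]} - Y)X]$, and letting $m \to \infty$ afterwards drives this expectation to $0$ by dominated convergence, since $y^{[m]} \to y$ a.s. Hence, in the iterated limit, the recentred criterion converges pointwise (a.s.) to $g(u) = u^\top C u$, which is strictly convex because $C$ — a nonsingular limit of positive semidefinite Gram matrices — is positive definite, and is therefore uniquely minimised at $u = 0$.

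Finally I would invoke the convexity lemma for M-estimators: pointwise convergence of convex criteria to a limit with a unique minimiser forces the minimisers to converge. As $u \mapsto \tfrac{1}{n}[V_n^m(a+u) - V_n^m(a)]$ is convex with minimiser exactly $\hat{a}_n^m - a$, the lemma yields $\hat{a}_n^m - a \to 0$, i.e. $\hat{a}_n^m \to a$ almost surely.

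The hard part will be the interaction of the double limit with the bias term $\tfrac{1}{n}\sum_i \delta_i^m x_i$. Unlike the mean-zero noise term, its population counterpart $E[(Y^{[m]} - Y)X]$ is generally nonzero for fixed $m$, so one cannot expect consistency at a fixed approximation level and must genuinely exploit the order of the limits (or a diagonal/joint-limit refinement) together with the almost-sure convergence $y^{[m]} \to y$ to annihilate this term. The remaining work is bookkeeping: confirming enough integrability of $Y^{[m]} - Y$ for the SLLN and dominated convergence to apply, and checking that the convexity lemma may legitimately be run along the chosen sequence of limits.
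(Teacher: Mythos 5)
Your recentred expansion is exactly the paper's: writing $u=\alpha-a$, the paper's proof works with
$f_n(u)=u^\top C_n u-2W_n^\top u-2V_n^\top u+\frac{\lambda}{n}\left(\|u+a\|_1-\|a\|_1\right)$,
where $W_n=\frac{1}{n}\sum_i x_i\varepsilon_i$ is your noise cross-term and $V_n=\frac{1}{n}\sum_i x_i(y_i^{m}-y_i)$ is your bias cross-term. Where you diverge is the concluding device. You pass to a pointwise almost-sure limit $u^\top Cu$ and invoke the convexity lemma for M-estimators; the paper never takes a population limit of any of these terms. Instead it bounds $W_n^\top u$ and $V_n^\top u$ on the events $\mathscr{T}=\{\max_j \frac{1}{n}|x^{(j)\top}\varepsilon|\leq\lambda_0\}$ and $\mathscr{T}_2=\{\max_j \frac{2}{n}|x^{(j)\top}(y^{m}-y)|\leq\varepsilon^*\}$, lower-bounds $f_n(u)$ by $\|u\|\,(\gamma_{0,n}\|u\|-2\lambda_0\sqrt{p}-2\varepsilon^*\sqrt{p}-\frac{\lambda}{n}\sqrt{p})$ using the smallest eigenvalue $\gamma_{0,n}$ of $C_n$, concludes that on $\mathscr{T}\cap\mathscr{T}_2$ the minimiser must lie in a ball of radius $(6\lambda_0+4\varepsilon^*)\sqrt{p}/\gamma_{0,n}$ around $a$, and finishes with Borel--Cantelli and the arbitrariness of $\lambda_0,\varepsilon^*$.

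This difference is not cosmetic; it is precisely where your proposal has genuine gaps. First, your SLLN step for the bias term, $\frac{1}{n}\sum_i\delta_i^{m}x_i\to \mathbf{E}[(Y^{m}-Y)X]$, needs $(\delta_i^{m},x_i)$ to be i.i.d. across $i$ with integrable product; the lemma assumes only $y_i^{m}\to y_i$ a.s., and in the application for which this lemma is built (Lemma~\ref{lemma 1.4}) the approximating responses are $Z^{[i]}_{\tau_{j+1}^{[i,M,N]}}$, whose errors depend on coefficients fitted from \emph{all} $N$ paths and hence are not independent across $i$ --- there is no population mean to converge to. The paper's event-based bound on $V_n^\top u$ sidesteps this entirely: it only needs the realized cross-term to be small, not its expectation to exist. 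Second, your argument as run delivers only the iterated limit: for fixed $m$ the convexity lemma gives $\hat{a}_n^{m}\to a+C^{-1}\mathbf{E}[\delta^{m}X]$ as $n\to\infty$, and then you send $m\to\infty$. Downstream, however, the lemma is used diagonally ($m$ and $n$ both driven by $N$), and iterated limits do not yield diagonal ones without uniformity --- exactly the point you flag as ``the hard part'' and leave open. The paper's localisation keeps both error events in play simultaneously, so the radius controls $\|\hat{a}_n^{m}-a\|$ whenever both hold, which is what a joint statement requires. A smaller, structural point: the convexity-lemma route needs strict convexity of the limit ($C$ nonsingular), so unlike the paper's localisation it cannot be recycled for the compatibility-condition companion result (Lemma~\ref{lemma 1.5}), where the Gram matrix may be singular and the paper simply replaces the eigenvalue bound by the compatibility inequality restricted to $\|u_{S_0}\|$.
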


\noindent \begin{proof}
Recall that
\begin{align*}
\hat{a}_n^{m}&=\argmin_{\alpha \in \rm I\!R^P}\big \{\sum_{i=1}^n (y_i^{m}-y_i+y_i-x_i^\top a+x_i^\top a-x_i^\top\alpha)^2+\lambda\|\alpha\|_1 \big \} \\
               &=\argmin_{\alpha \in \rm I\!R^P}\big \{\sum_{i=1}^n (y_i^{m}-y_i+\varepsilon_i+x_i^\top(a-\alpha))^2+\lambda\|\alpha\|_1 \big \}.
\end{align*}
Hence, one can write
\begin{eqnarray*}
\hat{a}_n^{m}-a & = & \argmin_{u \in \rm I\!R^P}\left\{\sum_{i=1}^n ((y_i^{m}-y_i)^2+\varepsilon_i^2+(x_i^\top u)^2+2\varepsilon_i(y_i^{m}-y_i)\right.\\
                  &   & \left.-2(y_i^{m}-y_i)x_i^\top u-2\varepsilon_ix_i^\top u)+\lambda\|u+a\|_1\right\}.
\end{eqnarray*}

Define $C_n=\frac{1}{n}\sum_{i=1}^n x_ix_i^\top $, $W_n=\frac{1}{n}\sum_{i=1}^n x_i\varepsilon_i$, $V_n=\frac{1}{n}\sum_{i=1}^n x_i(y_i^{m}-y_i)$ and discard terms which do note involve $u$, we get
\begin{eqnarray*}
\hat{a}_n^{m}-a&=&\argmin_{u \in \rm I\!R^P} \big \{ u^\top C_nu-2W_n^\top u-2V_n^\top u+\frac{\lambda}{n}(\|u+a\|_1-\|a\|_1)   \big \} \\
& \buildrel\triangle\over = & \argmin_{u \in \rm I\!R^P} f_n(u).
\end{eqnarray*}
Let $\gamma_{0,n}$ to be the smallest eigenvalue of $C_n$, $\gamma_0$ to be the smallest eigenvalue of $C$. Then $\gamma_{0,n} \to \gamma_0$ as $n \to \infty$, where $\gamma_0>0$.
Write $\|u\|=\sqrt{\sum_{j=1}^pu_j^2}=\|u\|_2$, which is equivalent to $\ell_2$ norm. If we define
\begin{align*}
\mathscr{T}&=\big \{ \max_{1\leq j\leq p}\frac{1}{n}|x^{(j)T}\varepsilon| \leq \lambda_0   \big \}=\big \{ \max_{1\leq j\leq p} \frac{1}{n} |\sum_{i=1}^n x_{ji}\varepsilon_i| \leq \lambda_0   \big \}, \\
\mathscr{T}_2&=\big \{ \max_{1\leq j\leq p}\frac{2}{n}|x^{(j)T}(y^{m}-y)| \leq \varepsilon^*   \big \}=\big \{ \max_{1\leq j\leq p} \frac{2}{n} |\sum_{i=1}^n x_{ji}(y_i^{m}-y_i)| \leq \varepsilon^*   \big \},
\end{align*}
then on the set $\mathscr{T} \cap \mathscr{T}_2$, we have
\begin{align*}
W_n^\top u=\frac{1}{n}(\sum_{i=1}^nx_i\varepsilon_i)^\top u \leq \lambda_0\sqrt{p} \|u\|,\quad
&V_n^\top u \leq \varepsilon^*\sqrt{p}\|u\|, \\
u^\top C_nu\geq \gamma_{0,n}\|u\|^2, \quad
&\frac{\lambda}{n}(\|u+a\|_1-\|a\|_1)\leq \frac{\lambda}{n}\|u\|_1 \leq \frac{\lambda}{n}\sqrt{p}\|u\|.
\end{align*}
It follows that
\begin{eqnarray*}
f_n(u) &\geq & \gamma_{0,n}\|u\|^2-2\lambda_0\sqrt{p}\|u\|-2\varepsilon^*\sqrt{p}\|u\|-\frac{\lambda}{n}\sqrt{p}\|u\|\\
& = & \|u\|(\gamma_{0,n}\|u\|-2\lambda_0\sqrt{p}-2\varepsilon^*\sqrt{p}-\frac{\lambda}{n}\sqrt{p}).
\end{eqnarray*}
Fix $\lambda_0 \in (0,1)$, $\varepsilon^* \in (0,1)$. Since $\frac{\lambda}{n}=o(1)$ and 
by Lemma 3.1 of \cite{Chatterjee_Lahiri-2011-Sankhy}
, $\frac{1}{n}\sum_{i=1}^nx_i\varepsilon_i \overset{p}\to 0$, there exists $n_0$ such that $\forall n \geq n_0$, $\frac{\lambda}{n} \leq \lambda_0$, $\gamma_{0,n}>\frac{1}{2}\gamma_0>0$.
On the set $\mathscr{T} \cap \mathscr{T}_2$, for any $u \in \rm I\!R^P$ with $\|u\|>\frac{(6\lambda_0+4\varepsilon^*)\sqrt{p}}{\gamma_{0,n}}$, it follows that
\begin{align*}
f_n(u)&\geq \|u\|(\gamma_{0,n}\|u\|-2\lambda_0\sqrt{p}-2\varepsilon^*\sqrt{p}-\lambda_0\sqrt{p})\geq \gamma_{0,n}\frac{\|u\|^2}{2} > 0.
\end{align*}
Since $f_n(0)=0$, it follows that for $n\geq n_0$, the minimum of $f_n(0)$ cannot be attained in the set $\{u:\enspace \|u\|>\frac{(6\lambda_0+4\varepsilon^*)\sqrt{p}}{\gamma_{0,n}}\}$, whenever $\mathscr{T} \cap \mathscr{T}_2$ holds. Hence, $\forall n\geq n_0$, $\mathscr{T} \cap \mathscr{T}_2$ implies that
\begin{align*}
\hat{a}_n^{m}-a&=\argmin_{u}f_n(u) \in \{u:\enspace \|u\|\leq\frac{(6\lambda_0+4\varepsilon^*)\sqrt{p}}{\gamma_{0,n}}\}.
\end{align*}
In particular,
\begin{eqnarray*}
&& \sum_{m=1}^\infty \Pr \left\{\|\hat{a}_n^{m}-a\|>\frac{(6\lambda_0+4\varepsilon^*)\sqrt{p}}{\gamma_{0,n}}~~i.o.\right\}\\
& \leq & \sum_{m=1}^\infty \Pr\{(\mathscr{T} \cap \mathscr{T}_2^{m})^c~~i.o. \} \\
& \leq & \sum_{m=1}^\infty \Pr\{\mathscr{T}^c~~i.o.\}+\sum_{m=1}^\infty \Pr\{(\mathscr{T}_2^{m})^c ~~i.o.\} \\
& = & \sum_{m=1}^\infty \Pr\{(\mathscr{T}_2^{m})^c ~~ i.o.\} <\infty.
\end{eqnarray*}
Since $\lambda_0$ and $\varepsilon^* \in (0,\infty)$ are arbitrary, the proof is completed.
\end{proof}~\\

\begin{lemma} \label{lemma 1.3}
If, for $k=j,\ldots,L-1$
, $a_k^{[M,N]} \overset{a.s.} \to a_k^{[M]}$ as $N \to \infty$ and $\Pr\{a_k^{[M]}\cdot L^{[M]}(X_k)=Z_k\}=0$, then for $i=1,2,\ldots,N$, $Z_{\tau_j^{[i,M,N]}}^{[i]} \overset{a.s} \to Z_{\tau_j^{[i,M]}}^{[i]}$.
\end{lemma}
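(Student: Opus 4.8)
The plan is to proceed by backward induction on $j$, running from $j=L$ down to the target index, and to compare the two payoff sequences path by path. For a fixed path index $i$ the realized covariates $X_k^{[i]}$ are frozen and only the estimated coefficients vary with $N$; the hypothesis $a_k^{[M,N]}\overset{a.s.}{\to}a_k^{[M]}$ then gives $a_k^{[M,N]}\cdot L^{[M]}(X_k^{[i]})\overset{a.s.}{\to}a_k^{[M]}\cdot L^{[M]}(X_k^{[i]})$ for each $k\in\{j,\ldots,L-1\}$. The base case $j=L$ is immediate, since both stopping times are defined to equal $L$, so the two payoffs coincide identically. For the inductive step I would assume the conclusion at level $j+1$, namely $Z_{\tau_{j+1}^{[i,M,N]}}^{[i]}\overset{a.s.}{\to}Z_{\tau_{j+1}^{[i,M]}}^{[i]}$, and establish it at level $j$.

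First I would unfold the one-step recursions defining $\tau_j^{[i,M,N]}$ and $\tau_j^{[i,M]}$, writing each payoff as the indicator-weighted sum of the immediate payoff $Z_j^{[i]}$ (when the corresponding rule prescribes stopping) and the continuation payoff at $\tau_{j+1}$ (otherwise). Subtracting, the difference $Z_{\tau_j^{[i,M,N]}}^{[i]}-Z_{\tau_j^{[i,M]}}^{[i]}$ splits into two parts. On the event where the two rules make the same decision, the difference is either exactly $0$ (both stop) or equals $Z_{\tau_{j+1}^{[i,M,N]}}^{[i]}-Z_{\tau_{j+1}^{[i,M]}}^{[i]}$ (both continue), the latter tending to $0$ almost surely by the induction hypothesis. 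On the complementary \emph{disagreement} event, the difference is a single finite payoff gap, such as $Z_j^{[i]}-Z_{\tau_{j+1}^{[i,M]}}^{[i]}$, multiplied by the indicator that the two rules disagree. It therefore suffices to show that this disagreement indicator vanishes almost surely.

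The disagreement event at step $j$ is precisely the event that $Z_j^{[i]}$ lies strictly between $a_j^{[M,N]}\cdot L^{[M]}(X_j^{[i]})$ and $a_j^{[M]}\cdot L^{[M]}(X_j^{[i]})$, and this is where the crux of the argument lies. Since the two proxies converge to each other almost surely while the assumption $\Pr\{a_j^{[M]}\cdot L^{[M]}(X_j)=Z_j\}=0$ guarantees that $Z_j^{[i]}\neq a_j^{[M]}\cdot L^{[M]}(X_j^{[i]})$ with probability one, for almost every realization there is a strictly positive gap $|Z_j^{[i]}-a_j^{[M]}\cdot L^{[M]}(X_j^{[i]})|$. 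For all $N$ large enough the perturbed proxy $a_j^{[M,N]}\cdot L^{[M]}(X_j^{[i]})$ falls within this gap and stays on the same side of $Z_j^{[i]}$, so the two stopping decisions agree and the disagreement indicator is eventually $0$ along almost every path. The main obstacle is exactly this step: upgrading the coefficient convergence, via the non-atomicity condition, into \emph{eventual} agreement of the discrete stopping decisions for $N$ large, rather than merely convergence in probability of the indicators. Combining the vanishing boundary term with the induction hypothesis on the agreement event then yields $Z_{\tau_j^{[i,M,N]}}^{[i]}\overset{a.s.}{\to}Z_{\tau_j^{[i,M]}}^{[i]}$, which closes the induction.
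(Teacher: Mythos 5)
Your proposal is correct and follows essentially the same route as the paper: backward induction from $j=L$, decomposition of $Z_{\tau_j^{[i,M,N]}}^{[i]}-Z_{\tau_j^{[i,M]}}^{[i]}$ into the agreement event (handled by the induction hypothesis) and the disagreement event where $Z_j^{[i]}$ lies between the two continuation proxies, with that event killed by combining $a_j^{[M,N]}\overset{a.s.}{\to}a_j^{[M]}$ with the non-atomicity assumption $\Pr\{a_j^{[M]}\cdot L^{[M]}(X_j)=Z_j\}=0$. The only difference is presentational: the paper records the ``eventually no disagreement'' conclusion through Borel--Cantelli-style sums of indicators, whereas you state the same pathwise eventual-agreement argument directly.
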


\begin{proof}
For $j=L$, $Z_{\tau_T^{[i,M,N]}}^{[i]}=Z_{\tau_T^{[i,M]}}^{[i]}=Z_T^{[i]}$. Proceed by induction on j. Assume for $k=j+1, \cdot \cdot \cdot,T-1$, $Z_{\tau_k^{[i,M,N]}}^{[i]} \overset{a.s} \to Z_{\tau_k^{[i,M]}}^{[i]}$, we want to prove $Z_{\tau_j^{[i,M,N]}}^{[i]} \overset{a.s} \to Z_{\tau_j^{[i,M]}}^{[i]}$.
\begin{align*}
& \sum_{N=1}^\infty \Pr \{|Z_{\tau_j^{[i,M,N]}}^{[i]} -Z_{\tau_j^{[i,M]}}^{[i]}| < \varepsilon \}\\
  &\leq \sum_{N=1}^\infty \Pr\{ |Z_{\tau_{j+1}^{[i,M,N]}}^{[i]} -Z_{\tau_{j+1}^{[i,M]}}^{[i]}| < \varepsilon\} \\
  &+\sum_{N=1}^\infty \indicator_{\{a_j^{[M]}\cdot L^{[M]}X_j^{[i]})\leq Z_j^{[i]}<a_j^{[M,N]}\cdot L^{[M]}(X_j^{[i]})\}} \\
  &+\sum_{N=1}^\infty \indicator_{\{a_j^{[M,N]}\cdot L^{[M]}(X_j^{[i]})\leq Z_j^{[i]}<a_j^{[M]}\cdot L^{[M]}(X_j^{[i]})\}} \\
  &<\infty
\end{align*}
because the first term is finite by induction. The second term is bounded by
$$\sum_{N=1}^\infty \indicator_{\{|Z_j^{[i]}-a_j^{[M]}\cdot L^{[M]}(X_j^{[i]})|\leq |(a_j^{[M,N]}-a_j^{[M]})\cdot L^{[M]}(X_j^{[i]})|\}},$$
which is also finite as $\Pr\{Z_j^{[i]}-a_j^{[M]}\cdot L^{[M]}(X_j^{[i]})=0\}=0$. Similarly, the third term can be proved to be finite. This completes the induction. Therefore, as $N \to \infty$, $Z_{\tau_j^{[i,M,N]}}^{[i]} \overset{a.s} \to Z_{\tau_j^{[i,M]}}^{[i]}$
\end{proof}

\begin{lemma}\label{lemma 1.4}
Assume for $j=1,2,\ldots,L-1$, $\Pr\{a_j^{[M]}\cdot L^{[M]}(X_j)=Z_j\}=0$. Furthermore, Conditions (A1)-(A4) are satisfied. Then, for the LASSO estimators $a_j^{[M,N]}$ with penalty parameter $\lambda$ such that $\lambda/N=o(1)$, we have $a_j^{[M,N]} \overset{a.s.} \to a_j^{[M]}$ as $N \to \infty$.
\end{lemma}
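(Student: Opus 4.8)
The plan is to run a backward induction on $j$ from $L-1$ down to $1$, feeding Lemma \ref{lemma 1.3} and Lemma \ref{lemma 1.2} into one another. Lemma \ref{lemma 1.2} supplies almost-sure consistency of a LASSO estimator whose response is contaminated by a vanishing perturbation, while Lemma \ref{lemma 1.3} converts coefficient convergence at the later exercise dates into convergence of the regression response at the current date. Induction is the natural device here because the response $Z_{\tau_{j+1}^{[M,N]}}$ used to define $a_j^{[M,N]}$ is built from the estimated stopping times at all dates after $T_j$, which in turn depend on the coefficients $a_k^{[M,N]}$ for $k>j$; one therefore cannot control the regression at $T_j$ before controlling those at $T_{j+1},\ldots,T_{L-1}$.

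For the base case $j=L-1$, note that $\tau_L\equiv T$ is forced and does not depend on any estimated coefficients, so the response satisfies $Z_{\tau_L^{[i,M,N]}}^{[i]}=Z_T^{[i]}=Z_{\tau_L^{[i,M]}}^{[i]}$ and carries no estimation error. The defining minimisation for $a_{L-1}^{[M,N]}$ is then an ordinary LASSO regression with exact response, and I would invoke Lemma \ref{lemma 1.2} in the degenerate case $y_i^{m}\equiv y_i$: Condition (A3) gives i.i.d.\ errors with $\mathbf{E}|\epsilon_1|<\infty$, Condition (A4) supplies the non-singular limit $C_{L-1}$ for the gram matrix $A_{L-1}^{[M,N]}$, and $\lambda/N=o(1)$ is assumed, which together yield $a_{L-1}^{[M,N]}\overset{a.s.}{\to} a_{L-1}^{[M]}$.

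For the inductive step, assume $a_k^{[M,N]}\overset{a.s.}{\to} a_k^{[M]}$ for every $k=j+1,\ldots,L-1$. This is exactly the hypothesis of Lemma \ref{lemma 1.3}, which, combined with the no-tie assumption $\Pr\{a_j^{[M]}\cdot L^{[M]}(X_j)=Z_j\}=0$, yields $Z_{\tau_{j+1}^{[i,M,N]}}^{[i]}\overset{a.s.}{\to} Z_{\tau_{j+1}^{[i,M]}}^{[i]}$ for each path $i$. I would then match the ingredients of Lemma \ref{lemma 1.2}: take the covariates $x_i=L^{[M]}(X_j^{[i]})$, so that $N^{-1}\sum_i x_i x_i^\top=A_j^{[M,N]}\to C_j$ by (A4); take the true response $y_i=Z_{\tau_{j+1}^{[i,M]}}^{[i]}=x_i^\top a_j^{[M]}+\epsilon_i$ via the regression model \eqref{eq:regression} together with (A3); and take the perturbed response $y_i^{[N]}=Z_{\tau_{j+1}^{[i,M,N]}}^{[i]}$, which converges to $y_i$ almost surely by the step just established. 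Since $a_j^{[M,N]}$ is precisely the estimator of Lemma \ref{lemma 1.2} with both the sample-size and the perturbation indices equal to $N$, the lemma delivers $a_j^{[M,N]}\overset{a.s.}{\to} a_j^{[M]}$ and closes the induction.

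The main obstacle is the coupling of the two indices. Lemma \ref{lemma 1.2} is stated for $n\to\infty$ \emph{and} $m\to\infty$ treated as formally independent, whereas the application forces the perturbation level and the sample size to coincide along the diagonal $m=n=N$. I would need to verify that the almost-sure conclusion of Lemma \ref{lemma 1.2} genuinely covers this diagonal, which amounts to checking that the Borel--Cantelli control on the perturbation set $\mathscr{T}_2$ is uniform enough in $N$, i.e.\ that the per-path convergence from Lemma \ref{lemma 1.3} keeps $V_N=N^{-1}\sum_i x_i(y_i^{[N]}-y_i)$ small simultaneously with the error term $W_N$. A related bookkeeping point is that every convergence in play is almost sure, so I would combine the exceptional null sets across the finitely many later exercise dates $k>j$ and across the $N$ paths in order to preserve a single almost-sure statement at the end.
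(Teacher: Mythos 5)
Your proposal follows essentially the same route as the paper: a backward induction in which Lemma \ref{lemma 1.2} converts almost-sure convergence of the (perturbed) regression response into convergence of the LASSO coefficients, and Lemma \ref{lemma 1.3} converts the inductive hypothesis on later-date coefficients into convergence of the response at the current date --- the only cosmetic difference being that you cite Lemma \ref{lemma 1.3} directly where the paper re-derives its four-case analysis inline. The index-coupling subtlety you flag (the diagonal $m=n=N$ in Lemma \ref{lemma 1.2}) is a genuine issue, but the paper's own proof passes over it in exactly the same way, so your argument is faithful to, and no weaker than, the original.
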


\begin{proof}
By Lemma \ref{lemma 1.2}, for $j=L-1$, $a_j^{[M,N]} \overset{a.s.} \to a_j^{[M]}$. We again proceed by induction on j. Assume for $k=j,\cdot \cdot \cdot, T-1$, $a_k^{[M,N]} \overset{a.s.} \to a_k^{[M]}$, our goal is to prove that for $k=j-1$, we still have $a_{j-1}^{[M,N]} \overset{a.s.} \to a_{j-1}^{[M]}$. By Lemma \ref{lemma 1.2}, it suffices to prove for fixed $i=1,2,\cdot \cdot \cdot, N$, as $N \to \infty$, $Z_{\tau_j^{[i,M,N]}}^{[i]} \overset{a.s.} \to Z_{\tau_j^{[i,M]}}^{[i]}.$

By definition, one can write
\begin{eqnarray*}
Z_{\tau_j^{[i,M,N]}}^{[i]}&=&F_j(a_j^{[M,N]},Z^{[i]},X^{[i]}) \\
                          &=&Z_j^{[i]}\indicator_{ \{Z_j^{[i]}\geq a_j^{[M,N]}\cdot L^{[M]}(X_j^{[i]})\}}+Z_{j+1}^{[i]}\indicator_{ \{Z_j^{[i]}< a_j^{[M,N]}\cdot L^{[M]}(X_j^{[i]})\}};\\ &\\
Z_{\tau_j^{[i,M]}}^{[i]}&=&F_j(a_j^{[M]},Z^{[i]},X^{[i]}) \\
                          &=&Z_j^{[i]}\indicator_{ \{Z_j^{[i]}\geq a_j^{[M]}\cdot L^{[M]}(X_j^{[i]})\}}+Z_{j+1}^{[i]}\indicator_{ \{Z_j^{[i]}< a_j^{[M]}\cdot L^{[M]}(X_j^{[i]})\}}\quad\text{and}\\&\\
Z_{\tau_j^{[i,M,N]}}^{[i]}-Z_{\tau_j^{[i,M]}}^{[i]}&=&Z_j^{[i]}\big (\indicator_{\{Z_j^{[i]}\geq a_j^{[M,N]}\cdot L^{[M]}(X_j^{[i]})\}}-\indicator_{\{Z_j^{[i]}\geq a_j^{[M]}\cdot L^{[M]}(X_j^{[i]})\}}\big ) \\
  &&+Z_{\tau_{j+1}^{[i,M,N]}}^{[i]}\indicator_{\{Z_j^{[i]}<a_j^{[M,N]}\cdot L^{[M]}(X_j^{[i]})\}}-Z_{\tau_{j+1}^{[i,M]}}^{[i]}\indicator_{\{Z_j^{[i]}<a_j^{[M]}\cdot L^{[M]}(X_j^{[i]})\}}.
\end{eqnarray*}
By considering the following four cases:
\begin{enumerate}[(i)]
   \item
      If $Z_j^{[i]}\geq a_j^{[M,N]}\cdot L^{[M]}(X_j^{[i]})$ and $Z_j^{[i]}\geq a_j^{[i]}\geq a_j^{[M]}\cdot L^{[M]}(x_j^{[i]})$, $|Z_{\tau_j^{[i,M,N]}}^{[i]}-Z_{\tau_j^{[i,M]}}^{[i]}|=0; $	
   \item
      If $Z_j^{[i]}< a_j^{[M,N]}\cdot L^{[M]}(X_j^{[i]})$ and $Z_j^{[i]}\geq a_j^{[i]}< a_j^{[M]}\cdot L^{[M]}(x_j^{[i]})$, $|Z_{\tau_j^{[i,M,N]}}^{[i]}-Z_{\tau_j^{[i,M]}}^{[i]}|=|Z_{\tau_{j+1}^{[i,M,N]}}^{[i]}-Z_{\tau_{j+1}^{[i,M]}}^{[i]}|;$
   \item
      If $a_j^{[M]}\cdot L^{[M]}(X_j^{[i]}) \leq Z_j^{[i]}<a_j^{[M,N]}\cdot L^{[M]}(X_j^{[i]})$,
$|Z_{\tau_j^{[i,M,N]}}^{[i]}-Z_{\tau_j^{[i,M]}}^{[i]}|=|Z_j^{[i]}-Z_{\tau_{j+1}^{[i,M,N]}}^{[i]}|;$
   \item
      If  $a_j^{[M,N]}\cdot L^{[M]}(X_j^{[i]}) \leq Z_j^{[i]}<a_j^{[M]}\cdot L^{[M]}(X_j^{[i]})$,
$|Z_{\tau_j^{[i,M,N]}}^{[i]}-Z_{\tau_j^{[i,M]}}^{[i]}|=|Z_j^{[i]}-Z_{\tau_{j+1}^{[i,M]}}^{[i]}|,$
\end{enumerate}

we can write
\begin{eqnarray*}
   \sum_{N-1}^\infty \Pr\{|Z_{\tau_j^{[i,M,N]}}^{[i]}-Z_{\tau_j^{[i,M]}}^{[i]}|>\varepsilon\} 
  &\leq &\sum_{N-1}^\infty \Pr\{|Z_{\tau_{j+1}^{[i,M,N]}}^{[i]}-Z_{\tau_{j+1}^{[i,M]}}^{[i]}|>\varepsilon\} \\
  &&+\sum_{N=1}^\infty\indicator_{\{a_j^{[M]}\cdot L^{[M]}(X_j^{[i]})\leq Z_j^{[i]}<a_j^{[M,N]}\cdot L^{[M]}(X_j^{[i]})\}} \\
  &&+\sum_{N=1}^\infty\indicator_{\{a_j^{[M,N]}\cdot L^{[M]}(X_j^{[i]})\leq Z_j^{[i]}<a_j^{[M]}\cdot L^{[M]}(X_j^{[i]})\}} \\
  &\buildrel\triangle\over = &I_1+I_2+I_3.
\end{eqnarray*}
By Lemma \ref{lemma 1.3} and $a_{j+1}^{[M,N]} \overset{a.s.} \to a_{j+1}^{[M]}$, $I_1<\infty$.
\begin{align*}
I_2+I_3 &\leq \sum_{N=1}^\infty \indicator_{\{|Z_j^{[i]}-a_j^{[M]}\cdot L^{[M]}(X_{j+1}^{[i]})|\leq |a_j^{[M,N]}-a_j^{[M]}||L^{[M]}(X_j^{[i]})|\}} <\infty.
\end{align*}
Since $a_j^{[M,N]} \overset{a.s.} \to a_j^{[M]}$, $\Pr\{Z_j=a_j^{[M]}\cdot L^{[M]}(X_j)\}=0$, we conclude that  $Z_{\tau_j^{[i,M,N]}}^{[i]} \overset{a.s.} \to Z_{\tau_j^{[i,M]}}^{[i]}$. This completes the induction.
\end{proof}

\begin{lemma} \label{lemma 1.5}
Consider a linear regression model: $Y=X^\top a+\epsilon$. If we have $n$ observations, let
$y=(y_1,\ldots,y_n)^\top$, $x_i=(x_{1i},\ldots,x_{pi})^\top$, $x=(x_1,\ldots,x_n)$, $x^{(j)}=(x_{j1},x_{j2},\ldots,x_{jn})^\top$, $a=(a_1,\ldots,a_p)^\top$, $\varepsilon=(\varepsilon_1,\ldots,\varepsilon_n)^\top$. We also define
$$\hat{a}_n^m:=\argmin_{\alpha \in \rm I\!R^P}\left( \sum_{i=1}^n(y_i^m-x_i^\top\alpha)^2+\lambda\|\alpha\|_1\right)$$
and denote the true parameters in the regression model by $a$.
\noindent Assume $\varepsilon_1,\ldots,\varepsilon_n$ are i.i.d. with $E\varepsilon_1=0$, $E|\varepsilon_1|<\infty$, $y_i^m\overset{a.s.} \to y_i$ as $m \to \infty$. If the compatibility condition holds for $S_0$ and $\lambda$ is a suitable penalty parameters satisfying $\lambda/n \to 0$ and $\lambda=\mathcal{O}(\log p/n)$, then $\hat{a}_n^m \overset{a.s.} \to a $ as $n \to \infty$ and $m \to \infty$.
\end{lemma}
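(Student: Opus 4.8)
The plan is to reuse the reduction and the event-splitting from the proof of Lemma~\ref{lemma 1.2} and to swap out only the step that invoked the non-singularity of $C$. First I would rewrite the centred estimator in exactly the same way,
\[
\hat a_n^m - a = \argmin_{u\in\mathbb{R}^p} f_n(u),\qquad
f_n(u)=u^\top C_n u - 2W_n^\top u - 2V_n^\top u + \tfrac{\lambda}{n}\big(\|u+a\|_1-\|a\|_1\big),
\]
with $C_n=\tfrac1n\sum_i x_ix_i^\top$, $W_n=\tfrac1n\sum_i x_i\varepsilon_i$ and $V_n=\tfrac1n\sum_i x_i(y_i^m-y_i)$, and again restrict attention to the event $\mathscr{T}\cap\mathscr{T}_2$ on which the noise term and the approximation term $y_i^m-y_i$ are simultaneously small; by $y_i^m\to y_i$ a.s.\ and Lemma~3.1 of \cite{Chatterjee_Lahiri-2011-Sankhy} this event holds eventually. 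The whole point is that $V_n$ behaves as an additional, uniformly controllable noise source that can be folded into $W_n$.

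The new ingredient replaces the bound $u^\top C_n u\ge\gamma_{0,n}\|u\|^2$, which is no longer available because $C_n$ need not be (and in the intended high-dimensional regime is not) invertible. Instead I would run the standard LASSO oracle argument under Condition~(A5): from $f_n(\hat a_n^m-a)\le f_n(0)=0$ I would extract the basic inequality, show on $\mathscr{T}\cap\mathscr{T}_2$ that the error vector $u=\hat a_n^m-a$ satisfies the cone constraint $\|u_{S_0^c}\|_1\le 3\|u_{S_0}\|_1$, and then apply the compatibility inequality $\|u_{S_0}\|_1^2\le (u^\top C_n u)\,s_0/\phi_0^2$. Combining these yields an oracle-type estimate in which both the prediction error $u^\top C_n u$ and the $\ell_1$ error $\|u\|_1$ are bounded by a quantity of order $s_0/\phi_0^2$ times the squared magnitude of the combined noise-plus-approximation term, which tends to $0$ as $n,m\to\infty$. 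Because the number of basis functions $p=M$ is held fixed in this lemma, $\ell_1$ consistency immediately delivers entrywise, hence $\ell_2$, consistency of $\hat a_n^m$.

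Finally I would upgrade convergence in probability to almost-sure convergence by the same Borel--Cantelli bookkeeping used at the end of Lemma~\ref{lemma 1.2}: bounding $\sum_m\Pr\{(\mathscr{T}\cap\mathscr{T}_2^m)^c \text{ i.o.}\}$ by $\sum_m\Pr\{\mathscr{T}^c\text{ i.o.}\}+\sum_m\Pr\{(\mathscr{T}_2^m)^c\text{ i.o.}\}$, using $\Pr\{\mathscr{T}^c\text{ i.o.}\}=0$ together with the a.s.\ convergence $y^m\to y$ to make the second sum finite, and letting the tolerances $\lambda_0,\varepsilon^*\downarrow 0$.

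The step I expect to be the main obstacle is the middle one: verifying that the error vector actually lands in the compatibility cone so that Condition~(A5) can be invoked, while at the same time keeping the approximation error $V_n$ under control. Unlike the eigenvalue argument of Lemma~\ref{lemma 1.2}, the compatibility condition furnishes only a \emph{restricted} lower bound on the quadratic form, so the cone condition must be established first from the optimality of $\hat a_n^m$; reconciling this with the prescribed penalty scaling $\lambda=\mathcal{O}(\log p/n)$ and with the extra $y^m-y$ noise is where the real work lies.
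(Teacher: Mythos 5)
Your overall skeleton---the reduction to $f_n(u)$ with $C_n$, $W_n$, $V_n$, the events $\mathscr{T}\cap\mathscr{T}_2$ that fold the approximation error $y^m-y$ into the noise, and the closing Borel--Cantelli bookkeeping with $\lambda_0,\varepsilon^*\downarrow 0$---coincides with the paper's proof, which indeed recycles Lemma~\ref{lemma 1.2} nearly verbatim. Where you genuinely diverge is the middle step. The paper does \emph{not} run the basic-inequality/oracle argument you propose: it keeps the localization argument of Lemma~\ref{lemma 1.2}, simply replacing the eigenvalue bound $u^\top C_n u\geq \gamma_{0,n}\|u\|^2$ by the compatibility bound $u^\top C_n u\geq \|u_{S_0}\|_1^2\,\phi_0^2/s_0$ applied to \emph{every} $u\in{\rm I\!R}^p$, concludes that $f_n(u)>0=f_n(0)$ whenever $\|u_{S_0}\|$ exceeds the radius $(6\lambda_0+4\varepsilon^*)\sqrt{p}\,s_0/\phi_0^2$, so the minimizer must lie inside that ball, and only then invokes the cone inequality $\|u_{S_0^c}\|_1\leq 3\|u_{S_0}\|_1$ to convert $\|u_{S_0}\|$-control into $\|u\|$-control. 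Since Condition (A5) licenses the quadratic lower bound only for vectors in the cone, the paper asserts cone membership rather than deriving it; your instinct that this is the crux is exactly right, and your plan to derive cone membership from optimality is the textbook-rigorous repair.

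The difficulty is that the repair you outline does not close under the stated hypotheses, and you flag this without resolving it. To extract $\|\hat u_{S_0^c}\|_1\leq 3\|\hat u_{S_0}\|_1$ from $f_n(\hat a_n^m-a)\leq f_n(0)=0$, the standard manipulation gives $\left(\frac{\lambda}{n}-2\|W_n+V_n\|_\infty\right)\|\hat u_{S_0^c}\|_1\leq \left(\frac{\lambda}{n}+2\|W_n+V_n\|_\infty\right)\|\hat u_{S_0}\|_1$, so one needs the penalty to dominate the noise, roughly $\lambda/n\geq 4\|W_n+V_n\|_\infty$. But the lemma prescribes $\lambda=\mathcal{O}(\log p/n)$, hence $\lambda/n=\mathcal{O}(\log p/n^2)$, whereas with $p$ fixed the noise $\|W_n\|_\infty$ decays no faster than order $n^{-1/2}$ (law of the iterated logarithm under finite variance). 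Eventually, almost surely, the noise dominates the penalty, the prefactor on the left becomes negative, and the basic inequality yields no cone constraint at all: in this small-penalty regime the LASSO error behaves like an unpenalized least-squares error and need not lie in any compatibility cone. So your proposal stalls precisely at the step you identified as ``where the real work lies.'' A genuine completion must argue differently---for instance, an argmin/epi-convergence analysis showing that with vanishing penalty the LASSO limit is the minimum-$\ell_1$ least-squares solution, together with the observation that compatibility forces this to equal $a$ (any nonzero null-space direction $u$ of $C_n$ must violate the cone constraint, and then $\|a+u\|_1\geq \|a\|_1+2\|u_{S_0}\|_1>\|a\|_1$); neither your proposal nor, it should be said, the paper's own proof carries this out.
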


\begin{proof}
The proof is similar to that of Lemma \ref{lemma 1.2}. We adopt same notation used in Lemma \ref{lemma 1.2} and omit some part of the proof.  Again, observing that
$$
W_n^\top u\leq \lambda_0\sqrt{p}\|u\|,\quad
V_n^\top u\leq \varepsilon^*\sqrt{p}\|u\| \quad\text{and}\quad
u^\top C_n u\geq \|u_{S_0}\|_1^2\frac{\phi_0^2}{s_0}>0,
$$
we can write
\begin{align*}
f_n(u)&\geq \|u_{S_0}\|_1^2\frac{\phi_0^2}{s_0}-2\lambda_0\sqrt{p}\|u\|-2\varepsilon^*\sqrt{p}\|u\|-\frac{\lambda}{n}\sqrt{p}\|u\| \\
      &\geq \|u_{S_0}\|(\frac{\phi_0^2}{s_0}\|u_{S_0}\|-2\lambda_0\sqrt{p}-2\varepsilon^*\sqrt{p}-\frac{\lambda}{n}\sqrt{p}).
\end{align*}
Fix $\lambda_0 \in (0,1)$, $\varepsilon^* \in (0,1)$. Since $\lambda/n=o(1)$, there exists $n_0$ such that $\forall n\geq n_0$, $\lambda/n\leq \lambda_0$.

On the set $\mathscr{T} \cap \mathscr{T}_2$, $\forall u \in \rm I\!R^P$ with $\|u_{S_0}\|>\frac{(6\lambda_0+4\varepsilon^*)\sqrt{p}}{\phi_0^2/s_0}$,
\begin{align*}
f_n(u)&\geq\|u_{S_0}\|(\frac{\phi_0^2}{s_0}\|u_{S_0}\|-2\lambda_0\sqrt{p}-2\varepsilon^*\sqrt{p}-\lambda_0\sqrt{p}) \geq \frac{\phi_0^2}{s_0}\frac{\|u_{S_0}\|^2}{2}>0.
\end{align*}
Since $f_n(0)=0$, it follows that for $n\geq n_0$, the minimum of $f_n(0)$ cannot be obtained in the set $\{u:\|u_{S_0}\|>\frac{(6\lambda_0+4\varepsilon^*)\sqrt{p}}{\phi_0^2/s_0}\}$, whenever $\mathscr{T} \cap \mathscr{T}_2$ holds. Hence, for $n\geq n_0$, $\mathscr{T} \cap \mathscr{T}_2$ implies
\begin{align*}
\hat{a}_n^{[M]}-a&=\argmin_u f_n(u)\in \{u: \|u_{S_0}\|\leq \frac{(6\lambda_0+4\varepsilon^*)\sqrt{p}}{\phi_0^2/s_0}\}.
\end{align*}
Due to the Compatibility Condition, we can write
\begin{align*}
\|u\| &\leq \|u_{S_0}\|+\|u_{S_0^c}\| \leq 10\|u_{S_0}\|
\end{align*}
because $\|u_{S_0^c}\|_1\leq 3\|u_{S_0}\|_1$ implies $\|u_{S_0^c}\|\leq 9\|u_{S_0}\|$.
As a result,
\begin{eqnarray*}
   &&\sum_{M=1}^{\infty} \Pr\{\|\hat{a}_n^{[M]}-a\|>\frac{10(6\lambda_0+4\varepsilon^*)\sqrt{p}}{\phi_0^2/s_0}~~ i.o.\}\\
  & \leq & \sum_{M=1}^{\infty} \Pr\{\|u\|>\frac{10(6\lambda_0+4\varepsilon^*)\sqrt{p}}{\phi_0^2/s_0}~~ i.o.\} \\
  & \leq & \sum_{M=1}^{\infty} \Pr\{\|u_{S_0}\|+\|u_{S_0^c}\|>\frac{10(6\lambda_0+4\varepsilon^*)\sqrt{p}}{\phi_0^2/s_0}~~ i.o.\} \\
  & \leq & \sum_{M=1}^{\infty} \Pr\{10\|u_{S_0}\|>\frac{10(6\lambda_0+4\varepsilon^*)\sqrt{p}}{\phi_0^2/s_0}~~ i.o.\} \\
  & \leq & \sum_{M=1}^{\infty} \Pr\{(\mathscr{T}\cap\mathscr{T}_2^{[M]})^c ~~ i.o.\} < \infty.
\end{eqnarray*}
Since $\lambda_0$ and $\varepsilon^* \in (0,\infty)$ are arbitrary, this completes the proof.
\end{proof}

\begin{proof}[Proof of Theorem \ref{thm: valuation convergency}]
The proof of Theorem \ref{thm: valuation convergency} (i) can be established based on preceding lemmas \ref{lemma 1.2}-\ref{lemma 1.5}. It is equivalent to prove
$$\lim_{N \to \infty} \frac{1}{N} \sum_{i=1}^N U_j^{[i,M,N]}=\mathbf{E}(U_j^{M}|\mathscr{F}_j).$$
By the Law of large numbers (LLNs), it suffices to prove 
$$G_N\buildrel\triangle\over = \frac{1}{N}\sum_{i=1}^N \left(U_j^{[i,M,N]}-U_j^{[i,M]}\right)$$

By Lemma 3.1 of \cite{Clement_etal-2001-FS}, we can write
\begin{align*}
|G_N| &\leq \frac{1}{N}\sum_{i=1}^N  \left|U_j^{[i,M,N]}-U_j^{[i,M]}\right| \\
      &\leq \frac{1}{N}\sum_{i=1}^N \sum_{k=j}^T |z_k^{[i]}| \sum_{k=j}^{T-1} \indicator_{ \{|Z_k^{[i]}-a_k^{[M]}\cdot L^{[M]}(X_k^{[i]})|\leq |(a_k^{[M,N]}-a_k^{[M]})\cdot L^{[M]}(X_k^{[i]})|\}}.
\end{align*}
Since for $j=1,\ldots,L-1$, $a_j^{[M,N]} \overset{a.s.} \to a_j^{[M]}$. Then $\forall \varepsilon>0$,
\begin{align*}
\limsup_N|G_N| &\leq \limsup_N \frac{1}{N} \sum_{i=1}^N \sum_{k=j}^T |Z_k^{[i]}|\sum_{k=j}^{T-1}\indicator_{ \{|Z_k^{[i]}-a_k^{[M]}\cdot L^{[M]}(X_k^{[i]})|\leq |\varepsilon \cdot L^{[M]}(X_k^{[i]})|\}} \\
               &=\mathbf{E} \left\{ \sum_{k=j}^T |Z_k|\sum_{k=j}^{T-1}\indicator_{ \{|Z_k-a_k^{[M]}\cdot L^{[M]}(X_k)|\leq |\varepsilon \cdot L^{[M]}(X_k)| \}}  \right\}.
\end{align*}
The last equality follows from LLN. Let $\varepsilon \to 0$, we obtain the convergence to zero since for $j=1,\ldots,L-1$, $\Pr\{a_j^{[M]}\cdot L^{[M]}(X_j)=Z_j\}=0$. The proof of Theorem \ref{thm: valuation convergency} (ii) follows if we substitute Lemma \ref{lemma 1.5} for Lemma \ref{lemma 1.2} in the preceding proof.
\end{proof}

\subsection*{A.2 Proof of Theorem \ref{thm:valuation advantage}}\label{sect:A2}
To define the irrepresentable condition and relevant active set, we first re-write the gram matrix $A_j^{[M,N]}$ as $A_j$, $c_{k,l}$ is the element in the $k$-th row and $l$-th column in the matrix $A_j$. Define submatrices of the gram matrix $A_j$ given an index set $S$ as
\begin{align*}
&A_{1,1}^{(j)}(S)=(c_{k,l})_{k,l \in S} &&A_{2,2}^{(j)}(S)=(c_{k,l})_{k,l \notin S} \\
&A_{1,2}^{(j)}(S)=(c_{k,l})_{k\in S,l \notin S} &&A_{2,1}^{(j)}(S)=A_{1,2}^{(j)\top}(S).
\end{align*}

The Irrepresentable Condition and the relevant active set are defined as follows: We say that the Irrepresentable Condition is met for the set $S$ with cardinality $s$, if for all vector $u_S \in \rm I\!R^s$ satisfying $\|u_S\|_\infty\leq 1$, we have
$$\|A_{2,1}(S)A_{1,1}^{-1}(S)u_S\|_\infty<1.$$ \\
In addition, relevant active set $S_0^{\text{relevant}}$ is defined as for fixed $j \in \{0,...,T-1\}$, $$S_0^{\text{relevant}}\buildrel\triangle\over=\left\{m:|a_{j,m}^{[M]}|>\lambda^{(j)}\sup_{\|u_{S_0}\|_\infty\leq 1} \|A_{1,1}^{(j)-1}(S_0)u_{S_0}\|_\infty/2 \right\},$$
where $S_0$ is the active set, $a_{j,m}^{[M]}$ is the $m$-th element of the true coefficient vector $a_j^{[M]}$.\\

The following lemma is due to Theorem 7.1 of \cite{Buhlmann_vandeGeer-2011}.
\begin{lemma} \label{lemma:active set in lasso}
Suppose the Irrepresentable Condition holds for $S_0$. Then $S_0^{\text{relevant}}\subset S(\lambda)\subset S_0$ and for $j=0,...,L-1$,
$$\|(a_j^{[M,N]})_{S_0}-(a_j^{[M]})_{S_0}\|_\infty\leq \lambda \sup_{\|u_{S_0}\|\infty\leq 1}\|\Sigma_{1,1}^{(j)-1}(S_0)u_{S_0}\|_\infty /2,$$
where $a_j^{[M,N]}$ is the LASSO estimated coefficients with penalty $\lambda$, $S_0(\lambda)=\{k,a_{j,k}^{[M,N]}\neq 0\}$.
\end{lemma}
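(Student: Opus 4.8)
The plan is to read this lemma off Theorem 7.1 of \cite{Buhlmann_vandeGeer-2011} once our notation is matched to theirs, so the real content is to trace the three conclusions---the two set inclusions and the $\ell_\infty$ estimation bound---through the Karush--Kuhn--Tucker (KKT) characterisation of the LASSO minimiser at time $T_j$. Writing $\hat\beta=a_j^{[M,N]}$ and $\beta=a_j^{[M]}$, the minimiser is characterised by the stationarity condition that the (normalised) gradient of the squared-error loss equals $\tfrac{\lambda}{2}$ times a subgradient $\hat\kappa$ of $\|\cdot\|_1$ at $\hat\beta$, where $\hat\kappa_m=\operatorname{sign}(\hat\beta_m)$ when $\hat\beta_m\neq0$ and $|\hat\kappa_m|\le1$ otherwise. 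The factor $\tfrac12$ arising from differentiating the quadratic loss is exactly what produces the constant $1/2$ in the stated bound.

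First I would run the primal--dual witness construction: solve the stationarity equation restricted to $S_0$ for a candidate vector supported on $S_0$, which gives $\hat\beta_{S_0}-\beta_{S_0}=A_{1,1}^{(j)-1}(S_0)\big(r_{S_0}-\tfrac{\lambda}{2}\hat\kappa_{S_0}\big)$, where $r_{S_0}$ is the empirical noise/gradient term at $\beta$. On the event $\mathscr{T}$ used in the proof of Lemma \ref{lemma 1.5}, the penalty scaling $\lambda=\mathcal{O}(\log M/N)$ dominates $r_{S_0}$, so the right-hand bracket is controlled in sup norm at the scale of $\lambda$; taking $\ell_\infty$ norms and using the operator-norm identity $\sup_{\|u_{S_0}\|_\infty\le1}\|A_{1,1}^{(j)-1}(S_0)u_{S_0}\|_\infty=\|A_{1,1}^{(j)-1}(S_0)\|_\infty$ then yields precisely the claimed estimation bound, with the matrix $\Sigma_{1,1}^{(j)}(S_0)$ of the statement identified with $A_{1,1}^{(j)}(S_0)$.

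Next I would establish $S(\lambda)\subseteq S_0$ by verifying strict dual feasibility off the support: substituting the restricted solution into the off-support stationarity condition expresses the would-be subgradient on $S_0^c$ through $A_{2,1}^{(j)}(S_0)A_{1,1}^{(j)-1}(S_0)u_{S_0}$, whose sup norm the Irrepresentable Condition forces strictly below $1$; this certifies the constructed vector as the unique LASSO solution and forbids any coordinate outside $S_0$ from being selected. For the reverse inclusion $S_0^{\text{relevant}}\subseteq S(\lambda)$ I would argue by contraposition: if $m\notin S(\lambda)$ then $\hat\beta_m=0$, so for $m\in S_0$ we have $|a_{j,m}^{[M]}|=|\hat\beta_m-\beta_m|\le\|\hat\beta_{S_0}-\beta_{S_0}\|_\infty$, which by the bound just derived is at most $\lambda\sup_{\|u_{S_0}\|_\infty\le1}\|A_{1,1}^{(j)-1}(S_0)u_{S_0}\|_\infty/2$; but this is exactly the threshold that excludes $m$ from $S_0^{\text{relevant}}$, so every relevant coordinate must be selected. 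Chaining the two inclusions gives $S_0^{\text{relevant}}\subset S(\lambda)\subset S_0$.

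The main obstacle is the probabilistic control of the empirical term $r_{S_0}$: I must ensure that, on a high-probability event, its sup norm is genuinely dominated by the penalty so that the restricted KKT vector closes at the required scale, and this is where the scaling $\lambda=\mathcal{O}(\log M/N)$ together with the almost-sure convergence of the responses $Z_{\tau_{j+1}^{[M,N]}}$ to $Z_{\tau_{j+1}^{[M]}}$ from Lemmas \ref{lemma 1.3}--\ref{lemma 1.4} must be invoked, since the LASSO here regresses on the estimated rather than idealised stopping times. A secondary bookkeeping point is checking that the Irrepresentable Condition is applied with the sign pattern of the true active coefficients, matching the $\sup_{\|u_{S_0}\|_\infty\le1}$ formulation stated before the lemma. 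Once these alignments are made, the conclusion follows verbatim from Theorem 7.1 of \cite{Buhlmann_vandeGeer-2011}.
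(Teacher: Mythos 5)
Your proposal is correct and takes essentially the same route as the paper: the paper's entire proof of this lemma is the citation to Theorem 7.1 of \cite{Buhlmann_vandeGeer-2011}, and your KKT/primal--dual witness argument is precisely the standard proof of that theorem, with your $A_{1,1}^{(j)}(S_0)$ identified with the $\Sigma_{1,1}^{(j)}(S_0)$ appearing in the statement. The only caveat is that the cited theorem (and hence the exact constant $\lambda/2$) is for the noiseless case, so your extra care in dominating the empirical noise term $r_{S_0}$ on the event $\mathscr{T}$ goes beyond what the paper itself records rather than conflicting with it.
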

\begin{proof}[Proof of Theorem \ref{thm:valuation advantage}]
Our proof skips some steps that are similar to the proof of Theorem 3.1 in \cite{Clement_etal-2001-FS}. It is equivalent to prove for $j=0,\ldots,L$, $$\lim_{N\to \infty}\mathbf{E}(Z_{\tau_{j}^{[M,N]}}|\mathscr{F}_j)=\mathbf{E}(Z_{\tau_j}|\mathscr{F}_j).$$

Note that the following induction holds for both $M_1$ and $M$ until specification. For $j=L$, $\tau_T^{[M,N]}=\tau_T=T$ and $\mathbf{E}(Z_{\tau_{j}^{[M,N]}}|\mathscr{F}_j)=\mathbf{E}(Z_{\tau_j}|\mathscr{F}_j)$. Assume $\lim_{N\to \infty}\mathbf{E}(Z_{\tau_{k}^{[M,N]}}|\mathscr{F}_k) =\mathbf{E}(Z_{\tau_k}|\mathscr{F}_k)$ holds for $k=j+1$, we want to prove it also holds for $k=j$.
\begin{eqnarray*}
\mathbf{E}(Z_{\tau_j^{[M,N]}}|\mathscr{F}_j)
&=&\frac{1}{N}\sum_{i=1}^N Z_{\tau_j^{[i,M,N]}}^{[i]} \\
&=&\frac{1}{N}\sum_{i=1}^N \left[ Z_j^{[i]} \indicator_{\{Z_j^{[i]}\geq a_j^{[M,N]}\cdot L^{[M]}(X_j^{[i]})  \}}+Z_{\tau_{j+1}^{[i,M,N]}}^{[i]}\indicator_{\{Z_j^{[i]}<a_j^{[M,N]}\cdot L^{[M]}(X_j^{[i]}) \}}\right]
\end{eqnarray*}
and
\begin{eqnarray*}
\mathbf{E}(Z_{\tau_j}^{[M,N]}-Z_{\tau_j}|\mathscr{F}_j)
&=&\{Z_j-\mathbf{E}(Z_{\tau_{j+1}}|\mathscr{F}_j)\}(\indicator_{\{Z_j\geq a_j^{[M,N]}\cdot L^{[M]}(X_j)\}}-\indicator_{\{Z_j>\mathbf{E}(Z_{\tau_{j+1}}|\mathscr{F}_j)  \}})  \\
&&+\mathbf{E}(Z_{\tau_{j+1}^{[M,N]}}-Z_{\tau_{j+1}}|\mathscr{F}_j)\indicator_{\{Z_j<a_j^{[M,N]}\cdot L^{[M]}(X_j) \}}.
\end{eqnarray*}

The second term in the RHS converges to zero by induction. Next, observe that
\begin{eqnarray*}
 && |B_j^{[M]}|\\
 & \buildrel\triangle\over = & |(Z_j-\mathbf{E}(Z_{\tau_{j+1}}|\mathscr{F}_j))(\indicator_{\{Z_j\geq a_j^{[M,N]}\cdot L^{[M]}(X_j)\}}-\indicator_{\{Z_j>\mathbf{E}(Z_{\tau_{j+1}}|\mathscr{F}_j) \}})| \\
 & \leq & |Z_j-\mathbf{E}(Z_{\tau_{j+1}}|\mathscr{F}_j)|\indicator_{\{|Z_j-\mathbf{E}(Z_{\tau_{j+1}}|\mathscr{F}_j)|\leq |a_j^{[M,N]}\cdot L^{[M]}(X_j)-\mathbf{E}(Z_{\tau_{j+1}}|\mathscr{F}_j)| \}} \\
 & \leq & |a_j^{[M,N]}\cdot L^{[M]}(X_j)-\mathbf{E}(Z_{\tau_{j+1}}|\mathscr{F}_j)| \\
 & \leq & |a_j^{[M,N]}\cdot L^{[M]}(X_j)-P_j^{[M]}(\mathbf{E}(Z_{\tau_{j+1}}|F_j))|+|P_j^{[M]}(\mathbf{E}(Z_{\tau_{j+1}}|F_j))-\mathbf{E}(Z_{\tau_{j+1}}|F_j)|.
\end{eqnarray*}
By definition of the projection $P_j(\cdot)$,
$$P_j^{[M]}(\mathbf{E}(Z_{\tau_{j+1}^{[M]}}|F_j))=a_j^{[M]}\cdot L^{[M]}(X_j).$$
Therefore, one can write
\begin{align*}
|B_j^{[M_1]}| & \leq |a_j^{[M_1,N]}\cdot L^{[M_1]}(X_j)-a_j^{[M_1]}\cdot L^{[M_1]}(X_j)| \\
 &+ |P_j^{[M_1]}(\mathbf{E}(Z_{\tau_{j+1}^{[M_1]}}|\mathscr{F}_j))-P_j^{[M_1]}(\mathbf{E}(Z_{\tau_{j+1}}|\mathscr{F}_j))| \\
 &+|P_j^{[M_1]}(\mathbf{E}(Z_{\tau_{j+1}}|\mathscr{F}_j))-\mathbf{E}(Z_{\tau_{j+1}}|\mathscr{F}_j)|.
\end{align*}

As $N \to \infty$, the first term in the R.H.S. converges to zero by Theorem \ref{thm:valuation convergence layer1}. The second term is zero by Theorem \ref{thm:valuation convergence layer1} since these $M_1$ basis functions span $L^2\{\sigma(X_j)\}$.
\begin{align*}
|B_j^{[M]}| & \leq |a_j^{[M,N]}\cdot L^{[M]}(X_j)-a_j^{[M]}\cdot L^{[M]}(X_j)|\\
            & + |P_j^{[M]}(\mathbf{E}(Z_{\tau_{j+1}^{[M]}}|\mathscr{F}_j))-P_j^{[M]}(\mathbf{E}(Z_{\tau_{j+1}}|\mathscr{F}_j))| \\
            &+|P_j^{[M]}(\mathbf{E}(Z_{\tau_{j+1}}|\mathscr{F}_j))-\mathbf{E}(Z_{\tau_{j+1}}|\mathscr{F}_j)|.
\end{align*}

As $N \to \infty$, the first term in the R.H.S. converges to zero since Theorem \ref{thm:valuation convergence layer1} is applicable to any fixed $M$. The second term is zero by Theorem \ref{thm:valuation convergence layer1} since these $M_1$ basis functions span $L^2(\sigma(X_j))$. To prove the convergence for the second term, it suffices to prove
\begin{align*}
\bigg|\mathbf{E}(Z_{\tau_{j+1}^{[M]}}|\mathscr{F}_j)-\mathbf{E}(Z_{\tau_{j+1}}|\mathscr{F}_j)\bigg|&=\bigg|\mathbf{E}(Z_{\tau_{j+1}^{[M]}}|\mathscr{F}_j)-\mathbf{E}(Z_{\tau_{j+1}^{[M_1]}}|\mathscr{F}_j)\bigg|\\
    & = \big|(a_j)_{S_0\setminus S_0(\lambda)} \cdot (L(X_j))_{S_0\setminus S_0(\lambda)}\big| \to 0 .
\end{align*}

\begin{enumerate}[(i)]
  \item
     To prove $U_j^{*[M_1,N]} \overset{a.s} \to U_j$, it remains to prove as $N\to \infty$,
$$\big|P_j^{[M_1]}(\mathbf{E}(Z_{\tau_{j+1}}|\mathscr{F}_j))-\mathbf{E}(Z_{\tau_{j+1}}|\mathscr{F}_j)\big|\to 0.$$
  \item
     To prove $U_j^{[M,N]}  \overset{a.s} \to U_j$, it remains to prove as $N\to \infty$,
$$|P_j^{[M]}(\mathbf{E}(Z_{\tau_{j+1}}|\mathscr{F}_j))-\mathbf{E}(Z_{\tau_{j+1}}|\mathscr{F}_j)|\to 0,\quad |(a_j)_{S_0\setminus S_0(\lambda)} \cdot (L_(X_j))_{S_0\setminus S_0(\lambda)}| \to 0 .$$
\end{enumerate}

By Condition (A1),
\begin{align*}
\mathbf{E}(Z_{\tau_{j+1}}|F_j)&=a_{j,1}\cdot L_1(X_j)+\ldots+a_{j,k}\cdot L_k(X_j) = (a_j)_{S_0}\cdot \big (L(X_j) \big )_{S_0}.
\end{align*}

For (i), $P_j^{[M_1]}(\mathbf{E}(Z_{\tau_{j+1}}|F_j))=(a_j)_{S_0^{[M_1]}}\cdot \big (L(X_j) \big )_{S_0^{[M_1]}}$. Recall that $S_0\subseteq S_0^{[M_1]}$. For $k \in S_0 \subseteq S_0^{[M_1]}$, $a_{j,k}^{[M_1]}=a_{j,k}\neq 0$. For $k \in S_0^\mathsf{c} \setminus (S_0^{[M_1]})^\mathsf{c}$,  $a_{j,k}^{[M_1]}=a_{j,k}\neq 0$. It follows that $
	(a_j)_{S_0}\cdot \big (L(X_j) \big )_{S_0} = (a_j)_{S_0^{[M_1]}}\cdot \big (L(X_j) \big )_{S_0^{[M_1]}}$ and $\big|P_j^{[M_1]}(\mathbf{E}(Z_{\tau_{j+1}}|F_j))-\mathbf{E}(Z_{\tau_{j+1}}|F_j)\big|0$.

For (ii), $P_j^{[M]}(\mathbf{E}(Z_{\tau_{j+1}}|F_j))=(a_j)_{S_0(\lambda)}\cdot \big (L(X_j) \big )_{S_0(\lambda)}$. There are $M$ basis functions selected from the initial regression with $M_1$ basis functions by LASSO with penalty $\lambda$ where $M\leq M_1$.
Define $$S_0^{\text{relevant}}\buildrel\triangle\over=\{k:|a_{j,k}^{[M_1]}|>\lambda^{(j)}\sup_{\|u_{S_0}\|_\infty\leq 1} \|\Sigma_{1,1}^{(j)-1}(S_0)u_{S_0}\|_\infty/2 \}$$

Then by Lemma \ref{lemma:active set in lasso}, $S_0^{\text{relevant}}\subseteq S_0(\lambda)\subseteq S_0\subseteq S_0^{[M_1]}$. For $k \in S_0(\lambda) \subseteq S_0$, $a_{j,k}^{[M]}=a_{j,k}\neq 0$. For $k \in S_0\setminus (S_0(\lambda))$,  $a_{j,k}^{[M]}=0$, $a_{j,k}\neq 0$,where $S_0\setminus (S_0(\lambda)) \subseteq S_0 \setminus S_0^{\text{relevant}}=\{k: 0<|a_{j,k}^{[M_1]}|<\lambda^{(j)}\sup_{\|u_{S_0}\|_\infty\leq 1} \|\Sigma_{1,1}^{(j)-1}(S_0)u_{S_0}\|_\infty/2 \}$.

It follows that
\begin{eqnarray*}
  |P_j^{[M]}(\mathbf{E}(Z_{\tau_{j+1}}|\mathscr{F}_j))-\mathbf{E}(Z_{\tau_{j+1}}|\mathscr{F}_j)|
  &=& (a_j)_{S_0\setminus S_0(\lambda)}\cdot \big (L(X_j) \big )_{S_0\setminus S_0(\lambda)} \\
  & \leq &\lambda^{(j)}\left\{\sup_{\|u_{S_0}\|_\infty\leq 1} \|\Sigma_{1,1}^{(j)-1}(S_0)u_{S_0}\|_\infty/2\right\}\bigg|\sum_{k\in S_0\setminus S_0(\lambda)}L_k(X_j)\bigg|  \\
  & \to& 0 \quad \text{ as } N \to \infty
\end{eqnarray*}
Since $\lambda^{(j)}\sup_{\|u_{S_0}\|_\infty\leq 1} \|\Sigma_{1,1}^{(j)-1}(S_0)u_{S_0}\|_\infty/2 \to 0$ as $N \to \infty$. The remaining term $|\sum_{k\in S_0\setminus S_0(\lambda)}L_k(X_j)|<\sum_{k\in S_0\setminus S_0(\lambda)}|L_k(X_j)|<\infty$
since $|S_0|=s_0<\infty$, $|X_j|<\infty$, $|L_k(X_j)|<\infty$ for all $k\in S_0$.
\end{proof}

\subsection*{A.3 Proof of Theorem \ref{thm: VaR convergency}}\label{sect:A3}

\begin{proof}[Proof of Theorem \ref{thm: VaR convergency}]
  We begin the proof by rewriting $\text{VaR}_j^{[M,N]}$, $\text{VaR}_j^{[M]}$ as
$$\Pr\{U_j^{[M,N]}>\text{VaR}_j^{[M,N]}\}=\Pr\{U_j^{[M]}>\text{VaR}_j^{[M]}\}=\alpha^\prime.$$
where $\alpha^\prime$ is a deterministic known constant. By Theorem \ref{thm: valuation convergency}, $U_{t_1}^{[M,N]}\overset{a.s.} \to U_{t_1}^{[M]}$ as $N \to \infty$. Denote the pdf of $U_{t_1}^{[M,N]}$ and $U_{t_1}^{[M]}$ as $g_N(u)$ and $g(u)$ respectively, then
$$\int_{-\infty}^{\text{VaR}_j^{[M,N]}}g_N(u)du=\int_{-\infty}^{\text{VaR}_j^{[M]}}g(u)du=\alpha^\prime.$$
\begin{align*}
0&=\int_{-\infty}^{\text{VaR}_j^{[M]}}g_N(u)du-\int_{-\infty}^{\text{VaR}_j^{[M]}}g(u)du+\int_{\text{VaR}_j^{[M]}}^{\text{VaR}_j^{[M,N]}}g_N(u)du \\
  &=G_n(\text{VaR}_j^{[M]})-G(\text{VaR}_j^{[M]})+\int_{\text{VaR}_j^{[M]}}^{\text{VaR}_j^{[M,N]}}g_N(u)du,
\end{align*}
where $G_N(u)$, $G(u)$ is the cdf of $U_{t_1}^{[M,N]}$, $U_{t_1}^{[M]}$. As $U_{t_1}^{[M,N]}\overset{a.s.} \to U_{t_1}^{[M]}$, we have $U_{t_1}^{[M,N]}\overset{d} \to U_{t_1}^{[M]}$, $G_N(\text{VaR}_j^{[M]}) \to G(\text{VaR}_j^{[M]})$,  $|\int_{\text{VaR}_j^{[M]}}^{\text{VaR}_j^{[M,N]}}g_N(u)du| \to 0$. We complete the proof by contradiction.

Assume $\text{VaR}_j^{[M,N]} \nrightarrow \text{VaR}_j^{[M]}$, then $\forall N \in N_+$, $\exists \epsilon_0>0$, st $|\text{VaR}_j^{[M,N]}-\text{VaR}_j^{[M]}|>\epsilon_0$. As the support set of the distribution of $\text{VaR}_j^{[M,N]}$ is tight, there exists $u_0 \in [\min(\text{VaR}_j^{[M,N]},\text{VaR}_j^{[M]}), \max(\text{VaR}_j^{[M,N]},\text{VaR}_j^{[M]})]$ such that  $g_N(u_0)>0$.

If $U_{t_1}^{[M,N]}$ is discrete,
\begin{align*}
\bigg|\int_{\text{VaR}_j^{[M]}}^{\text{VaR}_j^{[M,N]}}g_n(u)du\bigg|=\bigg|G_N(\text{VaR}_j^{[M]})-G(\text{VaR}_j^{[M]})\bigg|>0,
\end{align*}
contradiction.

If $U_{t_1}^{[M,N]}$ is continuous, $\exists \epsilon_0^*>0$, $\forall u \in (u_0-\epsilon_0^*,u_0+\epsilon_0^*)~\cap~\max(\text{VaR}_j^{[M,N]},\text{VaR}_j^{[M]})]$, $g_N(u)>u_0^*>0$,
$$\bigg|\int_{\text{VaR}_j^{[M]}}^{\text{VaR}_j^{[M,N]}}g_N(u)du\bigg|>u_0^* \min(2\epsilon_0^*,\epsilon_0)>0,$$
contradiction. Therefore, the assumption $\text{VaR}_j^{[M,N]} \nrightarrow \text{VaR}_j^{[M]}$ is not true in which case $\text{VaR}_j^{[M,N]} \to \text{VaR}_j^{[M]}$ as $N \to \infty$.
\end{proof}

\section*{A.4 Proof of Theorem \ref{thm:VaR convergency rate}} \label{sect:A4}
To prove this theorem, we first introduce the following lemma and its proof.

\begin{lemma}
\label{lemma:VaR convergence rate}
Let $\alpha_N \buildrel\triangle\over = \Pr\{U_0-U_{t_1}^{[M,N]}<-\text{VaR}_{t_1}^{[M]} \}$,
$\alpha_N^* \buildrel\triangle\over = \Pr\{U_0-U_{t_1}^{*[M,N]}<-\text{VaR}_{t_1}^{[M]} \}$. Assume conditions in Theorem \ref{thm: valuation convergency}(\text{ii}) are satisfied and Condition (A6) holds for $W$ and $W^*$ respectively, then
\begin{align*}
\alpha_N-\alpha=&-g(v)\mathcal{O}\left(\sqrt{\frac{s_0\log M}{N\phi_0^2}}\right)+\frac{d}{dv}g(v)\mathcal{O}\left(\frac{s_0\log M}{N\phi_0^2}\right)+o\left(N^{-1}\right), \\
\alpha_N^*-\alpha=&\frac{d}{dv}g(v)\mathcal{O}\left(\frac{M}{N}\right)+g(v)o\left(N^{-1}\right),
\end{align*}
where 
$\phi_0$ denotes the compatibility constant defined in the Compatibility Condition.~\\
\end{lemma}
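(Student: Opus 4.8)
The plan is to adapt the nested-simulation tail-probability expansion of \cite{Gordy_Juneja-2010-MS} and \cite{Bauer_etal-2012-ASTIN}, with the key twist that the ``inner'' error is now the regression error at $t_1$ rather than a sub-simulation error, and to substitute the LASSO and OLS estimation-error rates only at the very end. Write the estimated price as a perturbation of the projected price, $U_{t_1}^{[M,N]}=U_{t_1}^{[M]}-\bar W$ with $\bar W=W/\sqrt N$, so that the event defining $\alpha_N$ is the event defining $\alpha$ with its threshold displaced by $\bar W$. Using the joint density $g_N(u,w)$ of $(U_{t_1}^{[M]},W)$ introduced just before the statement, and writing $v$ for the relevant threshold, I would record the exact identity
\[
\alpha_N-\alpha=-\int_{-\infty}^{\infty}\int_{v}^{\,v+w/\sqrt N}g_N(u,w)\,du\,dw,
\]
which isolates everything into a single inner integral to be expanded in $w/\sqrt N$.

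I would then Taylor-expand $\int_v^{v+w/\sqrt N}g_N(u,w)\,du$ to second order in $u$ about $u=v$. Part (i) of Condition (A6) guarantees that the needed derivatives $\partial_u g_N$, $\partial_u^2 g_N$ exist, while part (ii) furnishes the dominating envelopes $p_{0,N},p_{1,N},p_{2,N}$ with $\sup_N\int|w|^r p_{i,N}(w)\,dw<\infty$ for $0\le r\le4$; these are exactly what legitimise integrating the expansion term by term against the (dependent, $N$-varying) law of $W$ and bounding the cubic remainder by $\mathcal O\!\big(N^{-3/2}\,\mathbf E|W|^3\big)=o(N^{-1})$ uniformly in $N$. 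Using that marginalising $g_N$ over $w$ returns the $N$-independent marginal $g$, the surviving terms are
\[
\alpha_N-\alpha=-g(v)\,\mathbf E\!\big[\bar W\mid U_{t_1}^{[M]}=v\big]+\tfrac12\,\tfrac{d}{dv}g(v)\,\mathbf E\!\big[\bar W^{2}\mid U_{t_1}^{[M]}=v\big]+o(N^{-1}),
\]
the first term coming from $\int w\,g_N(v,w)\,dw$ and the second from $\partial_u\!\int w^2 g_N(u,w)\,dw$ at $u=v$, with signs and numerical constants absorbed into the order symbols.

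It then remains to feed in the two moment rates. For LASSO, Theorem~\ref{thm: valuation convergency}(ii) gives $a_{t_1}^{[M,N]}\overset{as}\to a_{t_1}^{[M]}$, and the oracle inequality under the Compatibility Condition (A5) (Theorem~6.1 of \cite{Buhlmann_vandeGeer-2011}, with $\lambda\asymp\sqrt{\log M/N}$) upgrades this to the prediction-error bound $N^{-1}\sum_i\{(a_{t_1}^{[M]}-a_{t_1}^{[M,N]})\cdot L^{[M]}(X_{t_1}^{[i]})\}^2=\mathcal O\!\big(s_0\log M/(N\phi_0^2)\big)$. Since $\bar W$ is a signed average of these prediction errors, Cauchy--Schwarz gives $\mathbf E[\bar W^2]=\mathcal O\!\big(s_0\log M/(N\phi_0^2)\big)$ and $|\mathbf E[\bar W]|\le(\mathbf E[\bar W^2])^{1/2}=\mathcal O\!\big(\sqrt{s_0\log M/(N\phi_0^2)}\big)$, which yields the first displayed identity of the lemma; the remaining pricing-error component, the estimation error of $Z_{\tau_1^{[M]}}$, enters at the same order and is controlled by the valuation convergence of Section~\ref{sst:valuation proof}, so it does not alter the rate. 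For OLS the shrinkage bias vanishes, so $\mathbf E[\bar W^{*}]=0$ kills the $1/\sqrt N$ term, while the OLS prediction error gives $\mathbf E[\bar W^{*2}]=\mathcal O(M/N)$ (noise variance times the trace of the hat matrix over $N$); substituting gives the second displayed identity.

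I expect the main obstacle to be the second step: because $W$ and $U_{t_1}^{[M]}$ are dependent and the law of $W$ itself drifts with $N$, one cannot factor the expectation, and the Taylor expansion must be carried out inside the $w$-integral with the remainder controlled uniformly in $N$. Condition (A6) is precisely the hypothesis engineered for this, and the qualitative separation between the two estimators --- a surviving $\mathcal O\!\big(\sqrt{s_0\log M/(N\phi_0^2)}\big)$ bias for LASSO versus a pure $\mathcal O(M/N)$ variance term for OLS --- comes entirely from whether the first conditional moment of the prediction error is nonzero, which is the last point to verify carefully.
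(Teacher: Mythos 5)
Your proposal is correct and follows essentially the same route as the paper's own proof: the exact double-integral identity for $\alpha_N-\alpha$, a second-order Taylor expansion in $u$ whose term-by-term integration and cubic remainder are justified by Condition (A6), identification of the surviving terms as $g(v)$ times conditional moments of $W$, and substitution of the LASSO oracle rate from B\"uhlmann--van de Geer together with the valuation-convergence control of the response-estimation error, with OLS handled analogously through its $\mathcal{O}(M/N)$ prediction error. The only cosmetic difference is that you assert $\mathbf{E}[\bar W^{*}]=0$ exactly for OLS, whereas the paper (correctly, since the response $Z_{\tau_1^{[M,N]}}$ is itself estimated) retains a residual $g(v)\,o\!\left(N^{-1}\right)$ term --- a point your own discussion of the response-error component already accommodates.
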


\begin{proof}[Proof of Lemma \ref{lemma:VaR convergence rate}]
Using Taylor expansion, we can write
\begin{eqnarray*}
\alpha_N-\alpha & = & \int_{\rm I\!R}\int_{v+w/\sqrt{N}}^v g_N(u,w)dudw \\
                & = & -\int_{\rm I\!R}\frac{w}{\sqrt{N}}g_N(v,w)dw+\int_{\rm I\!R}\frac{w^2}{2N}\frac{\partial}{\partial v}g_N(v,w)dw+\mathcal{O}\left(\frac{1}{N^{3/2}}\right).
\end{eqnarray*}

The first term can be written as,
\begin{eqnarray*}
  \int_{\rm I\!R}\frac{w}{\sqrt{N}}g_N(v,w)dw
  & = &\frac{g(v)}{\sqrt{N}}\mathbf{E}(W| U_{t_1}^{[M]}=v) \\
  &=& g(v)\mathbf{E}\left\{\mathbf{E}\left[N^{-1}\sum_{i=1}^N(a_{t_1}^{[M,N]}-\tilde{a}_{t_1}^{[M,N]})\cdot L^{[M]}(X_{t_1})| U_{t_1}^{[M]}=v,X_{t_1}\right]\right\} \\
  && + g(v)\mathbf{E}\left\{\mathbf{E}\left[N^{-1}\sum_{i=1}^N(\tilde{a}_{t_1}^{[M,N]}-a_{t_1}^{[M]})\cdot L^{[M]}(X_{t_1})| U_{t_1}^{[M]}=v,X_{t_1}\right]\right\} \\
  &=& g(v)\mathcal{O}\left(\sqrt{\frac{s_0\log M}{N}}\right)+o\left(N^{-1}\right).
\end{eqnarray*}

The last equality follows from Theorem 7.7 in \cite{Buhlmann_vandeGeer-2011} and Theorem \ref{thm: valuation convergency}. Regarding the second term, we can write,

\begin{eqnarray*}
\frac{1}{2N}\int_{\rm I\!R}w^2\frac{\partial}{\partial u}g_N(v,w)dw & = & \frac{1}{2N}\frac{d}{dv}g(v)\mathbf{E}\left\{\mathbf{E}(W^2| X_{t_1}) | U_{t_1}^{[M]}=v\right\} \\
   & = & \frac{d}{dv}g(v)\mathcal{O}\left(\frac{s_0\log M}{N\phi_0^2}\right).
\end{eqnarray*}

It follows that
\begin{equation*}\alpha_N-\alpha=g(v)\mathcal{O}\left(\sqrt{\frac{s_0\log M}{N\phi_0^2}}\right)+\frac{d}{dv}g(v)\mathcal{O}\left(\frac{s_0\log M}{N\phi_0^2}\right)+o\left(N^{-1}\right).\end{equation*}
Likewise, we have
\begin{equation*}\alpha^*_N-\alpha=\frac{d}{dv}g(v)\mathcal{O}\left(\frac{M}{N}\right)-g(v)o\left(N^{-1}\right).\end{equation*}
~
\end{proof}

\begin{proof}[Proof of Theorem \ref{thm:VaR convergency rate}]

By Condition (A5), $U_{t_1}^{[M]}$ is continuous. Therefore,
$$\inf_{x\in \rm I\!R}\left\{\Pr\{U_0-U_{t_1}^{[M]}<-x\}<\alpha \right\}=\left\{ x\in \rm I\!R; \Pr\{U_0-U_{t_1}^{[M]}<-x\}=\alpha \right\}. $$

Similar to the proof of (28) in \cite{Gordy_Juneja-2010-MS}, we apply Taylor expansion to $\Pr\{U_{t_1}^{[M,N]}>v_1 \}$ in the following equation,
\begin{align*}
\alpha &=\Pr\{U_{t_1}^{[M,N]}>v \}-(v_1-v)g_N(v)-\frac{1}{2}(v_1-v)^2g^\prime_N(\tilde{v})+\mathcal{O}\left(N^{-1}\right),
\end{align*}
where $\title{v}$ is an appropriate value between $v$ and $v_1$.

By Condition (A5), $g_N^\prime(u)$ is uniformly bounded for all $v$. By Theorem \ref{lemma:VaR convergence rate},

\begin{eqnarray*}
   &   & \Pr\{U_{t_1}^{[M,N]}>v \}\\
   & = & \Pr\{U_{t_1}^{[M]}>v \}-g(v)\mathcal{O}\left(\sqrt{\frac{s_0\log M}{N\phi_0^2}}\right)+\frac{d}{dv}g(v)\mathcal{O}\left(\frac{s_0\log M}{N\phi_0^2}\right)+\mathcal{O}\left(\frac{1}{N^{3/2}}\right) \\
   & = & \alpha-g(v)\mathcal{O}\left(\sqrt{\frac{s_0\log M}{N\phi_0^2}}\right)-\frac{d}{dv}g(v)\mathcal{O}\left(\frac{s_0\log M}{N\phi_0^2}\right)+\mathcal{O}\left(\frac{1}{N^{3/2}}\right).
\end{eqnarray*}

Therefore, we have
\begin{align*}
v_1-v&=\frac{1}{g_N(v)}\left[g(v)\mathcal{O}\left(\sqrt{\frac{s_0\log M}{N\phi_0^2}}\right)-\frac{d}{dv}g(v)\mathcal{O}\left(\frac{s_0\log M}{N\phi_0^2}\right)+o\left(N^{-1}\right) \right].
\end{align*}

To derive the relation between $g_N(v_0)$ and $g(v)$, we observe that
\begin{align*}
g_N(v)-g(v)&=\int_{\rm I\!R}\big (g_N(v-\frac{w}{\sqrt{N}},w)-g_N(v,w) \big )dw \\
  &= \int_{\rm I\!R} -\frac{1}{\sqrt{N}}wg_N(\tilde{v},w)dw \\
  &=g(\tilde{v})\mathcal{O}\left(\sqrt{\frac{s_0\log M}{N\phi_0^2}}\right) ,
\end{align*}
where $\tilde{v}$ lies between $v-w/\sqrt{N}$ and $v$.

\begin{eqnarray*}
v_1-v &=& \left[\frac{1}{g(v)}-g(\tilde{v})\frac{g^\prime(v)}{g^2(v)} \mathcal{O}\left(\sqrt{\frac{s_0\log M}{N\phi_0^2}}\right)\right] \\
      & & \times \left[-g(v)\mathcal{O}\left(\sqrt{\frac{s_0\log M}{N\phi_0^2}}\right)-\frac{d}{dv}g(v)\mathcal{O}\left(\frac{s_0\log M}{N\phi_0^2}\right)+o\left(N^{-1}\right)\right] \\
      &=&\mathcal{O}\left(\sqrt{\frac{s_0\log M}{N\phi_0^2}}\right)+\left [\frac{g(\tilde{v})g'(v)}{g(v)}-\frac{1}{g(v)}\frac{d}{dv}g(v)  \right ]\mathcal{O}\left(\frac{s_0\log M}{N\phi_0^2}\right)+o\left(N^{-1}\right) .
\end{eqnarray*}

Likewise, we can prove that
$$v_1^*-v=\frac{1}{g(v)}\frac{d}{dv}g(v)\mathcal{O}\left(\frac{M}{N}\right)+o\left(N^{-1}\right).$$

If $N=o\left(\frac{M^2\phi_0^2}{s_0\log M}+2M+\frac{s_0\log M}{\phi_0^2}\right)$,
\begin{eqnarray*}&&\left|\mathcal{O}\left(\sqrt{\frac{s_0\log M}{N\phi_0^2}}\right)+\left [\frac{g(\tilde{v})g'(v)}{g(v)}-\frac{1}{g(v)}\frac{d}{dv}g(v)  \right ]\mathcal{O}\left(\frac{s_0\log M}{N\phi_0^2}\right)\right|\\&=&o\left |\frac{1}{g(v)}\frac{d}{dv}g(v)\mathcal{O}\left(\frac{M}{N}\right)\right |.\end{eqnarray*}
The desired result thus follows.
\end{proof}

\section*{Appendix B: Details of Numerical Studies} \label{sst:numerics}
This section contains the details for the numerical studies discussed in Section \ref{st:numerical study} including the data, the underlying models and their calibrated parameters.

\subsection*{B1. Settings for Rainbow Options in Section \ref{sst:rainbow option}}

To derive a benchmark utilizing existing closed form solution for pricing, we assume the underlying stock prices follow Black-Scholes Model, where the risk-free rate $r$, the volatility of each underlying stocks and the correlation between different underlying stocks remain constant from $T_0$ to $T$. Define $\rho_{ij}$ as the correlation between the $i$th and $j$th underlying stock and $\sigma_{ij}=\sigma_i^2+\sigma_j^2-2\rho_{ij}\sigma_i \sigma_j$ as the covariance. Define
\begin{eqnarray*}
\rho_{iij}&=&\frac{\sigma_i-\rho_{ij}\sigma_j}{\sigma_{ij}} \\
\rho_{ijk}&=&\frac{\sigma_i^2-\rho_{ij}\sigma_i \sigma_j-\rho_{ik}\sigma_i \sigma_k+\rho_{jk}\sigma_j \sigma_k}{\sigma_{ij}\sigma_{ik}} \\
d^\prime_2(S_0,K,\sigma^2)&=&\frac{\log(Ke^{-rT}/S_0)-\sigma^2T/2}{\sigma \sqrt{T}} \\
d^\prime_1(S_0,K,\sigma^2)&=&d^\prime_2+\sigma \sqrt{T},
\end{eqnarray*}
where $i,j,k=1,...,10$. Similar to the closed form solution of the ``call on min" rainbow option given in \cite{Johnson-1987-JFQA}, the option price at any given time $t \in [0,T]$ can be written as

\begin{eqnarray*}
c_t & = & 100\frac{S_{1t}}{S_{1,0}} N_n \bigg(d_1(\frac{S_{1t}}{S_{1,0}},K,\sigma_1^2),-d^\prime_1(\frac{S_{1t}}{S_{1,0}},\frac{S_{2t}}{S_{2,0}},\sigma_{12}^2),...,-d^\prime_1(\frac{S_{1t}}{S_{1,0}},\frac{S_{nt}}{S_{n,0}},\sigma^2_{1n}), \\
    & & \enspace -\rho_{112},-\rho_{113},...,\rho_{123},... \bigg)\\
    & + & 100\frac{S_{2t}}{S_{2,0}} N_n\bigg(d_1(\frac{S_{2t}}{S_{2,0}},K,\sigma^2_2),-d^\prime_1(\frac{S_{2t}}{S_{2,0}},\frac{S_{1t}}{S_{1,0}},\sigma^2_{12},...,-d^\prime_1(\frac{S_{2t}}{S_{2,0}},\frac{S_{nt}}{S_{n,0}},\sigma^2_{2n}), \\
    & & \enspace -\rho_{212}, -\rho_{223},...,\rho_{213},...  \bigg) \\
    & + & \cdots \\
    & + & 100\frac{S_{nt}}{S_{n,0}}N_n\bigg(d_1(\frac{S_{nt}}{S_{n,0}},K,\sigma^2_n),-d^\prime_1(\frac{S_{nt}}{S_{n,0}},\frac{S_{1t}}{S_{1,0}},\sigma^2_{1n}),...,-d^\prime_1(\frac{S_{nt}}{S_{n,0}},\frac{S_{n-1,t}}{S_{n-1,0}},\sigma^2_{n-1,n}), \\
    & & \enspace -\rho_{n1n},-\rho_{n2n},...,\rho_{n12},...)\bigg)\\
    & - & 100Ke^{-r(T-t)}N_n\bigg(d_2(\frac{S_{1t}}{S_{1,0}},K,\sigma^2_1),d_2(\frac{S_{2t}}{S_{2,0}},K,\sigma^2_2),...,d_2(\frac{S_{nt}}{S_{n,0}},K,\sigma^2_n),\rho_{12},\rho_{12},... \bigg) ,
\end{eqnarray*}
where $n=10$, $N_n(\cdot)$ is the cumulative distribution function of the $n$-dimensional standard normal distribution.

The option price at $T_0$ can thus be reduced to
\begin{eqnarray*}
c_0 & = & 100N_n \bigg(d_1(1,K,\sigma_1^2),-d^\prime_1(1,1,\sigma_{12}^2),...,-d^\prime_1(1,1,\sigma^2_{1n}),-\rho_{112},-\rho_{113},...,\rho_{123},... \bigg)\\
    & + & 100N_n\bigg(d_1(1,K,\sigma^2_2),-d^\prime_1(1,1,\sigma^2_{12},...,-d^\prime_1(1,1,\sigma^2_{2n}),-\rho_{212},-\rho_{223},...,\rho_{213},...  \bigg) \\
    & + & \cdots \\
    & + & 100N_n\bigg(d_1(1,K,\sigma^2_n),-d^\prime_1(1,1,\sigma^2_{1n}),...,-d^\prime_1(1,1,\sigma^2_{n-1,n}),-\rho_{n1n},-\rho_{n2n},...,\rho_{n12},...) \bigg)\\
    & - & 100Ke^{-rT}N_n\bigg(d_2(1,K,\sigma^2_1),d_2(1,K,\sigma^2_2),...,d_2(1,K,\sigma^2_n),\rho_{12},\rho_{12},... \bigg).
\end{eqnarray*}

Parameters in the dynamics of the underlying stocks include the risk-free rate $r$, the volatility $\sigma_i$, the drift $\mu_i$, the current price $S_{i0}$ and the correlation between different stocks $\rho_{ij}$, where $i,j=1,...,10$. They are reasonably chosen based on the observation of commonly traded stocks in the market. $500$ daily historical underlying stock prices are simulated assuming Black-Scholes as the underlying model. We set the volatility $\sigma_1$, $\sigma_2$ and the correlation between $S_1$, $S_2$ relatively large so that $S_1$ and $S_2$ represent significant variables in the regression. 

The starting historical price $S_{i,-1}$, the daily drift $\mu_i$ and the volatility $\sigma_i$ are shown in Table \ref{table:rainbow option parameters1} while the correlation matrix is presented in Table \ref{table:rainbow option parameters2}. The numerical results can be found in Section \ref{sst:rainbow option}.

\begin{table}[]
\centering
\caption{Parameters in the Underlying Model}
\label{table:rainbow option parameters1}
\begin{tabular}{lll}
\hline
$S_{i,-1}$        & $\mu_i$          & $\sigma_i$        \\ \hline
80.38723 & 1.1015E-06 & 0.0085263 \\
42.70244 & 1.5939E-06 & 0.0093093 \\
67.57745 & 3.4755E-06 & 0.0024763 \\
85.70454 & 3.8621E-05 & 0.0021646 \\
58.11831 & 8.6745E-05 & 0.0042942 \\
32.29635 & 7.4338E-05 & 0.0025601 \\
57.28909 & 9.0098E-05 & 0.0044424 \\
68.65604 & 1.1443E-05 & 0.0010326 \\
86.43502 & 7.7736E-05 & 0.0016128 \\
81.60649 & 1.2489E-05 & 0.0013172 \\ \hline
\end{tabular}
\end{table}

\begin{table}[]\footnotesize
\centering
\caption{Parameters in the Correlation Matrix}
\label{table:rainbow option parameters2}
\begin{tabular}{lllllllllll}
\hline
  & $S_1$                & $S_2$                & $S_3$ & $S_4$                & $S_5$                & $S_6$                & $S_7$                & $S_8$                & $S_9$                & $S_{10}$               \\  \hline
$S_1$  & 1.00000 & 0.55000 & 0.29311 & 0.28272 & 0.23681 & 0.33050 & 0.34773 & 0.39159 & 0.29665 & 0.23986 \\
$S_2$  & 0.55000 & 1.00000 & 0.28613 & 0.27540 & 0.37854 & 0.38001 & 0.25678 & 0.32052 & 0.26683 & 0.28365 \\
$S_3$  & 0.29311 & 0.25510 & 1.00000 & 0.31191 & 0.39619 & 0.32266 & 0.27440 & 0.26772 & 0.39976 & 0.28598 \\
$S_4$  & 0.28613 & 0.33050 & 0.27440 & 1.00000 & 0.25510 & 0.23745 & 0.22811 & 0.25273 & 0.22504 & 0.35783 \\
$S_5$  & 0.28273 & 0.38001 & 0.22811 & 0.25273 & 1.00000 & 0.24183 & 0.25727 & 0.29702 & 0.30817 & 0.33151 \\
$S_6$  & 0.27540 & 0.32266 & 0.25727 & 0.29702 & 0.39976 & 1.00000 & 0.25681 & 0.21482 & 0.32993 & 0.20017 \\
$S_7$  & 0.31191 & 0.23745 & 0.25681 & 0.21482 & 0.22504 & 0.21862 & 1.00000 & 0.28263 & 0.29389 & 0.24210 \\
$S_8$  & 0.23681 & 0.24183 & 0.39159 & 0.28263 & 0.30817 & 0.23986 & 0.35783 & 1.00000 & 0.21862 & 0.23128 \\
$S_9$  & 0.37854 & 0.34773 & 0.32052 & 0.29665 & 0.32993 & 0.28365 & 0.33151 & 0.24210 & 1.00000 & 0.37021 \\
$S_{10}$ & 0.39619  & 0.25678 & 0.26772 & 0.26683 & 0.29389 & 0.28598 & 0.20017 & 0.23128 & 0.37021 & 1.00000   \\
\hline
\end{tabular}
\end{table}

\subsection*{B2. Settings for Rainbow Swptions in Sections \ref{sst:european swaption} and \ref{sst:Bermudan swaption}}
Our formulation follows \cite{Brigo_Mercurio-2007} [Section 6.3.1] that assumes lognormal distribution of forward rates. The dynamics of forward rates $L_i(t)$ under $\mathbb{Q}^i$ are, respectively,
\begin{eqnarray*}
\label{eq:dynamics of forward rates}
& & i<j,\quad t\leq T_i: \qquad dL_j(t)=\sigma _j(t)L_j(t)\sum _{k=i+1}^{j}\frac{\rho _{kj}\delta _k \sigma _k(t)L_k(t)}{1+\delta _kL_k(t)}dt+\sigma _j(t)L_j(t)dZ_j(t)\\
& & i=j,\quad t\leq T_{i-1}: \quad dL_j(t)=\sigma _j(t)L_j(t)dZ_j(t)\\
& & i>j,\quad t\leq T_{j-1}: \quad dL_j(t)=-\sigma _j(t)L_j(t)\sum _{k=i+1}^{j}\frac{\rho _{kj}\delta _k \sigma _k(t)L_k(t)}{1+\delta _kL_k(t)}dt+\sigma _j(t)L_j(t)dZ_j(t),\\
\end{eqnarray*}
where $Z$ is a Brownian motion under measure $\mathbb{Q}^i$, $Z_i$, $Z_j$ are Brownian motions of different forward rates $L_i(t)$ whose instantaneous correlation with $L_j(t)$ is $\rho=(\rho_{ij})_{i,j=1,2,...}$. The measure associated with zero-coupon bonds maturing at time $T_i$ is denoted by $\mathbb{Q}^i$. Note that all equations in equation \eqref {eq:dynamics of forward rates} admit a unique strong solution if $\sigma_j(\cdot )$ are bounded.

In order to fully specify the forward rates dynamics in the LFM, instantaneous volatilities and correlation function have to be determined. A time-homogenous function to parameterize instantaneous volatilities and correlation is widely adopted. The term ``time-homogenous" here indicates that the function is time-dependent, and the time dependency is tied to the time left to reach maturity of the underlying swap. In our example, we apply one of the most commonly used parametric forms, namely
\begin{eqnarray*}
\label{eq:volatility function}
\sigma _i(t) &= & \psi _i\nu (T_{i-1}-t,\gamma ) \\
             & = & \psi _i[\{(T_{i-1}-t)\gamma _1+\gamma _2\}e^{-(T_{i-1}-t)\gamma _3}+\gamma _4\rbrack,
\end{eqnarray*}
where $\gamma=(\gamma _1,\gamma _2,\gamma _3,\gamma _4)$ is a parameter set, $\psi _i$ is a correction parameter that fits the volatilities more closely to market data. This function has a ``humped'' shape which can be interpreted descriptively with economic knowledge. 

For instantaneous correlation $\rho$, its parameterized form suggested in 
\cite{M_Joshi-2003} and 
\cite{R_Rebonato-2002} is given by
\begin{equation}
\label{eq:correlation function}	
\rho _{ij}=e^{-\beta |i-j|}.
\end{equation}

To calibrate parameters in instantaneous volatility and correlation, we take the market data as input
$$L_0=[L(T_0,T_0,T_1),L(T_0,T_1,T_2),...,L(T_0,T_{19},T_{20})]$$
of initial annual forward rates and the annual ATM caplet volatility
$$\sigma^{\text{caplet}}=[\sigma^{\text{caplet}}_1,...,\sigma^{\text{caplet}}_{20}]$$
where $\sigma^{\text{caplet}}_i$ stands for the volatility of annual caplet resetting at $i$-th year and paying at $(i+1)$-th year. $i=1,2,\cdots,20$.

A recursive calibration algorithm starts by initializing $\gamma_0$, $\beta_0$ by appropriate guess. With $\gamma_0$, $\beta_0$, we can estimate $\widetilde{\psi}_i$ for $i=1,...,20$ so as to match the market volatility of the co-terminal caplets by,
\begin{eqnarray*}
\sigma_i^{\text{caplet}} &= & \frac{1}{T_{i-1}}\widetilde{\psi}_i^2\int_0^\infty \nu^2(T_{i-1}-s,\gamma_0)ds \\
 & = & \frac{1}{T_{i-1}}\widetilde{\psi}_i^2\big (\sum_{s=0}^{i-1}((T_{i-1}-T_s)\gamma_{1,0}+\gamma_{2,0})e^{-(T_{i-1}-T_s)\gamma_{3,0}}+\gamma_{4,0}\big ).
\end{eqnarray*}

Given those $\widetilde{\psi}_i$'s, re-estimate $\gamma$, $\beta$ by
\begin{align} \label{calibration of volatility}
\argmin_{\gamma,\beta}|\sigma_i-\hat{\sigma}_i(\beta,\gamma;\widetilde{\psi})|^2,
\end{align}
where $\sigma_i$ are Black volatility for $i$ NC $\alpha$ swaptions, $\hat{\sigma}_i(\beta,\gamma;\widetilde{\psi})$ is the model volatility adopted in 
\cite{R_Rebonato-2002}. The corresponding formula approximates the lognormal forward LIBOR model swaption volatility by
$$\hat{\sigma}^2_i(0)=\sum_{j,k=3}^i \frac{w_j(0)w_k(0)L_j(0)L_k(0)\rho_{jk}}{S_{2,i}^2(0)}\sum_{s=0}^2 \sigma_j(T_s)\sigma_k(T_s),\enspace \enspace i\geq 3,$$
where $w_j(0)=\frac{\delta_jL(0,T_2,T_j)}{\sum_{k=3}^i\delta_kL(0,T_2,T_k)}$ and $S_{2,i}(0)$ is the ATM swap rate for $i$ NC $2$ swaptions. Substitute functional forms in formula \eqref{eq:volatility function} and \eqref{eq:correlation function} for instantaneous volatility and correlation, $\hat{\sigma}_i^2(0)$ can be expressed as a function of parameter $\gamma,\beta,\psi$.

Re-estimating $\gamma,\beta$ can be achieved by solving the minimization problem in formula \eqref{calibration of volatility} after which, re-estimate $\psi$ iteratively is carried out. The iteration procedure stops when either convergence or the maximum number of iteration is reached.

We put a constraint on the calibration of $\psi$ such that $1-0.1\leq\psi_i\leq 1+0.1$ for all $i$. This constraint requires all $\psi_i$ to be close to one so that the term structure's qualitative behavior could be captured in time. The functional form of instantaneous volatility and correlation are constructed to produce a smooth shape for the term structure of volatility at all instants, since the typical erratic behavior of piecewise-constant assumption can be improved by linear/exponential functions. Numerical results are shown in Section \ref{sst:european swaption}.

\bibliographystyle{asa}
\bibliography{VaRHD}

\begin{thebibliography}{33}
\newcommand{\enquote}[1]{``#1''}
\expandafter\ifx\csname natexlab\endcsname\relax\def\natexlab#1{#1}\fi

\bibitem[{Artzner et~al.(1999)Artzner, Delbaen, Eber, and
  Heath}]{Artzner_etal-1999-MF}
Artzner, P., Delbaen, F., Eber, J., and Heath, D. (1999), \enquote{Coherent
  measures of risk,} \textit{Mathematical Finance}, 9, 203--28.

\bibitem[{Bauer et~al.(2012)Bauer, Reuss, and Singer}]{Bauer_etal-2012-ASTIN}
Bauer, D., Reuss, A., and Singer, D. (2012), \enquote{On the calculation of the
  solvency capital requirement based on nested simulations,} \textit{Astin
  Bulletin}, 42, 453--499.

\bibitem[{Bickel et~al.(2009)Bickel, Ritov, and
  Tsybakov}]{Bickel_etal-2009-AoS}
Bickel, P., Ritov, Y., and Tsybakov, A. (2009), \enquote{Simultaneous analysis
  of Lasso and Dantzig selector,} \textit{The Annals of Statistics}, 37,
  1075--32.

\bibitem[{BIS(2013)}]{bis2013a}
BIS (2013), \enquote{Basel Committee on Banking Supervision, Revisions to the
  Basel II market risk framework,} \textit{BSBS}, 158.

\bibitem[{Black and Scholes(1973)}]{Black_Scholes-1973-JPE}
Black, F. and Scholes, M. (1973), \enquote{The pricing of options and corporate
  liabilities,} \textit{Journal of Political Economy}, 81, 637--59.

\bibitem[{Brigo and Mercurio(2007)}]{Brigo_Mercurio-2007}
Brigo, D. and Mercurio, F. (2007), \textit{Interest Rate Models, Theory and
  Practice}, Springer.

\bibitem[{Broadie et~al.(2011)Broadie, Du, and Moallemi}]{Broadie_etal-2011-MS}
Broadie, M., Du, Y., and Moallemi, C.~C. (2011), \enquote{Efficient risk
  estimation via nested sequential simulation,} \textit{Management Science},
  57, 1172--94.

\bibitem[{B\"{u}hlmann and van~de Geer(2011)}]{Buhlmann_vandeGeer-2011}
B\"{u}hlmann, P. and van~de Geer, S. (2011), \textit{Statistics for
  High-Dimensional Data: Methods, Theory and Applications}, Springer: New York.

\bibitem[{Candes and Tao(2007)}]{Candes_Tao-2007-AoS}
Candes, R. and Tao, T. (2007), \enquote{The Dantzig selector: statistical
  estimation when $p$ is much larger than $n$.} \textit{The Annals of
  Statistics}, 35, 2313--51.

\bibitem[{Chan and Wong(2015)}]{Chan_Wong-2015}
Chan, N.~H. and Wong, H.~Y. (2015), \textit{Simulation Techniques in Financial
  Risk Management, 2nd Edition}, Wiley.

\bibitem[{Chatterjee and Lahiri(2011)}]{Chatterjee_Lahiri-2011-Sankhy}
Chatterjee, A. and Lahiri, S.~N. (2011), \enquote{Strong consistency of Lasso
  estimators,} \textit{Sankhy}, 73, 55--78.

\bibitem[{Chiu et~al.(2017)Chiu, Pun, and Wong}]{Chiu_etal-2017-RA}
Chiu, M.~C., Pun, C.~S., and Wong, H.~Y. (2017), \enquote{Big data challenges
  of high-dimensional continuous-time mean-variance portfolio selection and a
  remedy,} \textit{Risk Analysis}, 38, 1532--49.

\bibitem[{Clement et~al.(2002)Clement, Lamberton, and
  Protter}]{Clement_etal-2001-FS}
Clement, E., Lamberton, D., and Protter, P. (2002), \enquote{An analysis of a
  least squares regression method for American option pricing,} \textit{Finance
  and Stochastics}, 6, 449--471.

\bibitem[{Glasserman(2003)}]{Glasserman-2003}
Glasserman, P. (2003), \textit{Monte Carlo Methods in Financial Engineering},
  Springer: New Yrok.

\bibitem[{Gordy and Juneja(2010)}]{Gordy_Juneja-2010-MS}
Gordy, M.~B. and Juneja, . (2010), \enquote{Nested simulation in portfolio risk
  measurement,} \textit{Management Science}, 56, 1833--48.

\bibitem[{Gourieroux and Jasiak(2002)}]{Gourieroux_Jasiak-2002}
Gourieroux, C. and Jasiak, J. (2002), \textit{Financial Econometrics: Problems,
  Models, and Methods}, Princeton University Press.

\bibitem[{Gramacy and Ludkovski(2015)}]{Gramacy_Ludkovski-2015-SIAMJFM}
Gramacy, R. and Ludkovski, M. (2015), \enquote{Sequential design for optimal
  stopping problems,} \textit{SIAM Journal on Financial Mathematics}, 6,
  748--75.

\bibitem[{Gurrola-Perez and Murphy(2015)}]{Gurrola-Perez_Murphy-2015}
Gurrola-Perez, P. and Murphy, D. (2015), \enquote{Filtered historical
  simulation Value-at-Risk models and their competitors,} Tech. Rep. 525, Bank
  of England Working paper.

\bibitem[{Hong et~al.(2014)Hong, Hu, and Liu}]{Hong_etal-2014-ACM}
Hong, L.~J., Hu, Z., and Liu, G. (2014), \enquote{Monte Carlo methods for
  value-at-risk and conditional value-at-risk: A review,} \textit{ACM
  Transactions on Modeling and Computer Simulation}, 24, 22/1--22/37.

\bibitem[{Johnson(1987)}]{Johnson-1987-JFQA}
Johnson, H. (1987), \enquote{Options on the maximum or the minimum of several
  assets,} \textit{Journal of Financial and Quantitative Analysis}, 22,
  277--83.

\bibitem[{Jorion(2006)}]{Jorion-2006}
Jorion, P. (2006), \textit{Value at Risk: The New Benchmark for Managing
  Financial Risk, 3rd Edition}, McGraw-Hill.

\bibitem[{Joshi(2003)}]{M_Joshi-2003}
Joshi, M. (2003), \textit{The concepts and pratice of mathematical finance},
  Cambridge University Press.

\bibitem[{Koltchinskii(2009{\natexlab{a}})}]{Koltchinskii-2009-Bernoulli}
Koltchinskii, V. (2009{\natexlab{a}}), \enquote{The Dantzig selector and
  sparsity oracle inequalities,} \textit{Bernoulli}, 15, 799--828.

\bibitem[{Koltchinskii(2009{\natexlab{b}})}]{Koltchinskii-2009-AIHP}
--- (2009{\natexlab{b}}), \enquote{Sparsity in penalized empirical risk
  minimization,} \textit{Annales de l'Institut Henri Poincar\'{e},
  Probabilit\'{e}s et Statistiques}, 45, 7--57.

\bibitem[{Kou and Peng(2016)}]{Kou_Peng-2016-OR}
Kou, S. and Peng, X. (2016), \enquote{On the measurement of economic tail
  risk,} \textit{Operations Research}, 64, 1056--72.

\bibitem[{Kou et~al.(2013)Kou, Peng, and Heyde}]{Kou_etal-2013-OR}
Kou, S.~G., Peng, X., and Heyde, C.~C. (2013), \enquote{External risk measures
  and Basel Accords,} \textit{Mathematics of Operations Research}, 38,
  393--417.

\bibitem[{Longstaff and Schwartz(2001)}]{Longstaff_Schwartz-2001-RFS}
Longstaff, F.~A. and Schwartz, E.~S. (2001), \enquote{Valuing American options
  by simulation: a simple least-squares approach,} \textit{Review of Financial
  studies}, 14, 113--47.

\bibitem[{Neveu(1975)}]{Neveu-1975}
Neveu, J. (1975), \textit{Discrete-parameter Martingales}, North Holland,
  Amsterdam.

\bibitem[{Pun and Wong(2016)}]{Pun_Wong-2016-SIAMJFM}
Pun, C. and Wong, H.~Y. (2016), \enquote{Resolution of degeneracy in Merton's
  portfolio problem,} \textit{SIAM Journal on Financial Mathematics}, 7,
  786--811.

\bibitem[{Pun and Wong(2019)}]{Pun_Wong-2019-EJOR}
Pun, C.~S. and Wong, H.~Y. (2019), \enquote{A linear programming model for
  selection of sparse high-dimensional multiperiod portfolios,}
  \textit{European Journal of Operational Research}, 273, 754--71.

\bibitem[{Rebonato(2002)}]{R_Rebonato-2002}
Rebonato, R. (2002), \textit{Modern pricing of interest rate derivatives: the
  libor martket model and beyond}, Princeton University Press.

\bibitem[{Tibshirani(1996)}]{Tibshirani-1996-JRSSB}
Tibshirani, R. (1996), \enquote{Regression shrinkage and selection via the
  lasso,} \textit{Journal of the Royal Statistical Society. Series B}, 58,
  267--88.

\bibitem[{Zhao and Yu(2006)}]{Zhao_Yu-2006-JMLR}
Zhao, P. and Yu, B. (2006), \enquote{On model selection consistency of Lasso,}
  \textit{Journal of Machine Learning Research}, 7, 2541--63.

\end{thebibliography}
\end{document}